\def\thanks#1{\protected@xdef\@thanks{\@thanks
		\protect\footnotetext{#1}}}
\renewcommand\paragraph{\@startsection{paragraph}{4}{\z@}%
	{-8\p@ \@plus -4\p@ \@minus -4\p@}%
	{-0.5em \@plus -0.22em \@minus -0.1em}%
	{\normalfont\normalsize\itshape}}
\def\@citecolor{blue}%
\def\@urlcolor{blue}%
\def\@linkcolor{RedViolet}%
\def\orcidID#1{\smash{\href{http://orcid.org/#1}{\protect\raisebox{-1.25pt}{\protect\includegraphics{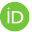}}}}}
\crefname{lstinputlisting}{program}{programs}
\Crefname{lstinputlisting}{Program}{Programs}
\crefname{prob}{problem}{problems}
\Crefname{prob}{Problem}{Problems}
\crefname{cha}{challenge}{challenges}
\Crefname{cha}{Challenge}{Challenges}
\crefname{equation}{eq.}{eqs.}
\Crefname{equation}{Eq.}{Eqs.}
\Crefname{appendix}{Appx.}{Appxs.}
\Crefname{section}{Sect.}{Sects.}
\Crefname{figure}{Fig.}{Figs.}
\newcommand{\twk}[1]{\todo[color=blue!30,size=\scriptsize,fancyline,author=Tobias]{#1}}
\newcommand{\codify}[1]{\textup{\texttt{#1}}}
\newcommand{\while}{\codify{while}\xspace}
\newcommand{\ASSIGN}[2]{{#1} \coloneqq {#2}}
\newcommand{\WHILE}[1]{\codify{while}\:(#1)\:\{}
\newcommand{\WHILEDO}[2]{\codify{while}\:(#1)\:\{#2\}}
\newcommand{\pgcl}{\codify{pGCL}\xspace}
\newcommand{\pplname}{\codify{ReDiP}\xspace}
\newcommand{\pskip}{\codify{skip}\xspace}
\newcommand{\pcase}{\codify{case}\xspace}
\newcommand{\pbreak}{\codify{break}\xspace}
\newcommand{\pif}{\codify{if}}
\newcommand{\ITE}[3]{\codify{if}\:(#1)\:\{#2\} \: \codify{else} \: \{#3\}}
\newcommand{\IF}[1]{\codify{if}\:(#1)\:}
\newcommand{\ELSE}{\codify{else}\:}
\newcommand{\PCHOICE}[3]{\{\,#1\,\} \: [#2] \: \{\,#3\,\}}
\newcommand{\compose}[2]{{#1}{\fatsemi} \: {#2}}
\newcommand{\pelse}{\codify{else}}
\newcommand{\sem}[1]{\ensuremath{\left\llbracket #1 \right\rrbracket}}
\newcommand{\abs}[1]{\ensuremath{\vert{#1}\vert}}
\newcommand{\restrict}[2]{\ensuremath{#1_{#2}}}
\newcommand{\blue}[1]{\textcolor{RoyalBlue}{#1}}
\newcommand{\orange}[1]{\textcolor{RedOrange}{#1}}
\definecolor{webgreen}{rgb}{0,.5,0}
\newcommand{\green}[1]{\textcolor{webgreen}{#1}}
\newcommand{\annocolor}[1]{\textcolor{blue!50!green}{#1}}
\newcommand{\commentcolor}[1]{\textcolor{gray!90}{#1}}
\newcommand{\annotate}[1]{{\annocolor{\!\!{\fatslash}\!\!{\fatslash}~~\vphantom{G'} {#1}}}}
\newcommand{\comment}[1]{{\commentcolor{\texttt{/*~#1~*/}}}}
\newcommand{\bvec}[1]{\ensuremath{\mathbf{#1}}}
\renewcommand{\vec}[1]{\bvec{#1}}
\newcommand{\tto}{~{}\to{}~}
\newcommand{\wpsymbol}{\textsf{wp}}
\newcommand{\ccoloneqq}{~{}\coloneqq{}~}
\newcommand{\pgf}{\textnormal{\textsf{PGF}}\xspace}
\newcommand{\N}{\ensuremath{\mathbb{N}}\xspace}
\newcommand{\R}{\ensuremath{\mathbb{R}}\xspace}
\newcommand{\lfp}{\ensuremath{\textnormal{\textsf{lfp}}~}\xspace}
\newcommand{\eeq}{~{}={}~}
\newcommand{\ssqsubseteq}{~{}\sqsubseteq{}~}
\newcommand{\ssqsupseteq}{~{}\sqsupseteq{}~}
\newcommand{\pplus}{~{}+{}~}
\newcommand{\iin}{~{}\in{}~}
\newcommand{\mmapsto}{~{}\mapsto{}~}
\newcommand{\qiff}{\quad\textnormal{iff}\quad}
\newcommand{\qand}{\quad\textnormal{and}\quad}
\newcommand{\qqand}{\qquad\textnormal{and}\qquad}
\newcommand{\qimplies}{\quad\textnormal{implies}\quad}
\newcommand{\qqimplies}{\qquad\textnormal{implies}\qquad}
\newcommand{\expected}[1]{\mathit{E}[#1]}
\newcommand{\variance}[1]{\mathit{Var}[#1]}
\newcommand{\fsum}{\sum} 
\newcommand{\fpsring}[1]{\R[[{#1}]]}
\newcommand{\subsFPSVarFor}[3]{{#1}[{#2}/{#3}]}
\newcommand{\toolname}{\textsc{Prodigy}} 
\newcommand{\progvar}[1]{\mathtt{#1}}
\newcommand{\asgn}[2]{{#1} \,:=\, {#2}}
\newcommand{\incrasgn}[2]{{#1} ~ {+}{=} ~ {#2}}
\newcommand{\decr}[1]{#1\mathtt{--}}
\newcommand{\leftshift}[1]{\mathit{shift}^{\leftarrow}(#1)}
\newcommand{\iid}[2]{\mathtt{iid}({#1}, \, {#2})}
\newcommand{\dirac}[1]{\mathtt{dirac}(#1)}
\newcommand{\uniform}[1]{\mathtt{unif}(#1)}
\newcommand{\binomial}[2]{\mathtt{binomial}({#1},\,{#2})}
\newcommand{\nbinomial}[2]{\mathtt{nbinomial}({#1},\,{#2})}
\newcommand{\bernoulli}[1]{\mathtt{bernoulli}({#1})}
\newcommand{\geometric}[1]{\mathtt{geometric}({#1})}
\newcommand{\catalan}[1]{\mathtt{catalan}({#1})}
\newcommand{\pgfdom}{\pgf}
\newcommand{\sopdom}{\mathsf{SOP}}
\newcommand{\vars}{\mathit{Vars}}
\newcommand{\chain}{\Gamma}
\newcommand{\guard}{\varphi}
\newcommand{\charfun}[2]{\Psi_{{#1}, {#2}}}
\def\triangleforqed{\hbox{$\lhd$}}
\DeclareRobustCommand{\qedTT}{%
	\ifmmode
	\eqno \def\@badmath{$$}
	\let\eqno\relax \let\leqno\relax \let\veqno\relax
	\hbox{\triangleforqed}%
	\else
	\leavevmode\unskip\penalty9999 \hbox{}\nobreak\hfill
	\quad\hbox{\triangleforqed}%
	\fi
}
\newif\ifcameraready
\newif\ifblindreview
\begin{document}
\title{
    Does a Program Yield the Right Distribution?%
    \ifblindreview\else\thanks{\setlength{\leftskip}{0em}%
    	This research was funded by the ERC Advanced Project FRAPPANT under grant No.\ 787914, by the 
    	EU's Horizon 2020 research and innovation programme under the Marie Sk\l{}odowska-Curie grant 
    	No.~101008233, 
    	and by the DFG RTG 2236 UnRAVeL.}
    \fi
}
\subtitle{Verifying Probabilistic Programs via Generating Functions}

\titlerunning{Does a Program Yield the Right Distribution?}
%

\authorrunning{%
	\ifblindreview%
	\else%
	M.\ Chen, J.-P.\ Katoen, L.\ Klinkenberg, T.\ Winkler
	\fi
}

\ifblindreview%
\author{}
\institute{}
\else%
\author{Mingshuai Chen$^{\text{(\Letter)}}$\orcidID{0000-0001-9663-7441} \and
Joost-Pieter Katoen$^{\text{(\Letter)}}$\orcidID{0000-0002-6143-1926} \and\\
Lutz Klinkenberg$^{\text{(\Letter)}}$\orcidID{0000-0002-3812-0572} \and
Tobias Winkler$^{\text{(\Letter)}}$\orcidID{0000-0003-1084-6408}}
\institute{RWTH Aachen University, Aachen, Germany \\
\email{\{chenms,katoen,lutz.klinkenberg,tobias.winkler\}@cs.rwth-aachen.de}}
\fi

\maketitle              
%

\setlength{\floatsep}{.5\baselineskip}
\setlength{\textfloatsep}{.8\baselineskip}
\setlength{\intextsep}{.7\baselineskip}

\begin{abstract}
    We study discrete probabilistic programs with potentially unbounded looping behaviors over an infinite state space.
    We present, to the best of our knowledge, \emph{the first decidability result for the problem of determining whether such a program generates exactly a specified distribution over its outputs} (provided the program terminates almost-surely). The class of distributions that can be specified in our formalism consists of standard distributions (geometric, uniform, etc.) and finite convolutions thereof. Our method relies on representing these (possibly infinite-support) distributions as \emph{probability generating functions} which admit effective arithmetic operations.
    We have automated our techniques in a tool called $\toolname$, which supports automatic invariance checking, compositional reasoning of nested loops, and efficient queries 
    to the output distribution, as demonstrated by experiments.
\keywords{Probabilistic programs \and Quantitative verification \and Program equivalence \and Denotational semantics \and Generating functions}
\end{abstract}
\section{Introduction}\label{sec:intro}

Probabilistic programs~\cite{DBLP:journals/jcss/Kozen81,ACM:conf/fose/Gordon14,DBLP:journals/corr/abs-1809-10756} augment deterministic programs with stochastic behaviors, e.g., random sampling, probabilistic choice, and conditioning (via posterior observations).
Probabilistic programs have undergone a recent surge of interest due to prominent applications in a wide range of domains: they steer autonomous robots and self-driving cars~\cite{agentmodels,DBLP:conf/mfi/ShamsiFGN20}, are key to describe security~\cite{DBLP:journals/toplas/BartheKOB13} and quantum~\cite{DBLP:journals/toplas/Ying11} mechanisms, intrinsically code up randomized algorithms for solving NP-hard or even deterministically unsolvable problems (in, e.g., distributed computing~\cite{DBLP:journals/csur/Schneider93,DBLP:journals/jal/AspnesH90}), and are rapidly encroaching on AI as well as approximate computing~\cite{DBLP:journals/cacm/CarbinMR16}. See~\cite{barthe_katoen_silva_2020} for recent advancements in probabilistic programming.

The crux of probabilistic programming, \`{a} la Hicks' interpretation~\cite{Hicks-blog2014}, is to \emph{treat normal-looking programs as if they were probability distributions}. A random-number generator, for instance, is a probabilistic program that produces a uniform distribution across numbers from a range of interest.
Such a lift from deterministic program states to possibly infinite-support distributions (over states) renders the verification problem of probabilistic programs notoriously hard~\cite{DBLP:journals/acta/KaminskiKM19}.
In particular, reasoning about probabilistic loops often amounts to computing quantitative fixed-points which are highly intractable in practice.
As a consequence, existing techniques are mostly concerned with approximations, i.e., they strive for verifying or obtaining upper and/or lower bounds on various quantities like assertion-violation probabilities~\cite{DBLP:conf/pldi/WangS0CG21}, preexpectations~\cite{DBLP:conf/cav/BatzCKKMS20,DBLP:journals/pacmpl/HarkKGK20}, moments~\cite{DBLP:conf/pldi/Wang0R21}, expected runtimes~\cite{DBLP:journals/jacm/KaminskiKMO18}, and concentrations~\cite{DBLP:conf/cav/ChakarovS13,DBLP:conf/cav/ChatterjeeFG16}, which reveal only partial information about the probability distribution carried by the program.

In this paper, we address the problem of \emph{how to determine whether a (possibly infinite-state) probabilistic program yields exactly the desired (possibly infinite-support) distribution under all possible inputs}.
We highlight two scenarios where encoding the \emph{exact} distribution -- other than (bounds on) the above-mentioned quantities -- is of particular interest:
\begin{enumerate*}[label=(\Roman*)]
	\item In many safety- and/or security-critical domains, e.g., cryptography, a slightly perturbed distribution (while many of its probabilistic quantities remain unchanged) may lead to significant attack vulnerabilities or even complete compromise of the cryptographic system, see, e.g., Bleichenbacher's biased-nonces attack~\cite[Sect.~5.10]{Heninger_2021} against the probabilistic Digital Signature Algorithm. Therefore, the system designer has to impose a complete specification of the anticipated distribution produced by the probabilistic component.
	\item In the context of quantitative verification, the user may be interested in multiple properties (of different types, e.g., the aforementioned quantities) of the output distribution carried by a probabilistic program.
	In absence of the exact distribution, multiple analysis techniques -- tailored to different types of properties -- have to be applied in order to answer all queries from the user.
\end{enumerate*}
We further motivate our problem using a concrete example as follows.

\begin{example}[Photorealistic Rendering~\textnormal{\cite{KajiyaJ86}}]
    \label{exmp:russian-roulette}
    Monte Carlo integration algorithms form a well-known class of probabilistic programs which approximate complex integral expressions by sampling~\cite{hammersley2013monte}.
    One of its particular use-cases is the photorealistic rendering of virtual scenes by a technique called \emph{Monte Carlo path tracing} (MCPT)~\cite{KajiyaJ86}.
    
    MCPT works as follows:
    For every pixel of the output image, it shoots $n$ sample rays into the scene and models the light transport behavior to approximate the incoming light at that particular point.
    Starting from a certain pixel position, MCPT randomly chooses a direction, traces it until a scene object is hit, and then proceeds by either 
    \begin{enumerate*}[label=(\roman*)]
    	\item terminating the tracing and evaluating the overall ray, or
    	\item continuing the tracing by computing a new direction.
    \end{enumerate*}
    In the physical world, the light ray may be reflected arbitrarily often and thus stopping the tracing after a certain 
    amount of bounces would introduce a bias in the integral estimation.
    As a remedy, the decision when to stop the tracing is made in a \emph{Russian roulette} manner by flipping a coin\footnote{The bias of the coin depends on the material's \emph{reflectivity}: a reflecting material such as a mirror requires more light bounces than an absorptive one, e.g., a black surface.} at each intersection point~\cite{SIGGRAPH:ArvoJ90}.
    
    The program in \Cref{fig:russianroulette} is an implementation of a simplified MCPT path generator. 
    The cumulative length of all $\progvar{n}$ rays is stored in the (random) variable $\progvar{c}$, which is directly proportional to MCPT's expected runtime.
    The implementation is designed in a way that \emph{$\progvar{c}$ induces a distribution as the sum of $\progvar{n}$ independent and identically distributed (i.i.d.) geometric random variables} such that the resulting integral estimation is unbiased. In our framework, we view such an exact output distribution of $\progvar{c}$ as a \emph{specification} and verify -- fully automatically -- that the implementation in \Cref{fig:russianroulette} with nested loops indeed satisfies this specification. 
    \qedTT
\end{example}

\begin{figure}[t]
    \begin{align*}
    & \WHILE{\progvar{n} > 0} ~ \comment{generate n samples}\\
    & \qquad \ASSIGN{\progvar{running}}{1}\fatsemi\\
    & \qquad \WHILE{\progvar{running} = 1} ~ \comment{generate a light ray}\\
    &\qquad \qquad \PCHOICE{\asgn{\progvar{running}}{0}~\comment{absorb}}{\nicefrac 1 2}{\asgn{\progvar{c}}{\progvar{c}+1}~\comment{reflect}} ~ \} \fatsemi \\
    & \qquad \ASSIGN{\progvar{n}}{\progvar{n} - 1}~\}
    \end{align*}
    \caption{
        Monte Carlo path tracing in a scene with constant reflectivity $\nicefrac 1 2$.
    }
    \label{fig:russianroulette}
\end{figure}

\paragraph*{Approach.}
Given a probabilistic loop $L = \WHILEDO{\guard}{P}$ with guard $\guard$ and loop-free body $P$, we aim to determine whether $L$ agrees with a specification $S$:
\begin{align}\label[prob]{prob:agreement}
	L = \WHILEDO{\guard}{P} \quad \stackrel{?}{\sim} \quad S~, \tag{$\star$}
\end{align}%
namely, whether $L$ yields -- upon termination -- exactly the same distribution as encoded by $S$ under all possible program inputs. This problem is non-trivial:
\begin{enumerate*}[label=(C\arabic*)]
	\item\label[cha]{challenge-1}%
	$L$ may induce an infinite state space and infinite-support distributions, thus making techniques like probabilistic bounded model checking~\cite{DBLP:conf/atva/0001DKKW16} insufficient for verifying the property by means of unfolding the loop $L$.
	\item\label[cha]{challenge-2}%
	There is, to the best of our knowledge, a lack of non-trivial characterizations of $L$ and $S$ such that \cref{prob:agreement} admits a decidability result.
	\item\label[cha]{challenge-3}%
	To decide \cref{prob:agreement} -- even for a loop-free program $L$ -- one has to account for infinitely or even uncountably many inputs such that $L$ yields the same distribution as encoded by $S$ when being deployed in all possible contexts.
\end{enumerate*}

We address \cref{challenge-1} by exploiting the forward denotational semantics of probabilistic programs based on \emph{probability generating function} (PGF) representations of (sub-)distributions~\cite{lopstr/klinkenberg20}, which benefits crucially from closed-form (i.e., finite) PGF representations of possibly infinite-support distributions.
A probabilistic program $L$ hence acts as a transformer $\sem{L}(\cdot)$ that transforms an input PGF $g$ into an output PGF $\sem{L}(g)$ (as an instantiation of Kozen's transformer semantics~\cite{DBLP:journals/jcss/Kozen81}).
In particular, we \emph{interpret the specification $S$ as a loop-free probabilistic program $I$}.
Such an identification of specifications with programs has two important advantages:
\begin{enumerate*}[label=(\roman*)]
	\item we only need a single language to encode programs as well as specifications, and
	\item it enables compositional reasoning in a straightforward manner, in particular, the treatment of nested loops.
\end{enumerate*}
The problem of checking $L \sim S$ then boils down to checking whether $L$ and $I$ transform every possible input PGF into the same output PGF:
\begin{align}\label[prob]{prob:loop-equiv}
	\forall g \in \pgfdom\colon\quad \llbracket\underbrace{\WHILEDO{\varphi}{P}}_{L}\rrbracket (g) \quad \stackrel{?}{=} \quad \sem{I}(g)~. \tag{$\dagger$}
\end{align}%
As $I$ is loop free, \cref{prob:loop-equiv} can be reduced 
to checking the equivalence of two \emph{loop-free} probabilistic programs (cf.\ \cref{thm:fpToEquiv}):
\begin{align}\label[prob]{prob:loop-free-equiv}
	\forall g \in \pgfdom\colon\quad \sem{\ITE{\guard}{P \fatsemi I}{\pskip}}(g) \quad \stackrel{?}{=} \quad \sem{I}(g)~. \tag{$\ddagger$}
\end{align}%
Now \cref{challenge-3} applies since the universal quantification in \cref{prob:loop-free-equiv} requires to determine the equivalence against infinitely many -- possibly infinite-support -- distributions over program states. We facilitate such an equivalence checking by developing a \emph{second-order PGF} (SOP) semantics for probabilistic programs, which naturally extends the PGF semantics while allowing to reason about infinitely many PGF transformations simultaneously (see \cref{thm:equivCharac}).

Finally, to obtain a decidability result (cf.\ \cref{challenge-2}), we develop the \emph{rectangular discrete probabilistic programming language} ($\pplname$) -- a variant of $\pgcl$~\cite{McIver04} with syntactic restrictions to rectangular guards -- featuring various nice properties, e.g., they inherently support i.i.d.\ sampling, and in particular, they \emph{preserve closed-form PGF} when acting as PGF transformers. We show that \emph{\cref{prob:loop-free-equiv} is decidable for $\pplname$ programs $P$ and $I$ if all the distribution statements therein have rational closed-form PGF}
(cf.\ \cref{thm:loop-free-decidability}).
As a consequence, \emph{\cref{prob:loop-equiv} and thereby \cref{prob:agreement} of checking $L \sim S$ are decidable if $L$ terminates almost-surely on all possible inputs $g$} (cf.\ \cref{cor:loop-equiv-decidability}).

\paragraph*{Demonstration.}
We have automated our techniques in a tool called $\toolname$. As an example, $\toolname$ was able to verify, fully automatically in 25 milliseconds, that the implementation of the MCPT path generator with nested loops (in \Cref{fig:russianroulette}) is indeed equivalent to the loop-free program
\[
\incrasgn{\progvar{c}}{\iid{\geometric{\nicefrac{1}{2}}}{\progvar{n}}} \fatsemi \, \asgn{\progvar{n}}{0}
\]%
which encodes the specification that, upon termination, $\progvar{c}$ is distributed as the sum of $\progvar{n}$ i.i.d.\ geometric random variables.
%
%
With such an output distribution, multiple queries can be efficiently answered by applying standard PGF operations.
For example, the expected value and variance of the runtime are 
$\expected{\progvar{c}} = n$ and $\variance{\progvar{c}} = 2n$, respectively (assuming $\progvar{c}=0$ initially).

\paragraph{\bf Contributions.}
The main contributions of this paper are:
\begin{itemize}
	\item The probabilistic programming language $\pplname$ and its forward denotational semantics as PGF transformers.
	We show that loop-free $\pplname$ programs preserve closed-form PGF.
	\item The notion of SOP that enables reasoning about infinitely many PGF transformations simultaneously.
	We show that the problem of determining whether an infinite-state $\pplname$ loop generates -- upon termination -- exactly a specified distribution is decidable.
	\item The software tool $\toolname$ which supports automatic invariance checking on the source-code level; 
	it allows reasoning about nested $\pplname$ loops in a compositional manner, and supports efficient queries on various quantities including assertion-violation probabilities, expected values, (high-order) moments, precise tail probabilities, as well as concentration bounds.
\end{itemize}

\paragraph{Organization.}
We introduce generating functions in \cref{sec:prelim} and define the $\pplname$ language in \cref{sec:ppl}. \cref{sec:pgfsem} presents the PGF semantics. \cref{sec:loops} establishes our decidability result in reasoning about $\pplname$ loops, with case studies in \cref{sec:case-studies}. After discussing related work in \cref{sec:related-work}, we conclude the paper in \cref{sec:conclusion}. Further details, e.g, proofs and additional examples, can be found in
\ifcameraready
the full version~\cite{DBLP:journals/corr/abs-2205-01449}.
\else
the appendices.
\fi
\section{Generating Functions}
\label{sec:prelim}

\begin{quote}
    \emph{``A generating function is a clothesline on which we hang up a sequence of numbers for display.''}
    \raggedleft --- H.\ S.\ Wilf, Generatingfunctionology~\cite{generatingfunctionology}
\end{quote}
The method of \emph{generating functions} (GF) is a vital tool in many areas of mathematics.
This includes in particular enumerative combinatorics~\cite{DBLP:books/daglib/0023751,generatingfunctionology} and -- most relevant for this paper -- probability theory~\cite{JKK1993}.
In the latter, the sequences \enquote{hanging on the clotheslines} happen to describe probability distributions over the non-negative integers $\N$, e.g., $\nicefrac{1}{2}, \nicefrac{1}{4}, \nicefrac{1}{8}, \ldots$ (aka, the geometric distribution).

The most common way to relate an (infinite) \emph{sequence} of numbers to a generating \emph{function} relies on the familiar Taylor series expansion:
Given a sequence, for example $\nicefrac{1}{2}, \nicefrac{1}{4}, \nicefrac{1}{8}, \ldots$, find a function $x \mapsto f(x)$ whose Taylor series around $x=0$ uses the numbers in the sequence as coefficients.
In our example, 
\begin{equation}
    \label{eq:gfIntro}
    \frac{1}{2 - x} \eeq \frac{1}{2} + \frac{1}{4} x + \frac{1}{8} x^2 + \frac{1}{16} x^3 + \frac{1}{32} x^4 + \ldots ~,
\end{equation}
for all $|x| < 2$, hence the \enquote{clothesline} used for hanging up $\nicefrac{1}{2}, \nicefrac{1}{4}, \nicefrac{1}{8}, \ldots$ is the function $1/(2-x)$.
Note that the GF is a -- from a purely syntactical point of view -- \emph{finite} object while the sequence it represents is \emph{infinite}.
A key strength of this technique is that many meaningful operations on infinite series can be performed by manipulating an encoding GF (see \Cref{fig:cheatSheet} for an overview and examples).
In other words, GF provide an \emph{interface} to perform operations on and extract information from infinite sequences in an effective manner.

\subsection{The Ring of Formal Power Series}

Towards our goal of encoding distributions over \emph{program states} (valuations of finitely many integer variables) as generating functions, we need to consider \emph{multivariate} GF, i.e., GF with more than one variable.
Such functions represent multidimensional sequences, or \emph{arrays}.
%
Since multidimensional Taylor series quickly become unhandy, we will follow a more \emph{algebraic} approach that is also advocated in~\cite{generatingfunctionology}:
We treat sequences and arrays as elements from an algebraic structure: the \emph{ring of Formal Power Series} (FPS).
Recall that a (commutative) \emph{ring} $(A,+,\cdot,0,1)$ consists of a non-empty carrier set $A$, associative and commutative binary operations \enquote{$+$} (addition) and \enquote{$\cdot$} (multiplication) such that multiplication distributes over addition, and neutral elements $0$ and $1$ w.r.t.\ addition and multiplication, respectively.
Further, every $a \in A$ has an additive inverse $-a \in A$.
Multiplicative inverses $a^{-1} = 1/a$ need not always exist.
Let $ k \in \N = \{0,1,\ldots\}$ be fixed in the remainder.

\begin{table}[t]
	\centering
	\caption{
		GF cheat sheet.
		$f, g$ and $X, Y$ are arbitrary GF and indeterminates, resp.
	}
	\label{fig:cheatSheet}
	\begin{adjustbox}{max width=1.0\textwidth}
		\renewcommand{\arraystretch}{1.6}
		\setlength{\tabcolsep}{5pt}
		\begin{tabular}{l l l}
			\toprule
			\textit{Operation} & \textit{Effect}& \textit{(Running) Example}\\ \midrule
			$f^{-1} = 1/f$ & \makecell[l]{Multiplicative inverse of $f$ \\[-2pt] (if it exists)} & \makecell[l]{$\frac{1}{1 - XY} = 1 + XY + X^2Y^2 + \ldots$ \\[-1pt] because $(1- XY)(1 + XY + X^2Y^2 + \ldots) = 1$} \\ 
			$fX$ & Shift in dimension $X$ & $\frac{X}{1 - XY} = X + X^2Y + X^3Y^2 + \ldots$ \\ 
			$\subsFPSVarFor{f}{X}{0}$ & Drop terms containing $X$ & $\frac{1}{1 - 0Y} = 1$\\ 
			$\subsFPSVarFor{f}{X}{1}$ & Projection\footnote{%
				Projections are not always well-defined, e.g., $\subsFPSVarFor{\frac{1}{1-X+Y}}{X}{1} = \frac{1}{Y}$ is ill-defined because $Y$ is not invertible.
				However, in all situations where we use projection it will be well-defined; in particular, projection is well-defined for PGF.
			} on $Y$ & $\frac{1}{1 - 1Y} = 1 + Y + Y^2 + \ldots$ \\ 
			$fg$ & \makecell[l]{Discrete convolution \\[-2pt] (or Cauchy product)}& $\frac{1}{(1 - XY)^2} = 1 + 2XY + 3X^2Y^2 + \ldots$ \\ 
			$\partial_X f$ & Formal derivative in $X$ & $\partial_X \frac{1}{1 - XY} = \frac{Y}{(1-XY)^2} = Y +2XY^2 + 3X^2Y^3 + \ldots$ \\ 
			$f + g$ & Coefficient-wise sum & $\frac{1}{1 - XY} {+} \frac{1}{(1 - XY)^2} = \frac{2-XY}{(1-XY)^2} = 2 {+} 3XY {+} 4 X^2Y^2 {+} \ldots$ \\ 
			$a f$ & Coefficient-wise scaling & $\frac{7}{(1 - XY)^2} = 7 + 14XY + 21 X^2Y^2 + \ldots$ \\
			%
			%
			\bottomrule
		\end{tabular}
	\end{adjustbox}%
\end{table}

\begin{definition}[The Ring of FPS]
    \label{def:fps}
    A $k$-dimensional FPS is a $k$-dim.\ array $f \colon \N^k \to \R$.
    We denote FPS as \emph{formal sums} as follows:
    Let $\vec{X} {=} (X_1,\ldots, X_k)$ be an ordered vector of symbols, called \emph{indeterminates}.
    The FPS $f$ is written as
    \[
        f \eeq \fsum\nolimits_{\sigma \in \N^k} f(\sigma) \vec{X}^\sigma
    \]
    where $\vec{X}^\sigma$ is the \emph{monomial} $X_1^{\sigma_1} X_2^{\sigma_2} \cdots X_k^{\sigma_k}$.    
    The \emph{ring of FPS} is denoted $\fpsring{\vec{X}}$ where the operations are defined as follows:
    For all $f,g \in \fpsring{\vec{X}}$ and $\sigma \in \N^k$,
    $(f + g)(\sigma) = f(\sigma) + g(\sigma)$, and $(f \cdot g)(\sigma) = \sum_{\sigma_1+\sigma_2=\sigma}f(\sigma_1)g(\sigma_2)$.
    
\end{definition}
The multiplication $f \cdot g$ is the usual \emph{Cauchy product} of power series (aka discrete convolution); it is well defined because for all $\sigma \in \N^k$ there are just \emph{finitely} many $\sigma_1 + \sigma_2 = \sigma$ in $\N^k$.
We write $fg$ instead of $f \cdot g$.

The formal sum notation is standard in the literature and often useful because the arithmetic FPS operations are very similar to how one would do calculations with \enquote{real} sums.
We stress that the indeterminates $\vec{X}$ are merely \emph{labels} for the $k$ dimensions of $f$ and do not have any other particular meaning.
In the context of this paper, however, it is natural to identify the indeterminates with the program variables (e.g. indeterminate $X$ refers to variable $\progvar{x}$, see \Cref{sec:ppl}).

\Cref{eq:gfIntro} can be interpreted as follows in the ring of FPS:
The \enquote{sequences} $2 - 1X + 0X^2 + \ldots$ and $\nicefrac{1}{2} + \nicefrac{1}{4}X + \nicefrac{1}{8}X^2 + \ldots$ are (multiplicative) \emph{inverse} elements to each other in $\fpsring{X}$, i.e., their product is $1$.
More generally, we say that an FPS $f$ is \emph{rational} if $f = gh^{-1} = g/h$ where $g$ and $h$ are polynomials, i.e., they have at most finitely many non-zero coefficients;
and we call such a representation a \emph{rational closed form}.

\ifcameraready
A more extensive introduction to FPS can be found in~\cite[Appx.~D]{DBLP:journals/corr/abs-2205-01449}.
\else
We include a more extensive self-contained introduction to FPS in \Cref{sec:fpsbackground}.
\fi

\subsection{Probability Generating Functions}

We are especially interested in GF that describe probability distributions.

\begin{definition}[PGF]
   \label{def:pgfmaintext}
   A $k$-dimensional FPS $g$ is a \emph{probability generating function} (PGF) if \emph{(i)} for all $\sigma \in \N^k$ we have $g(\sigma) \geq 0$, and \emph{(ii)} $\sum_{\sigma \in \N^k} g(\sigma) \leq 1$.
\end{definition}
For example, \eqref{eq:gfIntro} is the PGF of a $\nicefrac{1}{2}$-geometric distribution.
The PGF of other standard distributions are given in \Cref{fig:distExp} further below.
Note that \Cref{def:pgfmaintext} also includes \emph{sub-PGF} where the sum in (ii) is strictly less than $1$.


\section{$\pplname$: A Probabilistic Programming Language}
\label{sec:ppl}

This section presents our 
\emph{Rectangular Discrete Probabilistic Programming Language}, or $\pplname$ for short.
The word \enquote{rectangular} refers to a restriction we impose on the guards of conditionals and loops, see \Cref{sec:syntax}.
$\pplname$ is a variant of $\pgcl$~\cite{McIver04} with some extra syntax but also some syntactic restrictions.

\subsection{Program States and Variables}
\label{sec:statesAndVars}

Every $\pplname$-program $P$ operates on a finite set of $\N$-valued \emph{program variables} $\vars(P) =  \{\progvar{x}_1,\ldots,\progvar{x}_k\}$.
We do not consider negative or non-integer variables.
A \emph{program state} of $P$ is thus a mapping $\sigma \colon \vars(P) \to \N$.
As explained in \Cref{sec:intro}, the key idea is to represent distributions over such program states as PGF.
Consequently, we identify a single program state $\sigma$ with the \emph{monomial} $\vec{X}^\sigma = X_1^{\sigma(\progvar{x}_1)} \cdots X_k^{\sigma(\progvar{x}_k)}$ where $X_1,\ldots,X_k$ are indeterminates representing the program variables $\progvar{x}_1,\ldots,\progvar{x}_k$.
We will stick to this notation: throughout the whole paper, we typeset program variables as $\progvar{x}$ and the corresponding FPS indeterminate as $X$.
The initial program state on which a given $\pplname$-program is supposed to operate must always be stated explicitly.

\subsection{Syntax of $\pplname$}
\label{sec:syntax}

The syntax of $\pplname$ is defined inductively, see the leftmost column of \Cref{fig:synsem}.
Here, $\progvar{x}$ and $\progvar{y}$ are program variables, $n \in \N$ is a constant, $D$ is a \emph{distribution expression} (see \Cref{fig:distExp}), and $P_1, P_2$ are $\pplname$-programs.
The general idea of $\pplname$ is to provide a minimal core language to keep the theory simple.
Many other common language constructs such as linear arithmetic updates $\asgn{\progvar{x}}{2\progvar{y} + 3}$ are expressible in this core language. See~\ifcameraready\cite[Appx.~A]{DBLP:journals/corr/abs-2205-01449} \else\Cref{sec:syntsugar} \fi for a complete specification.

\newcommand{\synsemtab}{\renewcommand{\arraystretch}{1.6}
    \setlength{\tabcolsep}{5pt}
    \begin{adjustbox}{max width=\textwidth}
        \begin{tabular}{l l l}
            \toprule
            \textit{$\pplname$-program $P$} & \textit{Semantics $\sem{P}(g)$ -- see \cref{sec:explainTransformer}} & \textit{Description} \\
            \midrule
            $\asgn{\progvar{\blue{x}}}{\orange{n}}$& $ \subsFPSVarFor{g}{\blue{X}}{1} \blue{X}^{\orange{n}}$ & Assign const.\ $\orange{n} \in \N$ to var.\ $\progvar{\blue{x}}$\\
            $\decr{\progvar{\blue{x}}}$& $ (g - \subsFPSVarFor{g}{\blue{X}}{0}) \blue{X}^{- 1} \pplus \subsFPSVarFor{g}{\blue{X}}{0}$ & Decr.\ $\progvar{\blue{x}}$ (\enquote{monus} semantics)\\
            $\incrasgn{\progvar{\blue{x}}}{\iid{D}{\progvar{\green{y}}}}$ & $\subsFPSVarFor{g}{\green{Y}}{\green{Y} \subsFPSVarFor{\sem{D}}{T}{\blue{X}}}$ & \makecell[l]{Incr.\ $\progvar{\blue{x}}$ by the sum of $\progvar{\green{y}}$ i.i.d. \\ samples from $D$ -- see \cref{sec:iid}} \\[5pt] 
            \makecell[l]{$\IF{\progvar{\blue{x}} < \orange{n}} \{P_1\}$ \\[0pt] $\ELSE \{P_2\}$} & \makecell[l]{$\sem{P_1} (g_{\blue{x}<\orange{n}}) \pplus \sem{P_2} \left(g - g_{\blue{x}<\orange{n}}\right)$ ~, where \\[0pt] $g_{\blue{x}<n} \eeq \sum_{i = 0}^{\orange{n}-1} \frac{1}{i!} \subsFPSVarFor{(\partial_{\blue{X}}^i g)}{\blue{X}}{0} \blue{X}^i$} & Conditional branching \\[5pt]
            $\compose{P_1}{P_2}$ & $\sem{P_2}(\sem{P_1}(g))$ & Sequential composition \\
            \midrule
            $\WHILEDO{\progvar{\blue{x}} < \orange{n}}{P_1}$& \makecell[l]{$\left[\lfp \charfun{\progvar{\blue{x}}<\orange{n}}{P_1}\right] (g)$ ~, where \\[0pt] $\charfun{\progvar{\blue{x}}<\orange{n}}{P_1}(\psi) = \lambda f .~ (f {-} \restrict{f}{\progvar{\blue{x}}<\orange{n}}) {+} \psi(\sem{P_1}(\restrict{f}{\progvar{\blue{x}}<\orange{n}}))$} & \makecell[l]{Loop defined as fixed point} \\
            \bottomrule
        \end{tabular}
\end{adjustbox}}

\begin{table}[t]
    \centering
    \caption{
    	Syntax and semantics of $\pplname$.
    	$g$ is the input PGF.
    }\label{fig:synsem}
    \synsemtab
\end{table}
\begin{table}[t]
    \setlength{\tabcolsep}{5pt}
    \caption{
    	A non-exhaustive list of common discrete distributions with rational PGF.
    	The parameters $p$, $n$, and $\lambda$ are a probability, a natural, and a non-negative real number, respectively.
    	$T$ is a reserved placeholder indeterminate.
    }
    \label{fig:distExp}
    \begin{adjustbox}{max width=\textwidth}
        \begin{tabular}{l  l l}
            \toprule
            $D$ & $\sem{D}$ & \emph{Description} \\
            \midrule
            $\dirac{n}$ & $T^n$ & Point mass \\[.5em]
            $\bernoulli{p}$ & $1-p + pT$ & Bernoulli distribution (coin flip)\\[.5em]
            $\uniform{n}$ & $ (1 - T^{n}) \,/\, n(1-T)$ & Discrete uniform distribution on $\{0,\ldots,n{-}1\}$\\[.5em]
            $\geometric{p}$ & $ (1-p) \,/\, (1 - pT)$ & Geometric distribution (no.\ trials until first success)\\[.5em]
            $\binomial{p}{n}$ & $(1-p + pT)^n$ & Binomial distribution (successes of $n$ yes-no experiments)\\[.5em]
            $\nbinomial{p}{n}$ & $(1-p)^n \,/\, (1 - pT)^n$ & Negative binomial distribution \\
            %
            \bottomrule
        \end{tabular}
    \end{adjustbox}
\end{table}

The word ``rectangular'' in $\pplname$ emphasizes that our $\mathtt{if}$-guards can only identify \emph{axis-aligned hyper-rectangles}\footnote{More precisely, we can simulate statements like $\ITE{R}{...}{...}$, where $R$ is a finite Boolean combination of rectangular guards, through appropriate nesting of $\IF$; note that such an $R$ is indeed a finite union of axis-aligned reactangles in $\N^k$.} in $\N^k$, but no more general polyhedra.
These \emph{rectangular guards} $\progvar{x} < n$ have the fundamental property that they preserve rational PGF.
On the other hand, allowing more general guards like $\progvar{x} < \progvar{y}$ breaks this property (see \cite{DBLP:conf/soda/FlajoletPS11} and our comments in~\ifcameraready\cite[Appx.~B]{DBLP:journals/corr/abs-2205-01449}\else\Cref{app:nonrect}\fi).

The most intricate feature of $\pplname$ is the -- potentially unbounded -- loop $\WHILEDO{\progvar{x} < n}{P}$.
A program that does not contain loops is called \emph{loop-free}.

\subsection{The Statement $\incrasgn{\progvar{x}}{\iid{D}{\progvar{y}}}$}
\label{sec:iid}

The novel $\mathtt{iid}$ statement is the heart of the loop-free fragment of $\pplname$ -- it subsumes both $\asgn{\progvar{x}}{D}$ (\enquote{assign a $D$-distributed sample to $\progvar{x}$}) and the standard assignment $\asgn{\progvar{x}}{\progvar{y}}$.
We include the assign-increment (${+}{=}$) version of $\mathtt{iid}$ in the core fragment of $\pplname$ for technical reasons;
the assignment $\asgn{\progvar{x}}{\iid{D}{\progvar{y}}}$ can be recovered from that as syntactic sugar by simply setting $\asgn{\progvar{x}}{0}$ beforehand.

Intuitively, the meaning of $\incrasgn{\progvar{x}}{\iid{D}{\progvar{y}}}$ is as follows.
The right-hand side $\iid{D}{\progvar{y}}$ can be seen as a function that takes the current value $v$ of variable $\progvar{y}$, then draws $v$ i.i.d.\ samples from distribution $D$, computes the sum of all these samples and finally increments $\progvar{x}$ by the so-obtained value.
For example, to perform $\asgn{\progvar{x}}{\progvar{y}}$, we may just write
$
    \asgn{\progvar{x}}{\iid{\dirac{1}}{\progvar{y}}}
$
as this will draw $\progvar{y}$ times the number $1$, then sum up these $\progvar{y}$ many 1's to obtain the result $\progvar{y}$ and assign it to $\progvar{x}$.
Similarly, to assign a random sample from a, say, uniform distribution to $\progvar{x}$, we can execute
$
    \asgn{\progvar{y}}{1} \, \fatsemi \, \asgn{\progvar{x}}{\iid{\uniform{n}}{\progvar{y}}}.
$

But $\mathtt{iid}$ is not only useful for defining standard operations.
In fact, taking sums of i.i.d.\ samples is common in probability theory.
The \emph{binomial distribution} with parameters $p \in (0,1)$ and $n \in \N$, for example, is the defined as the sum of $n$ i.i.d.\ Bernoulli-$p$-distributed samples and thus
\[
    \asgn{\progvar{x}}{\binomial{p}{\progvar{y}}}
    \qquad
    \text{is equivalent to}
    \qquad 
    \asgn{\progvar{x}}{\iid{\bernoulli{p}}{\progvar{y}}}
\]%
for all constants $p \in (0,1)$.
Similarly, the \emph{negative} $(p,n)$-binomial distribution is the sum of $n$ i.i.d.\ geometric-$p$-distributed samples.
Overall, $\mathtt{iid}$ renders the loop-free fragment of $\pplname$ \emph{strictly more expressive} than it would be if we had included only $\asgn{\progvar{x}}{D}$ and $\asgn{\progvar{x}}{\progvar{y}}$ instead.
As a consequence, since we use loop-free programs as a specification language (see \Cref{sec:loops}), $\mathtt{iid}$ enables us to write more expressive program specifications while retaining decidability.


\section{Interpreting $\pplname$ with PGF}
\label{sec:pgfsem}

In this section, we explain the PGF-based semantics of our language which is given in the second column of \Cref{fig:synsem}.
The overall idea is to view a $\pplname$-program $P$ as a \emph{distribution transformer}~\cite{DBLP:journals/jcss/Kozen85,McIver04}.
This means that the input to $P$ is a \emph{distribution} over initial program states (inputting a deterministic state is just the special case of a Dirac distribution), and the output is a distribution over final program states.
With this interpretation, if one regards distributions as \emph{generalized program states}~\cite{jacobs2019logical}, a probabilistic program is actually \emph{deterministic}:
The same input distribution always yields the same output distribution.
The goal of our PGF-based semantics is to construct an \emph{interpreter} that executes a $\pplname$-program statement-by-statement in forward direction, transforming one generalized program state into the next.
We stress that these generalized program states, or distributions, can be infinite-support in general.
For example, the program $\asgn{\progvar{x}}{\geometric{0.5}}$ outputs a geometric distribution -- which has infinite support -- on $\progvar{x}$.

\subsection{A Domain for Distribution Transformation}
\label{sec:pgfdom}

We now define a domain, i.e., an \emph{ordered} structure, where our program's in- and output distributions live.
Following the general idea of this paper, we encode them as PGF.
Let $\vars$ be a fixed finite set of program variables $\progvar{x}_1 ,\ldots , \progvar{x}_k$ and let $\vec{X} = (X_1,\ldots,X_k)$ be corresponding formal indeterminates.
We let $\pgfdom = \{g \in \R[[\vec{X}]] \mid g \text{ is a PGF}\}$ denote the set of all PGF.
Recall that this also includes sub-PGF (\Cref{def:pgfmaintext}).
Further, we equip $\pgfdom$ with the pointwise order, i.e., we let $g \sqsubseteq f$ iff $g(\sigma) \leq f(\sigma)$ for all $\sigma \in \N^k$.
It is clear that $(\pgfdom, \sqsubseteq)$ is a partial order that is moreover \emph{$\omega$-complete}, i.e., there exists a least element $0$ and all ascending chains $\chain = \{ g_0 \sqsubseteq g_1 \sqsubseteq \ldots \}$ in $\pgfdom$ have a least upper bound $\sup \chain \in \pgfdom$.
The maxima in $(\pgfdom, \sqsubseteq)$ are precisely the PGF which are not a sub-PGF.


\subsection{From Programs to PGF Transformers}
\label{sec:explainTransformer}

Next we explain how distribution transformation works using (P)GF (cf.\ \Cref{fig:cheatSheet}). 
This is in contrast to the PGF semantics from~\cite{lopstr/klinkenberg20} which operates  on infinite sums in a non-constructive fashion.

\begin{definition}[The PGF Transformer $\sem{P}$]
    Let $P$ be a $\pplname$-program.
    The \emph{PGF transformer} $\sem{P} \colon \pgfdom \to \pgfdom$ is defined inductively on the structure of $P$ through the second column in \Cref{fig:synsem}. 
\end{definition}

We show in \Cref{thm:semPprops} below that $\sem{P}$ is well-defined.
For now, we go over the statements in the language $\pplname$ and explain the semantics.

\paragraph{Sequential Composition.}
The semantics of $\compose{P_1}{P_2}$ is straightforward and intuitive:
First execute $P_1$ on $g$ and then $P_2$ on $\sem{P_1}(g)$, i.e., $\sem{\compose{P_1}{P_2}}(g) = \sem{P_2}(\sem{P_1}(g))$.
The fact that our semantics transformer moves \emph{forwards} through the program -- as program interpreters usually do -- is due to this definition.

\paragraph{Conditional Branching.}
To translate $\ITE{\progvar{x} < n}{P_1}{P_2}$, we follow the standard procedure which partitions the input distribution according to $\progvar{x} < n$ and $\progvar{x} \geq n$, processes the two parts independently and finally recombines the results~\cite{DBLP:journals/jcss/Kozen85}.
We realize the partitioning using the (formal) \emph{Taylor series expansion}.
This is feasible because we only allow \emph{rectangular} guards of the form $\progvar{x} < n$, where $n$ is a constant.
Thus, for a given input PGF $g$, the \emph{filtered PGF} $g_{\progvar{x} < n}$ is obtained through expanding $g$ in its first $n$ terms.
The $\ELSE$-part is obviously $g_{\progvar{x} \geq n} = g - g_{\progvar{x} < n}$.
We then evaluate $\sem{P_1}(g_{\progvar{x} < n}) + \sem{P_2}(g_{\progvar{x} \geq n})$ recursively.

\paragraph{Assigning a Constant.}
Technically, our semantics realizes an assignment $\asgn{\progvar{x}}{n}$ in two steps:
It first sets $\progvar{x}$ to $0$ and then increments it by $n$.
The former is achieved by substituting $X$ for $1$ which corresponds to computing the marginal distribution in all variables except $X$.
For example,
    \begin{alignat*}{3}
        &\annotate{0.5 X Y^2 + 0.5 X^2 Y^3} & \qquad \qquad& \annotate{g} \\
        &\asgn{\progvar{x}}{5} & \qquad \qquad& P \\
        & \annotate{(0.5 Y^2 + 0.5 Y^3)X^5} & \qquad \qquad& \annotate{\sem{P}(g)} \\
        & \annotate{0.5 X^5Y^2  + 0.5 X^5 Y^3 } & \qquad \qquad& \annotate{\langle\, \textit{reform.\ of prev.\ line} \,\rangle}
    \end{alignat*}
where the rightmost four lines explain this annotation style~\cite{lopstr/klinkenberg20}.
Note that $0.5 Y^2 + 0.5 Y^3$ is indeed the marginal of the input distribution in $Y$.

\paragraph{Decrementing a Variable.}
Since our program variables cannot take negative values, we define $\decr{\progvar{x}}$ as $\max(\progvar{x} {-} 1, 0)$, i.e., $\progvar{x}$ \emph{monus} (modified minus) $1$.
Technically, we realize this through $\ITE{\progvar{x}<1}{\pskip}{\decr{\progvar{x}}}$, i.e., we apply the decrement only to the portion of the input distribution where $\progvar{x} \geq 1$.
The decrement itself can then be carried out through \enquote{multiplication by $X^{-1}$}.
Note that $X^{-1}$ is not an element of $\fpsring{X}$ because $X$ has no inverse.
Instead, the operation $gX^{-1}$ is an alias for $\leftshift{g}$ which shifts $g$ \enquote{to the left} in dimension $X$.
To implement the semantics on top of existing computer algebra software, it is very handy to perform the multiplication by $X^{-1}$ instead.
This is justified because for PGF $g$ with $\subsFPSVarFor{g}{X}{0} = 0$, $\leftshift{g}$ and $gX^{-1}$ are equal.

\paragraph{The $\mathtt{iid}$ Statement.}
The semantics of $\incrasgn{\progvar{x}}{\iid{D}{\progvar{y}}}$ relies on the fact that 
\begin{align}
    \label{eq:iidsum}
    T_1 \sim \sem{D} ~ \ldots ~ T_n \sim \sem{D} \qqimplies \sum\nolimits_{i=1}^n T_i \sim \sem{D}^n ~,
\end{align}%
where $X \sim g$ means that r.v.\ $X$ is distributed according to PGF $g$ (see, e.g.,~\cite[p.~450]{tijms2003first}).
The $\mathtt{iid}$ statement generalizes this observation further:
If $n$ is not a constant but a random (program) variable $\progvar{y}$ with PGF $h(Y)$, then we perform the \emph{substitution} $\subsFPSVarFor{h}{Y}{\sem{D}}$ (i.e., replace $Y$ by $\sem{D}$ in $h$) to obtain the PGF of the sum of $\progvar{y}$-many i.i.d.\ samples from $D$.
We slightly modify this substitution to $\subsFPSVarFor{g}{Y}{Y\subsFPSVarFor{\sem{D}}{T}{X}}$ in order to (i) not alter $\progvar{y}$, and (ii) account for the increment to $\progvar{x}$.
For example,
\begin{align*}
    &\annotate{0.2 + 0.3Y + 0.5Y^2}\\
    &\incrasgn{\progvar{x}}{\iid{\bernoulli{0.5}}{\progvar{y}}} \\
    &\annotate{0.2 + 0.3 Y(0.5 + 0.5X) + 0.5Y^2(0.5 + 0.5X)^2} \\
    &\annotate{0.2 + 0.15 Y + 0.125 Y^2 + 0.15 X Y + 0.25 X Y^2 + 0.125 X^2 Y^2 } ~.
\end{align*}

\paragraph{The $\mathtt{while}$-Loop.}
The fixed point semantics of the while loop is standard~\cite{lopstr/klinkenberg20,DBLP:journals/jcss/Kozen85} and reflects the intuitive \emph{unrolling rule}, namely that $\WHILEDO{\guard}{P}$ is equivalent to $\ITE{\guard}{\compose{P}{\WHILEDO{\guard}{P}}}{\pskip}$.
Indeed, the fixed point formula in \Cref{fig:synsem} can be derived using the semantics of $\mathtt{if}$ discussed above.
We revisit this fixed point characterization in \Cref{sec:fpinduction}.

\paragraph{Properties of $\sem{P}$.}
Our PGF semantics has the property that all programs -- except while loops -- are able to operate on the input PGF in (rational) \emph{closed form}, i.e., they never have to expand the input as an infinite series (which is of course impossible in practice).
More formally:

\begin{theorem}[Closed-Form Preservation]
    \label{thm:closedform}
    Let $P$ be a \emph{loop-free} $\pplname$ program, and let $g = h/f \in \pgfdom$ be in rational closed form.
    Then we can compute a rational closed form of $\sem{P}(g)  \in \pgfdom$ by applying the transformations in \Cref{fig:synsem}.
\end{theorem}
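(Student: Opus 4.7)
The plan is to prove the claim by structural induction on the loop-free program $P$, verifying each of the five non-loop cases in \Cref{fig:synsem}. The induction hypothesis is that for any rational closed form $g = h/f$, applying $\sem{P}(\cdot)$ yields a PGF expressible as a rational closed form via effective algebraic manipulation. Throughout, a crucial invariant (which is inherited from $g$ being a PGF) is that the denominator $f$ has a nonzero constant term, so $1/f$ is a well-defined FPS; I would record this at the outset so that every substitution below can be shown to be well-defined.

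For the two easiest cases, $\compose{P_1}{P_2}$ follows directly from two applications of the induction hypothesis, and $\asgn{\progvar{x}}{n}$ reduces to the substitution $X \mapsto 1$ followed by multiplication by the polynomial $X^n$; since $f(1, X_2, \ldots, X_k)$ is not identically zero (it is the denominator of a well-defined marginal PGF), the result $h(1,\ldots)X^n / f(1,\ldots)$ is again a rational closed form. For the decrement $\decr{\progvar{x}}$, first note that $\subsFPSVarFor{g}{X}{0} = h(0,\ldots)/f(0,\ldots)$ is rational (well-defined because $f$ has a nonzero constant term). The numerator of $g - \subsFPSVarFor{g}{X}{0}$, placed over common denominator, vanishes at $X=0$ as a polynomial, so it is divisible by $X$; dividing explicitly yields a rational closed form for $(g - \subsFPSVarFor{g}{X}{0})X^{-1}$, and addition of $\subsFPSVarFor{g}{X}{0}$ is routine.

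For the conditional, I would first show that the filtered PGF $g_{\progvar{x}<n}$ is rational: formal differentiation preserves rational closed forms (quotient rule on polynomials), and $\subsFPSVarFor{(\partial_X^i g)}{X}{0}$ is well-defined by the nonzero-constant-term invariant. The expression $g_{\progvar{x}<n}$ is then a finite sum of products of rational functions multiplied by $X^i$, hence rational; its complement $g - g_{\progvar{x}<n}$ is also rational; and the induction hypothesis applied to $P_1, P_2$ finishes the case.

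The main obstacle is the $\mathtt{iid}$ case, because it involves composing two rational expressions rather than performing a linear operation. Here I would argue as follows. Let $\sem{D} = \alpha(T)/\beta(T)$ be the (rational by hypothesis) PGF of $D$, and set $R = Y\cdot \alpha(X)/\beta(X)$. Observe that $R$ has zero constant term (thanks to the factor $Y$), so formal substitution $Y \mapsto R$ into any FPS in $Y$ yields a well-defined FPS, matching the PGF semantics. For rationality, let $d$ be the maximum of the $Y$-degrees of $h$ and $f$. Then
\[
\subsFPSVarFor{g}{Y}{R} \eeq \frac{h(X_1,\ldots,R,\ldots,X_k)\cdot \beta(X)^{d}}{f(X_1,\ldots,R,\ldots,X_k)\cdot \beta(X)^{d}}
\]
is a ratio of polynomials after clearing the $\beta(X)^d$ denominators within $h$ and $f$, hence a rational closed form, which can be computed effectively. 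This completes the induction and yields the theorem.
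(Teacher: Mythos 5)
Your proof is correct and follows essentially the same route as the paper's: a structural induction over loop-free programs in which each case reduces to the fact that the atomic FPS operations (addition, multiplication, substitution with zero constant term, substitution by $1$, and formal differentiation) preserve rational closed forms. You simply spell out in more detail what the paper delegates to its FPS background section --- notably the explicit denominator-clearing for the $\mathtt{iid}$ substitution and the divisibility-by-$X$ argument for $\decr{\progvar{x}}$ --- so the two arguments coincide in substance.
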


The proof is by induction over the structure of $P$ noticing that all the necessary operations (substitution, differentiation, etc.) preserve rational closed forms, see~\ifcameraready\cite[Appx.~D]{DBLP:journals/corr/abs-2205-01449}\else\Cref{sec:fpsbackground}\fi.
A slight extension of our syntax, e.g., admitting non-rectangular guards, renders that closed forms are not preserved, see~\ifcameraready\cite[Appx.~B]{DBLP:journals/corr/abs-2205-01449}\else\Cref{app:nonrect}\fi.
Moreover, $\sem{P}$ has the following \emph{healthiness}~\cite{McIver04} properties:

\begin{restatable}[Properties of $\sem{P}$]{theorem}{restateSemPProps}
    \label{thm:semPprops}
    The PGF transformer $\sem{P}$ is
    \begin{itemize}
        \item a well-defined function $\pgfdom \to \pgfdom$~,
        \item \emph{continuous}, i.e., $\sem{P}(\sup \chain) = \sup \sem{P}(\chain)$ for all chains $\chain \subseteq \pgfdom$~,
        \item \emph{linear}, i.e., $\sem{P}(\fsum_{\sigma \in \N^k} g(\sigma) \vec{X}^\sigma) = \fsum_{\sigma \in \N^k} g(\sigma) \sem{P}(\vec{X}^\sigma)$ for all $g \in \pgfdom$~.
    \end{itemize}
\end{restatable}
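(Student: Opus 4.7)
My plan is to prove all three properties---well-definedness as a map $\pgfdom \to \pgfdom$, continuity, and linearity---simultaneously by structural induction on $P$. Doing them in parallel is important because, when verifying well-definedness of the fixed point for while loops, I will need continuity of the loop body's transformer already in hand.

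\textbf{Atomic and compositional cases.} For the atoms $\asgn{\progvar{x}}{n}$, $\decr{\progvar{x}}$, and $\incrasgn{\progvar{x}}{\iid{D}{\progvar{y}}}$, I will read off all three properties directly from the formulas in Figure~\ref{fig:synsem}. Non-negativity of the output coefficients is clear because the only operations involved (multiplication by monomials, substitution of a PGF for an indeterminate, Taylor-based filtering, and shifting) combine non-negative coefficients. Mass $\leq 1$ is obtained by evaluating the output at $\vec{X} = \vec{1}$ and bounding it by $g(\vec{1}) \leq 1$; for instance, for $\mathtt{iid}$ the substitution $\subsFPSVarFor{g}{Y}{Y\subsFPSVarFor{\sem{D}}{T}{\vec{X}}}$ becomes $g(\vec{1})|_{Y=\sem{D}(1)} \leq g(\vec{1})$ since $\sem{D}(1) \leq 1$. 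Linearity is immediate because substitution of a fixed expression for an indeterminate, formal differentiation, and multiplication by a fixed monomial are $\mathbb{R}$-linear on FPS. Continuity follows from a standard coefficient-wise argument: any output coefficient depends on only finitely many input coefficients, so each operation commutes with suprema of chains in $\pgfdom$. The compositional case $\compose{P_1}{P_2}$ then follows because function composition preserves the three properties. For the conditional $\ITE{\progvar{x}<n}{P_1}{P_2}$, the key observation is that the Taylor formula for $g_{\progvar{x}<n}$ extracts exactly those monomials of $g$ with $X$-exponent $<n$, so both $g_{\progvar{x}<n}$ and its complement $g - g_{\progvar{x}<n}$ are sub-PGFs; combined with the inductive hypothesis on $P_1, P_2$ and the linearity/continuity of filtering and addition, the three properties transfer to the branch.

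\textbf{While loops.} I expect the main obstacle to lie here, where $\sem{\WHILEDO{\progvar{x}<n}{P_1}}$ is defined as a least fixed point on the function space $\pgfdom \to \pgfdom$ under pointwise $\sqsubseteq$. I will first check that this function space is an $\omega$-cpo with bottom element $\lambda g.\,0$, which inherits from $(\pgfdom, \sqsubseteq)$. Next, using the inductive hypothesis for $P_1$ together with continuity of the filter and complement operations, I will show that the characteristic functional $\charfun{\progvar{x}<n}{P_1}$ is itself continuous on the function space, so that Kleene's theorem provides the least fixed point as $\sup_{m\in\N}\charfun{\progvar{x}<n}{P_1}^m(\lambda g.\,0)$. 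Well-definedness of $\sem{\WHILEDO{\progvar{x}<n}{P_1}}$ into $\pgfdom$ then follows because each iterate maps into $\pgfdom$ (by inspection, using the three properties already established for $P_1$) and $\pgfdom$ is closed under chain suprema.

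The genuinely delicate part will be transferring linearity from the iterates to the supremum. Each iterate is linear by an inner induction on $m$: the base $\lambda g.\,0$ is linear, and $\charfun{\progvar{x}<n}{P_1}$ preserves linearity because $\sem{P_1}$ is linear by the outer induction and the remaining ingredients (filtering, subtraction, addition) are linear. To pass linearity through $\sup$, I need to exchange the supremum with finite sums and scalar multiplication in $\pgfdom$; both exchanges are justified by the fact that $+$ and scalar multiplication are pointwise continuous operations on FPS. Continuity of $\sem{\WHILEDO{\progvar{x}<n}{P_1}}$ itself then follows from a standard diagonal argument: evaluating the Kleene sup on a supremum of inputs and reorganizing the double sup using monotonicity of each iterate. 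With linearity and continuity in hand at the top level, the inductive step for the while case is complete.
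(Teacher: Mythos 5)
Your proposal is correct and follows essentially the same route as the paper: a structural induction establishing well-definedness, continuity, and (finite) linearity for the atoms and compound constructs, with the while case handled by showing that the characteristic functional is continuous on the $\omega$-cpo of PGF transformers, invoking Kleene's fixed point theorem, and passing the three properties through the supremum of the iterates; the stated \enquote{infinite} linearity is then recovered from finite linearity plus continuity, exactly as in the paper's \Cref{thm:lemInfLin}. The only structural difference is that the paper runs this induction one level up, for \emph{SOP} transformers, via the notion of an \emph{admissible} transformer (which additionally tracks homogeneity w.r.t.\ meta-indeterminates), and then obtains \Cref{thm:semPprops} as the special case $\vec{U} = \emptyset$; your direct PGF-level argument is a faithful specialization of that and buys nothing extra, but loses nothing for this particular theorem.

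One justification in your atomic cases is inaccurate, although the conclusion it supports is still true. For $\asgn{\progvar{x}}{n}$ the semantics $\subsFPSVarFor{g}{X}{1}\,X^n$ marginalizes out $X$, so each output coefficient is an \emph{infinite} sum of input coefficients (all those differing only in their $X$-exponent); this contradicts your claim that every output coefficient depends on only finitely many input coefficients, which is therefore not a valid basis for continuity of this operation. Continuity nevertheless holds because all coefficients involved are non-negative, so these infinite sums commute with suprema of ascending chains by monotone convergence; the same repair is needed wherever substitution by $1$ occurs (the paper isolates this in its discussion of $X$-absolutely convergent FPS). With that justification patched, your induction goes through.
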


\subsection{Probabilistic Termination}

Due to the presence of possibly unbounded $\while$-loops, a $\pplname$-program does not necessarily halt, or may do so only with a certain probability.
Our semantics naturally captures the termination probability.

\begin{definition}[AST]
        \label{def:ast}
    A $\pplname$-program $P$ is called \emph{almost-surely terminating} (AST) for PGF $g$ if $\subsFPSVarFor{\sem{P}(g)}{\vec{X}}{\vec{1}} = \subsFPSVarFor{g}{\vec{X}}{\vec{1}}$, i.e., if it does not leak probability mass.
    $P$ is called \emph{universally} AST (UAST) if it is AST for all $g \in \pgfdom$.
\end{definition}

Note that all loop-free $\pplname$-programs are UAST.
In this paper, (U)AST only plays a minor role. 
Nonetheless, the proof rule below yields a stronger result (cf. \Cref{thm:fpToEquiv}) if the program is UAST.
There exist various of techniques and tools for proving (U)AST~\cite{DBLP:conf/fm/MoosbruggerBKK21,DBLP:journals/pacmpl/McIverMKK18,chatterjeeFOPP20}.

\section{Reasoning about Loops}
\label{sec:loops}

We now focus on loopy programs $L = \WHILEDO{\guard}{P}$.
Recall from \Cref{fig:synsem} that $\sem{L} \colon \pgfdom \to \pgfdom$ is defined as the \emph{least fixed point} of a higher order functional
\[
\charfun{\guard}{P} \colon (\pgfdom \to \pgfdom) \tto (\pgfdom \to \pgfdom) ~.
\]
Following~\cite{lopstr/klinkenberg20}, we show that $\charfun{\guard}{P}$ is sufficiently well-behaved to allow reasoning about loops by \emph{fixed point induction}.

\subsection{Fixed Point Induction}
\label{sec:fpinduction}

To apply fixed point induction, we need to lift our domain $\pgfdom$ from \Cref{sec:pgfdom} by one order to $(\pgfdom \to \pgfdom)$, the domain of \emph{PGF transformers}.
This is because the functional $\charfun{\guard}{P}$ operates on PGF transformers and can thus be seen as a second-order function (this point of view regards PGF as first-order objects).
Recall that in contrast to this, the function $\sem{P}$ is first-order -- it is just a PGF transformer.
The order on $(\pgfdom \to \pgfdom)$ is obtained by lifting the order $\sqsubseteq$ on $\pgfdom$ pointwise (we denote it with the same symbol $\sqsubseteq$).
This implies that $(\pgfdom \to \pgfdom)$ is also an $\omega$-complete partial order.
We can then show that $\charfun{\guard}{P}$ (see \Cref{fig:synsem}) is a continuous function.
With these properties, we obtain the following induction rule for upper bounds on $\sem{L}$, cf.~\cite[Theorem 6]{lopstr/klinkenberg20}:

\begin{restatable}[{Fixed Point Induction for Loops}]{lemma}{restateFpInduction}
    \label{thm:fpInduction}
    Let $L = \WHILEDO{\guard}{P}$ be a $\pplname$-loop.
    Further, let $\psi \colon \pgfdom \to \pgfdom$ be a PGF transformer.
    Then
    \begin{align*}
        \charfun{\guard}{P}(\psi) \ssqsubseteq \psi
        \qqimplies
        \sem{L} \ssqsubseteq \psi
        ~.
    \end{align*}
\end{restatable}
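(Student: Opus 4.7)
The plan is to prove this by a standard Park-style (Kleene) fixed point induction, which is the natural tool given that the lemma's context already establishes $(\pgfdom \to \pgfdom)$ as an $\omega$-complete partial order and asserts continuity of $\charfun{\guard}{P}$.

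First I would invoke Kleene's fixed point theorem in the higher-order domain $(\pgfdom \to \pgfdom)$. Since this domain has a least element $\bot$ (the constant transformer sending every PGF to the zero FPS) and since $\charfun{\guard}{P}$ is continuous on it, the least fixed point is given as the supremum of the Kleene chain, i.e.
\[
\sem{L} \;=\; \lfp \charfun{\guard}{P} \;=\; \sup_{n \in \N} \charfun{\guard}{P}^{n}(\bot)~.
\]
Continuity in particular entails monotonicity of $\charfun{\guard}{P}$ with respect to $\sqsubseteq$, which is the only property I actually need for the induction below.

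Next I would show by a straightforward induction on $n \in \N$ that $\charfun{\guard}{P}^{n}(\bot) \sqsubseteq \psi$. The base case $n=0$ is immediate because $\bot$ is the least element of $(\pgfdom \to \pgfdom)$. For the step, assuming $\charfun{\guard}{P}^{n}(\bot) \sqsubseteq \psi$, monotonicity of $\charfun{\guard}{P}$ gives $\charfun{\guard}{P}^{n+1}(\bot) = \charfun{\guard}{P}(\charfun{\guard}{P}^{n}(\bot)) \sqsubseteq \charfun{\guard}{P}(\psi)$, and the hypothesis $\charfun{\guard}{P}(\psi) \sqsubseteq \psi$ chains to give the desired bound.

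Finally, since $\psi$ is thus an upper bound of the Kleene chain $\{\charfun{\guard}{P}^{n}(\bot)\}_{n \in \N}$, taking the supremum yields $\sem{L} = \sup_{n} \charfun{\guard}{P}^{n}(\bot) \sqsubseteq \psi$, which is what the lemma asserts. There is no real obstacle here beyond ensuring that the required pieces -- the $\omega$-cpo structure on $(\pgfdom \to \pgfdom)$, existence of $\bot$, continuity of $\charfun{\guard}{P}$, and Kleene's theorem -- have all been established beforehand; these are precisely the facts asserted in \cref{sec:pgfdom} and in the paragraph immediately preceding the lemma, so I would cite them rather than redo the work. The argument is essentially the PGF-transformer analogue of Park induction and mirrors the proof of Theorem~6 in~\cite{lopstr/klinkenberg20}.
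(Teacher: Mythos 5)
Your argument is correct and is, at its core, the same Park-induction argument the paper uses; the only difference is in the division of labor. You unfold Park induction explicitly (Kleene chain, base case, monotone step, supremum), whereas the paper invokes a generic Park-induction theorem proved once in its fixed-point background section and treats that part as a black box. Conversely, the paper's entire proof of this lemma consists of the one ingredient you black-box: verifying that $\charfun{\guard}{P}$ is continuous on the $\omega$-cpo $(\pgfdom \to \pgfdom)$, via the chain computation $\sup_i \charfun{\guard}{P}(\psi_i) = \charfun{\guard}{P}(\sup_i \psi_i)$, which in turn rests on the pointwise definition of the lifted order and the continuity of PGF addition. Be aware that the sentence preceding the lemma (``we can then show that $\charfun{\guard}{P}$ is a continuous function'') is an announcement whose proof lives precisely inside the paper's proof of this lemma, so citing it rather than proving it leaves the substantive obligation undischarged; with that continuity verified, your induction on $n$ and the passage to the supremum are exactly right, and continuity (rather than mere monotonicity) is indeed needed both for the existence of $\lfp \charfun{\guard}{P}$ as $\sup_n \charfun{\guard}{P}^n(\bot)$ and hence for identifying $\sem{L}$ with that supremum.
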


The goal of the rest of the paper is to \emph{apply the rule from \Cref{thm:fpInduction} in practice}.
To this end, we must somehow specify an \emph{invariant} such as $\psi$ by finite means.
Since $\psi$ is of type $(\pgfdom \to \pgfdom)$, we consider $\psi$ as a program $I$ -- more specifically, a $\pplname$-program -- and identify $\psi = \sem{I}$.
Further, by definition
\[
    \charfun{\guard}{P}(\sem{I})
    \eeq
    \sem{\ITE{\guard}{P \fatsemi I}{\pskip}} ~,
\]
and thus the term $\charfun{\guard}{P}(\sem{I})$ is also a PGF-transformer expressible as a $\pplname$-program.
These observations and \Cref{thm:fpInduction} imply the following:
\begin{lemma}
    \label{thm:fpToEquiv}
    Let $L = \WHILEDO{\guard}{P}$ and $I$ be $\pplname$-programs.
    Then
    \begin{align}
        \sem{\ITE{\guard}{P \fatsemi I}{\pskip}} \ssqsubseteq \sem{I} \qimplies \sem{L} \ssqsubseteq \sem{I} ~.
    \end{align}
    Further, if $L$ is UAST (\Cref{def:ast}), then 
    \begin{align}
        \label{eq:equiviff}
        \sem{\ITE{\guard}{P \fatsemi I}{\pskip}} \eeq \sem{I} \qiff \sem{L} \eeq \sem{I} ~.
    \end{align}
\end{lemma}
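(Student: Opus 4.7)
The overall plan is to read both implications as applications of fixed-point induction (\Cref{thm:fpInduction}) with the invariant $\psi := \sem{I}$. The key reusable identity---already recorded just above the lemma---is that by the definition of $\charfun{\guard}{P}$ together with the $\mathtt{if}$ clause of \Cref{fig:synsem}, we have $\charfun{\guard}{P}(\sem{I}) = \sem{\ITE{\guard}{P \fatsemi I}{\pskip}}$. Given this, the first implication is a one-liner: the hypothesis $\sem{\ITE{\guard}{P \fatsemi I}{\pskip}} \sqsubseteq \sem{I}$ is literally $\charfun{\guard}{P}(\sem{I}) \sqsubseteq \sem{I}$, so \Cref{thm:fpInduction} delivers $\sem{L} \sqsubseteq \sem{I}$. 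For the $(\Leftarrow)$ direction of the iff (which, note, does \emph{not} require UAST), I would use that $\sem{L}$ is by construction a fixed point of $\charfun{\guard}{P}$: assuming $\sem{L} = \sem{I}$ gives $\sem{I} = \sem{L} = \charfun{\guard}{P}(\sem{L}) = \charfun{\guard}{P}(\sem{I}) = \sem{\ITE{\guard}{P \fatsemi I}{\pskip}}$.

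The substantive direction is $(\Rightarrow)$. The hypothesis combined with the first implication already gives $\sem{L} \sqsubseteq \sem{I}$ for free; what remains is to upgrade $\sqsubseteq$ to $=$, and this is exactly where UAST enters via a mass-conservation argument. Write $\mass{h} := \subsFPSVarFor{h}{\vec{X}}{\vec{1}} = \sum_{\sigma} h(\sigma)$ for the total mass of a (sub-)PGF $h$. I plan to establish the auxiliary fact that every $\pplname$-program $Q$ is \emph{mass-non-increasing}: $\mass{\sem{Q}(g)} \leq \mass{g}$ for all $g \in \pgfdom$. Since $\sem{Q}$ is a well-defined map $\pgfdom \to \pgfdom$, every Dirac monomial $\vec{X}^\sigma$ is mapped to a sub-PGF $\sem{Q}(\vec{X}^\sigma)$ with $\mass{\sem{Q}(\vec{X}^\sigma)} \leq 1$; linearity of $\sem{Q}$ (\Cref{thm:semPprops}) then gives $\mass{\sem{Q}(g)} = \sum_{\sigma} g(\sigma) \mass{\sem{Q}(\vec{X}^\sigma)} \leq \sum_{\sigma} g(\sigma) = \mass{g}$ (commuting the non-negative countable sums is routine). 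Invoking UAST of $L$ as $\mass{\sem{L}(g)} = \mass{g}$ (\Cref{def:ast}) and the already-established $\sem{L}(g) \sqsubseteq \sem{I}(g)$ sandwiches the masses: $\mass{g} = \mass{\sem{L}(g)} \leq \mass{\sem{I}(g)} \leq \mass{g}$, so $\mass{\sem{L}(g)} = \mass{\sem{I}(g)}$. A pointwise non-negative dominance with equal finite total sums forces coefficient-wise equality, hence $\sem{L}(g) = \sem{I}(g)$ for every $g \in \pgfdom$, i.e., $\sem{L} = \sem{I}$.

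The one genuinely delicate step I anticipate is the mass-non-increasing lemma, because $I$ (and $Q$ more generally) may itself contain $\mathtt{while}$-loops; strictly speaking, that $\sem{Q}(\vec{X}^\sigma) \in \pgfdom$ must be justified via the $\omega$-complete structure of $\pgfdom$ from \Cref{sec:pgfdom}, using that suprema of $\omega$-chains of sub-PGFs are again sub-PGFs, so the least-fixed-point construction from \Cref{fig:synsem} is well-defined and stays within $\pgfdom$. Modulo this bookkeeping---which is already implicit in \Cref{thm:semPprops}---the argument reduces to a direct unfolding of definitions combined with a single invocation of \Cref{thm:fpInduction}.
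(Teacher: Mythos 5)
Your proposal is correct and follows exactly the route the paper intends: the paper derives this lemma as an immediate consequence of the identity $\charfun{\guard}{P}(\sem{I}) = \sem{\ITE{\guard}{P \fatsemi I}{\pskip}}$ together with \Cref{thm:fpInduction}, which is precisely your use of Park induction with invariant $\psi = \sem{I}$ and of $\sem{L}$ being a fixed point for the $(\Leftarrow)$ direction. The mass-conservation argument you spell out for upgrading $\sem{L} \sqsubseteq \sem{I}$ to equality under UAST is the detail the paper leaves implicit, and it is sound given that $\sem{Q}$ maps (sub-)PGF to (sub-)PGF and is linear (\Cref{thm:semPprops}).
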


\Cref{thm:fpToEquiv} effectively reduces checking whether $\psi$ given as a $\pplname$-program $I$ is an invariant of $L$ to checking \emph{equivalence} of $\ITE{\guard}{P \fatsemi I}{\pskip}$ and $I$ provided $L$ is UAST.
If $I$ is loop-free, then the latter two programs are both loop-free and we are left with the task of proving whether they yield the same output distribution for all inputs.
We now present a solution to this problem.

\subsection{Deciding Equivalence of Loop-free Programs}
\label{sec:equivalence}

Even in the absence of loops, deciding if two given $\pplname$-programs are equivalent is non-trivial as it requires reasoning about infinitely many -- possibly infinite-support -- distributions on program variables.
In this section, we first show that $\sem{P_1} = \sem{P_2}$ is \emph{decidable} for loop-free $\pplname$ programs $P_1$ and $P_2$, and then use this result together with \Cref{thm:fpToEquiv} to obtain the main result of this paper. 

\subsubsection{SOP: Second-Order PGF.}

Our goal is to check if $\sem{P_1}(g) = \sem{P_2}(g)$ for \emph{all} $g \in \pgfdom$.
To tackle this, we encode whole \emph{sets} of PGF into a single object -- an FPS we call \emph{second-order PGF} (SOP).
To define SOP, we need a slightly more flexible view on FPS.
Recall from \Cref{def:fps} that a $k$-dim.\ FPS is an array $f \colon \N^k \to \R$.
Such an $f$ can be viewed equivalently as an $l$-dim.\ array with $(k{-}l)$-dim.\ arrays as entries.
In the formal sum notation, this is reflected by partitioning $\vec{X} = (\vec{Y}, \vec{Z})$ and viewing $f$ as an FPS in $\vec{Y}$ \emph{with coefficients that are FPS in the other indeterminates $\vec{Z}$}.
For example,
\begin{align*}
    (1-Y)^{-1}(1-Z)^{-1} &\eeq 1 + Y + Z + Y^2 + YZ + Z^2 + \ldots\\
    &\eeq (1-Z)^{-1} + (1-Z)^{-1}Y + (1-Z)^{-1}Y^2 + \ldots
\end{align*}
where in the lower line the coefficients $(1{-}Z)^{-1}$ are considered elements in $\fpsring{Z}$.

\begin{definition}[SOP]
    Let $\vec{U}$ and $\vec{X}$ be disjoint sets of indeterminates.
    A formal power series $f \in \R[[\vec{U},\vec{X}]]$ is a \emph{second-order PGF (SOP)} if
    \[
        f = \sum\nolimits_{\tau \in \N^{\abs{\vec{U}}}} f(\tau) \vec{U}^\tau \quad (\text{with } f(\tau) \in \fpsring{\vec{X}}) \qqimplies \forall \tau \colon f(\tau) \in \pgfdom ~.
    \]
\end{definition}
That is, an SOP is simply an FPS whose coefficients are PGF -- instead of generating a sequence of probabilities as PGF do, it generates a \emph{sequence of distributions}.
An (important) example SOP is 
\begin{align}
    \label{eq:sopdirac}
    f_{\mathit{dirac}}
    \eeq
    (1-XU)^{-1}
    \eeq
    1 + XU + X^2U^2 + \ldots ~\in \fpsring{U,X} ~,
\end{align}
i.e., for all $i\geq 0$, $f_{\mathit{dirac}}(i) = X^i = \sem{\dirac{i}}$.
As a second example consider $f_{\mathit{binom}} = \subsFPSVarFor{f_{\mathit{dirac}}}{X}{0.5 + 0.5X}$; it is clear that $f_{\mathit{binom}}(i) = (0.5 + 0.5X)^i = \sem{\binomial{0.5}{i}}$ for all $i \geq 0$.
Note that if $\vec{U} = \emptyset$, then SOP and PGF coincide.
For fixed $\vec{X}$ and $\vec{U}$, we denote the set of all second-order PGF with $\sopdom$.

\subsubsection{SOP Semantics of $\pplname$.}

The appeal of SOP is that, syntactically, they are still formal power series, and some can be represented in closed form just like PGF.
Moreover, we can readily extend our PGF transformer $\sem{P}$ to an SOP transformer $\sem{P} \colon \sopdom \to \sopdom$.
A key insight of this paper is that -- without any changes to the rules in \Cref{fig:synsem} -- applying $\sem{P}$ to an SOP is the same as applying $\sem{P}$ \emph{simultaneously} to all the PGF it subsumes:

\begin{restatable}{theorem}{restateSoptrans}
    \label{thm:soptrans}
    Let $P$ be a $\pplname$-program.
    The transformer $\sem{P} \colon \sopdom \to \sopdom$ is well-defined.
    Further, if $f = \sum_{\tau \in \N^{\abs{\vec{U}}}} f(\tau) \vec{U}^\tau$ is an SOP, then 
    \[
        \sem{P}(f) \eeq \sum\nolimits_{\tau \in \N^{\abs{\vec{U}}}} \sem{P}(f(\tau)) \vec{U}^\tau ~.
    \]
\end{restatable}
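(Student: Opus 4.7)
I propose to prove both claims---that $\sem{P}$ maps $\sopdom$ into itself and that it acts coefficient-wise on the outer formal sum---simultaneously, by structural induction on $P$. The governing observation is that every FPS operation appearing in \Cref{fig:synsem} (constant substitutions $\cdot|_{X=1}$ and $\cdot|_{X=0}$, multiplication by monomials $X^n$, the left-shift $\cdot\, X^{-1}$, formal partial derivatives $\partial_X$, and the $\mathtt{iid}$-substitution $\cdot|_{Y \to Y \cdot \sem{D}|_{T \to X}}$) mentions only the program-variable indeterminates $\vec{X}$ and never the \emph{parameter} indeterminates $\vec{U}$. Each such operation is therefore $\vec{U}$-linear, i.e., it distributes over the outer series $\sum_\tau \cdot\, \vec{U}^\tau$. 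Combined with the linearity of $\sem{P}$ on PGF (\Cref{thm:semPprops}), this makes the loop-free cases unfold mechanically: the formula for $\sem{P}(f)$ is just the formula for $\sem{P}$ applied term-by-term to the $\vec{U}^\tau$-coefficients, and well-definedness is immediate because $\sem{P}(f(\tau)) \in \pgfdom$ by the first-order theorem.

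Sequential composition, conditionals, and the assignment-like statements thus follow this pattern directly; in particular, the filter $f_{\progvar{x}<n}$ is itself a finite combination of $\vec{U}$-unaware operations, hence commutes with $\sum_\tau$.

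The main obstacle is the while-loop $L = \WHILEDO{\guard}{P}$, whose semantics is given by $\sem{L} = \lfp\,\charfun{\guard}{P}$. I would handle it via Kleene's fixed-point theorem, expressing $\sem{L}$ as the supremum of the ascending chain $\psi_n \coloneqq \charfun{\guard}{P}^n(\lambda g.\,0)$ starting from the constantly-zero transformer. By induction on $n$, each $\psi_n$ satisfies the coefficient-wise property: the base $n = 0$ is trivial, and the step uses the loop-free cases applied to one unrolling $\charfun{\guard}{P}(\psi_n) = \lambda f.\,(f - f_\guard) + \psi_n(\sem{P}(f_\guard))$. To lift this to the limit, I would equip $\sopdom$ with the pointwise order (which coincides coefficient-wise with the PGF order on $\pgfdom$) and use the continuity of $\sem{P}$ from \Cref{thm:semPprops}. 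Then, for every SOP $f$,
\[
\sup_n \sum\nolimits_\tau \psi_n(f(\tau))\,\vec{U}^\tau \eeq \sum\nolimits_\tau \sup_n \psi_n(f(\tau))\,\vec{U}^\tau \eeq \sum\nolimits_\tau \sem{L}(f(\tau))\,\vec{U}^\tau
\]
yields the desired formula.

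The delicate point here---and where I expect most of the care to go---is justifying the exchange of $\sup_n$ with $\sum_\tau$. Since pointwise suprema in $\fpsring{\vec{U},\vec{X}}$ are taken coefficient-by-coefficient and each ascending chain of sub-PGF is bounded by the total mass of $f(\tau)$, this amounts to a standard monotone-convergence argument, but must be carried out carefully to ensure the limiting series still has PGF coefficients (in particular, that each coefficient has total mass at most $1$, inherited from $f(\tau) \in \pgfdom$). Once this interchange is established, the inductive step for loops reduces to a clean application of continuity.
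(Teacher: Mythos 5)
Your proposal follows essentially the same route as the paper: a structural induction on $P$ in which the loop-free cases reduce to the observation that every atomic FPS operation in the semantics touches only the $\vec{X}$-indeterminates (hence distributes over $\sum_\tau \cdot\,\vec{U}^\tau$), and the loop case is handled by Kleene iteration together with an interchange of $\sup_n$ and the formal sum, justified by the coefficient-wise definition of the order. The paper merely packages the induction invariant into a named notion (\enquote{admissible} SOP-transformers: continuous, linear, homogeneous w.r.t.\ meta-indeterminates, PGF-preserving) and proves closure of admissibility under composition, guarded sums, and chain suprema.

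One caveat: you invoke \Cref{thm:semPprops} (linearity and continuity of $\sem{P}$) as an ingredient, but in the paper that theorem is itself \emph{derived from} \Cref{thm:soptrans}, so citing it here would be circular; moreover, for the loop case you need continuity of $\sem{P}$ on $\sopdom$, not merely on $\pgfdom$ as \Cref{thm:semPprops} states. Both issues disappear if you fold linearity and continuity (on $\sopdom$) into your induction hypothesis and establish them alongside the coefficient-wise action --- which is precisely what the paper's admissibility notion does.
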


\subsubsection{An SOP Transformation for Proving Equivalence.}

We now show how to exploit \Cref{thm:soptrans} for equivalence checking.
Let $P_1$ and $P_2$ be (loop-free) $\pplname$-programs;
we are interested in proving whether $\sem{P_1} = \sem{P_2}$.
By linearity it holds that $\sem{P_1} = \sem{P_2}$ iff $\sem{P_1}(\vec{X}^\sigma) = \sem{P_2}(\vec{X}^\sigma)$ for all $\sigma \in \N^k$, i.e., to check equivalence it suffices to consider all (infinitely many) point-mass PGF as inputs.

\begin{restatable}[SOP-Characterisation of Equivalence]{lemma}{restateEquivCharac}
    \label{thm:equivCharac}
    Let $P_1$ and $P_2$ be $\pplname$-programs with $\vars(P_i) \subseteq \{\progvar{x_1},\ldots,\progvar{x_k}\}$ for $i \in \{1,2\}$.
    Further, consider a vector $\vec{U} = (U_1,\ldots,U_k)$ of meta indeterminates, and let $g_{\vec{X}}$ be the SOP
    \[
        g_{\vec{X}} \eeq (1 - X_1 U_1)^{-1} (1 - X_2 U_2)^{-1} \cdots (1 - X_k U_k)^{-1} \iin \fpsring{\vec{U},\vec{X}} ~.
    \]
    Then $\sem{P_1} = \sem{P_2}$ if and only if $\sem{P_1}(g_{\vec{X}}) = \sem{P_2}(g_{\vec{X}})$.
\end{restatable}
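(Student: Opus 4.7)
The plan is to unfold $g_{\vec{X}}$ and match it against the coefficient-wise action of $\sem{P}$ on SOPs provided by Theorem~\ref{thm:soptrans}. Expanding each geometric factor $(1 - X_i U_i)^{-1} = \sum_{m \geq 0} X_i^m U_i^m$ and multiplying out yields
\[
g_{\vec{X}} \eeq \sum_{\sigma \in \N^k} \vec{X}^\sigma \vec{U}^\sigma~,
\]
so, viewed as an FPS in the meta indeterminates $\vec{U}$ with coefficients in $\fpsring{\vec{X}}$, the $\vec{U}^\sigma$-coefficient of $g_{\vec{X}}$ is exactly the point-mass PGF $\vec{X}^\sigma$. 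In particular $g_{\vec{X}} \in \sopdom$, and it packs all Dirac inputs into a single object -- this is the key structural observation on which the whole argument rests.

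For the ``only if'' direction I would argue immediately: if $\sem{P_1} = \sem{P_2}$ as PGF transformers, then by the coefficient-wise identity of Theorem~\ref{thm:soptrans} their extensions to $\sopdom$ also agree on every SOP, in particular on $g_{\vec{X}}$. For the ``if'' direction, Theorem~\ref{thm:soptrans} yields
\[
\sem{P_i}(g_{\vec{X}}) \eeq \sum_{\sigma \in \N^k} \sem{P_i}(\vec{X}^\sigma)\, \vec{U}^\sigma, \qquad i \in \{1,2\}~,
\]
and since two FPS are equal iff all their coefficients coincide, the assumption $\sem{P_1}(g_{\vec{X}}) = \sem{P_2}(g_{\vec{X}})$ forces $\sem{P_1}(\vec{X}^\sigma) = \sem{P_2}(\vec{X}^\sigma)$ for every $\sigma \in \N^k$. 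To lift this pointwise agreement on Dirac PGF to all of $\pgfdom$, I will invoke linearity from Theorem~\ref{thm:semPprops}: for any $h = \sum_\sigma h(\sigma)\vec{X}^\sigma \in \pgfdom$,
\[
\sem{P_1}(h) \eeq \sum_\sigma h(\sigma)\, \sem{P_1}(\vec{X}^\sigma) \eeq \sum_\sigma h(\sigma)\, \sem{P_2}(\vec{X}^\sigma) \eeq \sem{P_2}(h)~.
\]

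The main obstacle is essentially bookkeeping: one has to keep the meta indeterminates $\vec{U}$ (which index the \emph{sequence of distributions} encoded by an SOP) carefully separated from the program indeterminates $\vec{X}$ (which live \emph{inside} each such distribution), and verify that the double expansion of $g_{\vec{X}}$ is legitimate in $\fpsring{\vec{U},\vec{X}}$, which it is because each monomial $\vec{X}^\sigma \vec{U}^\sigma$ has a unique multidegree. All substantive content -- that $\sem{P}$ extends coefficient-wise to $\sopdom$ and that $\sem{P}$ is linear on $\pgfdom$ -- is already packaged into Theorems~\ref{thm:soptrans} and~\ref{thm:semPprops}, so no additional machinery beyond this unfolding-and-matching argument is required.
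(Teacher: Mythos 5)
Your proposal is correct and follows essentially the same route as the paper: expand $g_{\vec{X}} = \sum_{\sigma} \vec{X}^\sigma \vec{U}^\sigma$, apply the coefficient-wise action of $\sem{P}$ on SOP (Theorem~\ref{thm:soptrans}) to reduce equality of the two output SOP to equality on all point-mass inputs, and use linearity to lift agreement on Dirac PGF to all of $\pgfdom$. The only cosmetic difference is that the paper writes this as a single chain of biconditionals while you split the two directions and cite Theorem~\ref{thm:semPprops} explicitly for the linearity step, which the paper states in the surrounding text.
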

The proof of \Cref{thm:equivCharac} (\ifcameraready%
see~\cite[Appx.~F.5]{DBLP:journals/corr/abs-2205-01449}\else%
given in \Cref{proof:equivCharac}\fi) relies on \Cref{thm:soptrans} and the fact that the \emph{rational} SOP $g_{\vec{X}}$ generates all (multivariate) point-mass PGF; in fact it holds that
$g_{\vec{X}} = \sum_{\sigma \in \N^k} \vec{X}^\sigma \vec{U}^\sigma$, i.e., $g_{\vec{X}}$ generalizes $f_{\mathit{dirac}}$ from \eqref{eq:sopdirac}.
It follows:

\begin{restatable}{lemma}{restateLoopFreeDecidability}
    \label{thm:loop-free-decidability}
    $\sem{P_1} = \sem{P_2}$ is decidable for loop-free  $\pplname$-programs $P_1, P_2$.
\end{restatable}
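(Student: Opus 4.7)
\medskip
\noindent\emph{Proof plan.}
The plan is to reduce the quantified equivalence question \enquote{$\sem{P_1}(g) = \sem{P_2}(g)$ for all $g \iin \pgfdom$} to a single equality between two effectively computable rational formal power series, and then invoke decidability of the corresponding polynomial identity test.

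First, by \Cref{thm:equivCharac} we have $\sem{P_1} = \sem{P_2}$ if and only if $\sem{P_1}(g_{\vec X}) = \sem{P_2}(g_{\vec X})$, where $g_{\vec X} = \prod_{i=1}^{k} (1 - X_i U_i)^{-1}$ is a rational SOP presented in a manifestly finite closed form. This single evaluation therefore witnesses equivalence on all (in general infinitely many) inputs simultaneously, which is precisely the point of lifting the semantics to SOP via \Cref{thm:soptrans}.

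Second, I would argue that an SOP-level analogue of \Cref{thm:closedform} holds: for every loop-free $\pplname$-program $P$ and every rational input $g = h/f \iin \fpsring{\vec U, \vec X}$, a rational closed form of $\sem{P}(g)$ can be computed by symbolically applying the rules in \Cref{fig:synsem}. This is a routine induction on $P$, using that each primitive transformer preserves rationality: the substitutions $\subsFPSVarFor{g}{X}{0}$ and $\subsFPSVarFor{g}{X}{1}$ are evaluations of a rational function at a point of its domain of definition (well-definedness in the PGF/SOP case is guaranteed, see \Cref{def:pgfmaintext} and the footnote in \Cref{fig:cheatSheet}); the shift $(g - \subsFPSVarFor{g}{X}{0})X^{-1}$ divides a polynomial-in-$X$ numerator that vanishes at $X=0$ by $X$; the $\mathtt{iid}$ rule substitutes the rational closed form $\sem{D}$ (see \Cref{fig:distExp}) into a rational function; the $\mathtt{if}$ rule uses finitely many formal partial derivatives $\partial_X^i g$, each of which preserves rationality; and addition, multiplication, and composition of rational functions stay rational. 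Sequential composition then propagates closed forms through \Cref{thm:soptrans}. Applied to $g_{\vec X}$, this yields rational closed forms $\sem{P_i}(g_{\vec X}) = h_i / f_i$ with $h_i, f_i \iin \mathbb{Q}[\vec U, \vec X]$ (assuming all constants appearing in distribution expressions are rational).

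Third, given two such rational closed forms $h_1/f_1$ and $h_2/f_2$ with nonzero polynomial denominators, equality as elements of $\fpsring{\vec U,\vec X}$ is equivalent to the polynomial identity $h_1 f_2 - h_2 f_1 = 0$ in $\mathbb{Q}[\vec U, \vec X]$, which is trivially decidable by expanding and comparing coefficients. Combining the three steps gives a decision procedure.

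The main obstacle in turning this sketch into a rigorous proof is the SOP-level closed-form preservation statement: one must check that each rule in \Cref{fig:synsem}, interpreted on the larger variable set $\vec U \cup \vec X$ via \Cref{thm:soptrans}, still operates on rational representations without ever needing to expand an infinite series. The only genuinely non-syntactic operations are the Taylor-style filter $g_{\progvar{x}<n}$ and the substitutions in the $\mathtt{iid}$ rule; verifying that both act rationally on $h/f$ when $f(0,\ldots) \neq 0$ (and, for $\mathtt{iid}$, when the substituted expression $Y\,\sem{D}[T/X]$ has no constant term in $Y$, so that formal substitution into a power series in $Y$ is well defined) is the heart of the argument. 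Once that bookkeeping is done, decidability follows immediately from polynomial equality testing.
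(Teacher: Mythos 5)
Your proposal is correct and follows essentially the same route as the paper: reduce to a single evaluation on the rational SOP $g_{\vec{X}}$ via \Cref{thm:equivCharac}, compute rational closed forms of $\sem{P_i}(g_{\vec{X}})$ by an SOP-level analogue of \Cref{thm:closedform} (which the paper indeed states and proves for SOP), and decide equality by cross-multiplying and comparing polynomial coefficients. The step you flag as the main obstacle --- closed-form preservation over $\fpsring{\vec{U},\vec{X}}$ --- is exactly the lemma the paper supplies, so no gap remains.
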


Our main theorem follows immediately from \Cref{thm:fpToEquiv} and \Cref{thm:loop-free-decidability}:

\begin{theorem}
    \label{cor:loop-equiv-decidability}
    Let $L = \WHILEDO{\guard}{P}$ be UAST with loop-free body $P$ and $I$ be a loop-free $\pplname$-program.
    It is decidable whether $\sem{L} = \sem{I}$.
\end{theorem}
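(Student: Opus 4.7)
The plan is to combine the two key ingredients already established in \Cref{sec:loops}: the reduction from loop reasoning to equivalence of loop-free programs (\Cref{thm:fpToEquiv}) and the decidability of equivalence for loop-free $\pplname$-programs (\Cref{thm:loop-free-decidability}). The proof is essentially a one-line composition, so the work is in justifying that the hypotheses of both lemmas are satisfied.

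First I would observe that, since $L$ is assumed to be UAST, the equivalence~\eqref{eq:equiviff} of \Cref{thm:fpToEquiv} applies and yields
\[
    \sem{L} \eeq \sem{I}
    \qiff
    \sem{\ITE{\guard}{\compose{P}{I}}{\pskip}} \eeq \sem{I} ~.
\]
Hence deciding $\sem{L} = \sem{I}$ reduces to deciding equivalence of the two programs on the right-hand side. I would then check that both are loop-free: the body $P$ is loop-free by assumption, $I$ is loop-free by assumption, and the $\pplname$ constructors $\fatsemi$ and $\ITE{\cdot}{\cdot}{\cdot}$ do not introduce loops. Therefore $\ITE{\guard}{\compose{P}{I}}{\pskip}$ is a loop-free $\pplname$-program, and so is $I$.

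Applying \Cref{thm:loop-free-decidability} to these two loop-free programs then gives the desired decidability result. The only subtlety worth making explicit is that UAST of $L$ is used solely to upgrade the one-sided implication of \Cref{thm:fpToEquiv} to the biconditional needed here; without it, we would only get semi-decidability of $\sem{L} \sqsubseteq \sem{I}$. There is no substantive obstacle since the heavy lifting — closed-form preservation, the SOP construction, and the reduction of universal equivalence over infinitely many inputs to a single SOP check — has already been carried out in the lemmas being invoked.
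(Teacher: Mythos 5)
Your proposal is correct and matches the paper's own argument exactly: the theorem is obtained by composing \Cref{thm:fpToEquiv} (using UAST for the biconditional in \eqref{eq:equiviff}) with \Cref{thm:loop-free-decidability} applied to the loop-free programs $\ITE{\guard}{\compose{P}{I}}{\pskip}$ and $I$. Your additional check that the conditional and sequencing constructors preserve loop-freeness is a reasonable piece of due diligence the paper leaves implicit.
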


\begin{example}
    In \Cref{fig:equivCheckEx} we prove that the two UAST programs $L$ and $I$\\
    \begin{minipage}{\textwidth}
        \centering
        \begin{minipage}[t]{0.3\textwidth}
            \begin{align*}
            & \WHILE{\progvar{n} > 0} \\
            & \qquad \PCHOICE{\asgn{\progvar{n}}{\progvar{n} - 1}}{\nicefrac{1}{2}}{\asgn{\progvar{c}}{\progvar{c} + 1}} ~\} \\
            \end{align*}
        \end{minipage}
        ~~~~
        \begin{minipage}[t]{0.3\textwidth}
            \begin{align*}
            & \incrasgn{\progvar{c}}{\iid{\geometric{\nicefrac{1}{2}}}{\progvar{n}}} \,\fatsemi \\
            & \asgn{\progvar{n}}{0} \\
            \end{align*}
        \end{minipage}
    \end{minipage}\\
    are equivalent (i.e., $\sem{L} = \sem{I}$) by showing that $\sem{\IF{\progvar{n} > 0} \{\compose{P}{I}\}} = \sem{I}$ as suggested by \Cref{thm:fpToEquiv}.
    The latter is achieved as in \Cref{thm:equivCharac}: 
    We run both programs on the input SOP $g_{N,C} = (1 - NU)^{-1} (1 - CV)^{-1}$, where $U, V$ are meta indeterminates corresponding to $N$ and $C$, respectively, and check if the results are equal.
    Note that $I$ is the loop-free specification from \Cref{exmp:russian-roulette}; thus by transitivity, the loop $L$ is equivalent to the loop in \Cref{fig:russianroulette}.
    \qedTT
\end{example}


\begin{figure}[t]
    \centering
    \begin{adjustbox}{max width=\textwidth,max height=35mm}
        \begin{minipage}{0.69\textwidth}
            \begin{align*}
            & \annotate{(1-NU)^{-1}(1-CV)^{-1} \textcolor{gray}{= g_{N,C} =: g_0}} \\
            & \IF{\progvar{n} > 0} \{ \\
            & \qquad \annotate{ (1-CV)^{-1} ((1-NU)^{-1} -1 )  \textcolor{gray}{= g_0 - \subsFPSVarFor{g_0}{N}{0} =: g_1}} \\
            & \qquad  \{\, \asgn{\progvar{n}}{\progvar{n} - 1} \\
            & \qquad \qquad  \annotate{N^{-1}(1-CV)^{-1} ((1-NU)^{-1} -1 ) \textcolor{gray}{ = g_1N^{-1} =: g_2} } \\
            & \qquad  \} \, [0.5] \, \{ \, \incrasgn{\progvar{c}}{1} ~\} \,\fatsemi\\
            & \qquad \qquad \annotate{ C(1-CV)^{-1} ((1-NU)^{-1} -1 ) \textcolor{gray}{ = g_1C =: g_3} } \\
            & \qquad \annotate{(2N(1-CV))^{-1} + C(2(1-CV))^{-1}) ((1-NU)^{-1} -1 ) \textcolor{gray}{ = 0.5g_2 + 0.5g_3 =: g_4} } \\
            & \qquad \incrasgn{\progvar{c}}{\iid{\geometric{\nicefrac{1}{2}}}{\progvar{n}}} \,\fatsemi \\
            & \qquad \annotate{(2-C)(2N(1-CV))^{-1} + C(2(1-CV))^{-1}) ((2-C)(2-C-NU)^{-1} -1 )} \\
            & \qquad \qquad \textcolor{gray}{= \subsFPSVarFor{g_4}{N}{N(2-C)^{-1}}=: g_5} \\
            & \qquad \asgn{\progvar{n}}{0}  ~\}  \\
            & \qquad \annotate{(1-CV)^{-1} ( (2-C) (2-C-U)^{-1} - 1 ) \textcolor{gray}{ = \subsFPSVarFor{g_5}{N}{1} = g_6} }\\
            & \annotate{\textcolor{RoyalPurple}{(1-CV)^{-1}  (2-C) (2-C-U)^{-1}} \textcolor{gray}{ = g_6 + \subsFPSVarFor{g_0}{N}{0}}}
            \end{align*}
        \end{minipage}
        \qquad
	    \begin{minipage}{0.3\textwidth}
	    	\begin{align*}
	    	& \annotate{(1-NU)^{-1}(1-CV)^{-1}} \\
	    	& \incrasgn{\progvar{c}}{\iid{\geometric{\nicefrac{1}{2}}}{\progvar{n}}} \,\fatsemi \\
	    	& \annotate{(1-CV)^{-1}  (2-C) (2-C-NU)^{-1}} \\
	    	& \asgn{\progvar{n}}{0} \\
	    	& \annotate{\textcolor{RoyalPurple}{(1-CV)^{-1}  (2-C) (2-C-U)^{-1}}}
	    	\end{align*}
	    \end{minipage}
    \end{adjustbox}
    \caption{
        Program equivalence follows from the equality of the resulting SOP (\Cref{thm:equivCharac}).
    }
    \label{fig:equivCheckEx}
\end{figure}
\section{Case Studies}
\label{sec:case-studies}
We have implemented our techniques in Python as a prototype called \toolname\footnote{\ifblindreview Link omitted due to blind reviews\else\faGithub~\url{https://github.com/LKlinke/Prodigy}\fi.}: PRObability DIstributions via GeneratingfunctionologY.
By interfacing with different computer algebra systems (CAS), e.g., \texttt{Sympy}~\cite{SymPy} and \texttt{GiNaC}~\cite{DBLP:journals/jsc/BauerFK02,ginac} -- as backends for symbolic computation of PGF and SOP semantics -- {\toolname} decides whether a given probabilistic loop agrees with an (invariant) specification encoded as a loop-free $\pplname$ program.
Furthermore, it supports efficient queries on various quantities associated with the output distribution.

In what follows, we demonstrate in particular the applicability of our techniques to programs featuring stochastic dependency, parametrization, and nested loops.
The examples are all presented in the same way:
the iterative program on the left side and its corresponding specification on the right. The presented programs are all UAST, given the parameters are instantiated from a suitable value domain.\footnote{Parameters of \Cref{exmp:dueling_cowboys} have to be instantiated with a probability value in $(0,1)$.}
For each example, we report the time 
for performing the equivalence check on a 2,4GHz Intel i5 Quad-Core processor with 16GB RAM running macOS Monterey 12.0.1.
Additional examples can be found in~\ifcameraready\cite[Appx.~E]{DBLP:journals/corr/abs-2205-01449}\else\Cref{apx:casestudies}\fi.

\begin{figure}[t]
	\centering
	\begin{adjustbox}{max width=\textwidth, max height=13mm}
		\begin{minipage}{\textwidth}
			\begin{minipage}[t]{0.5\textwidth}
				\begin{align*}
					& \WHILE{\progvar{c} > 0}\\
					& \qquad \PCHOICE{\ASSIGN{\progvar{n}}{\progvar{n} + 1}}{\nicefrac{1}{2}}{\ASSIGN{\progvar{m}}{\progvar{m} + 1}}\fatsemi\\
					& \qquad \ASSIGN{\progvar{c}}{\progvar{c} -1}\fatsemi\\
					& \qquad \ASSIGN{\progvar{tmp}}{0}\\
					&\}
				\end{align*}
			\end{minipage}
			\qquad
			\begin{minipage}[t]{0.4\textwidth}
				\begin{align*}
					& \IF{\progvar{c} > 0}\{\\
					& \qquad \ASSIGN{\progvar{tmp}}{\binomial{\nicefrac{1}{2}}{\progvar{c}}}\fatsemi\\
					& \qquad \incrasgn{ \progvar{m} }{\progvar{tmp} }\fatsemi ~
					\incrasgn{ \progvar{n} }{\progvar{c}  -  \progvar{tmp}}\fatsemi\\
					& \qquad \ASSIGN{\progvar{c}}{0}\,\fatsemi\\
					& \qquad \ASSIGN{\progvar{tmp}}{0} \,\}
				\end{align*}
			\end{minipage}
		\end{minipage}
	\end{adjustbox}
	\caption{Generating complementary binomial distributions (for $\progvar{n}, \progvar{m}$) by coin flips. $\binomial{\nicefrac{1}{2}}{\progvar{c}}$ is an alias for $\iid{\bernoulli{\nicefrac{1}{2}}}{\progvar{c}}$.}
	\label{fig:depbern}
\end{figure}

\begin{example}[Complementary Binomial Distributions]
\label{exmp:dep_bern}
We show that the program in \Cref{fig:depbern} generates  a joint distribution on  $\progvar{n}, \progvar{m}$ such that both $\progvar{n}$ and $\progvar{m}$ are binomially distributed with support $\progvar{c}$ and are complementary in the sense that $\progvar{n} + \progvar{m} = \progvar{c}$ holds certainly (if $\progvar{n}=\progvar{m=0}$ initially, otherwise the variables are incremented by the corresponding amounts).
{\toolname} automatically checks that the loop agrees with the specification in 18.3ms.
The resulting distribution can then be analyzed for any given input PGF $g$ by computing $\sem{I}(g)$, where $I$ is the loop-free program.
For example, for input $g = C^{10}$, the distribution as computed by {\toolname} has the \emph{factorized} closed form $(\frac{M+N}{2})^{10}$.
The CAS backends exploit such factorized forms to perform algebraic manipulations more efficiently compared to fully expanded forms.
For instance, we can evaluate the queries $\expected{m^3+ 2mn + n^2} = 235$, or \mbox{$Pr(m >7 \land n < 3) = 7/128$}, almost instantly.
%
\qedTT
\end{example}

\begin{figure}[t]
	\begin{center}
		\begin{adjustbox}{max width=\textwidth, max height=16mm}
			\begin{minipage}{\textwidth}
				\begin{minipage}[t]{0.4\textwidth}
					\begin{align*}
						& \WHILE{\progvar{c} = 1 ~\wedge~ \progvar{t} \leq 1
						}\\
						& \qquad \IF{\progvar{t} = 0} \{\\
						& \qquad \qquad \PCHOICE{ \ASSIGN{\progvar{c}}{0} } {a}{\ASSIGN{\progvar{t}}{1}}\\
						& \qquad \} ~\pelse~ \{\\
						& \qquad \qquad \PCHOICE{ \ASSIGN{\progvar{c}}{0} } {b}{\ASSIGN{\progvar{t}}{0}}\\
						&\qquad \} \\
						& \}
					\end{align*}
				\end{minipage}
				\qquad~~~~
				\begin{minipage}[t]{0.5\textwidth}
					\begin{align*}
						& \IF{\progvar{c} = 1 ~\wedge~ \progvar{t} \leq 1
						}\{\\
						& \qquad \ASSIGN{\progvar{c}}{0}\\
						& \qquad \IF{\progvar{t} = 0}\{\\
						& \qquad \qquad \ASSIGN{\progvar{t}}{\bernoulli{\nicefrac{(1-a)b}{a + b -ab}}}\fatsemi\\
						& \qquad \} ~\ELSE \{\\
						& \qquad \qquad \ASSIGN{\progvar{t}}{\bernoulli{ \nicefrac{b}{a+b-ab}}}\fatsemi\\
						& \qquad \} \,\}
					\end{align*}
				\end{minipage}
			\end{minipage}
		\end{adjustbox}
		\caption{A program modeling two dueling cowboys with parametric hit probabilities.}
		\label{fig:dueling_cowboys}
	\end{center}
\end{figure}

\begin{example}[Dueling Cowboys~\textnormal{\cite{McIver04}}]
\label{exmp:dueling_cowboys}
The program in \Cref{fig:dueling_cowboys} 
models a duel of two cowboys with \emph{parametric} hit probabilities $\progvar{a}$ and $\progvar{b}$. 
Variable $\progvar{t}$ indicates the cowboy who is currently taking his shot, and $\progvar{c}$ monitors the state of the duel ($\progvar{c} = 1$: duel is still running, $\progvar{c} = 0$: duel is over).
$\toolname$ automatically verifies the specification 
in 11.97ms.
We defer related problems -- e.g., \emph{synthesizing} parameter values to meet a parameter-free specification -- to future work.
\qedTT
\end{example}

\begin{figure}[t]
	\begin{adjustbox}{max width=\textwidth}
		\begin{minipage}[t]{0.3\textwidth}
			\begin{align*}
				& \WHILE{\progvar{x} > 0} \\
				& \qquad \ASSIGN{\progvar{y}}{1}\fatsemi\\
				& \qquad \WHILE{\progvar{y} = 1}\\
				& \qquad \qquad \PCHOICE{\ASSIGN{\progvar{y}}{0}}{\nicefrac{1}{2}}{\ASSIGN{\progvar{x}}{\progvar{x} + 1}} \,\} \fatsemi\\
				& \qquad \ASSIGN{\progvar{x}}{\progvar{x} - 1}\fatsemi\\
				& \qquad \incrasgn{\progvar{c}}{1}\fatsemi\\
				& \} 
			\end{align*}
		\end{minipage}
		\qquad
		\begin{minipage}[t]{0.3\textwidth}
			\begin{align*}
				& \comment{inner invariant}\\
				& \IF{\progvar{y} = 1} \{\\
				& \qquad \incrasgn{\progvar{x}}{\geometric{\nicefrac{1}{2}}}\fatsemi\\
				& \qquad \ASSIGN{\progvar{y}}{0}\fatsemi\\
				& \}
			\end{align*}
		\end{minipage}
		\qquad
		\begin{minipage}[t]{0.3\textwidth}
			\begin{align*}
				& \comment{outer invariant}\\
				& \IF{\progvar{x} > 0}\{\\
				%
				%
				& \qquad \ASSIGN{\progvar{c}}{\iid{\catalan{\nicefrac{1}{2}}}{\progvar{x}}}\fatsemi\\
				& \qquad \ASSIGN{\progvar{x}}{0}\fatsemi\\
				& \qquad \ASSIGN{\progvar{y}}{0}\\
				&\}
			\end{align*}
		\end{minipage}
	\end{adjustbox}
	\caption{Nested loops with invariants for the inner and outer loop.}
	\label{fig:nested_loops}
\end{figure}

\begin{example}[Nested Loops]
\label{exmp:nested_loops}
The inner loop of the program in \Cref{fig:nested_loops} modifies $\progvar{x}$ which influences the termination behavior of the outer loop.
Intuitively, the program models a random walk on $\N$:
In every step, the value of the current position $\progvar{x}$ changes by some random $\delta \in \{-1,0,1,2,\ldots\}$ such that $\delta +1 $ is geometrically distributed.
The example demonstrates how our technique enables \emph{compositional} reasoning.
We first provide a loop-free specification for the inner loop, prove its correctness, and then simply \emph{replace} the inner loop by its specification, yielding a program without nested loops.
This feature is a key benefit of reusing the loop-free fragment of $\pplname$ as a specification language.
Moreover, existing techniques that cannot handle nested loops can profit from it; in fact, we can prove the overall program to be UAST using the rule of~\cite{DBLP:journals/pacmpl/McIverMKK18}.
Interestingly, the outer loop has \emph{infinite expected runtime} (for any input distribution where the probability that $\progvar{x} > 0$ is positive). We can prove this by \emph{querying the expected value} of the program variable $\progvar{c}$ in the resulting output distribution.
The automatically computed result is $\infty$, which indeed proves that the expected runtime of this program is not finite.
This example furthermore shows that our technique can be generalized beyond rational functions since the PGF of the $\catalan{p}$ distribution is $(1 - \sqrt{1 - 4 p (1{-}p) T} ) \,/\, 2p$, i.e., algebraic but not rational.
We leave a formal generalization of the decidability result from \Cref{cor:loop-equiv-decidability} to algebraic functions for future work.
{\toolname} verifies this example in 29.17ms.
\qedTT
\end{example}

\paragraph*{Scalability Issue.}
It is not difficult to construct programs where {\toolname} poorly scales: its performance depends highly on the number of consecutive probabilistic branches and the size of the constant $n$ in guards (requiring $n$-th order PGF derivation, cf.\ \Cref{fig:synsem}).
\section{Related Work}
\label{sec:related-work}

This section surveys research efforts 
that are highly related to our approach in terms of semantics, inference, and equivalence checking of probabilistic programs.


\paragraph*{Forward Semantics of Probabilistic Programs.}
Kozen established in his seminal work~\cite{DBLP:journals/jcss/Kozen81} a generic way of giving forward, denotational semantics to probabilistic programs as \emph{distribution transformers}. Klinkenberg et al.~\cite{lopstr/klinkenberg20} instantiated Kozen's semantics 
as PGF transformers. We refine the PGF semantics substantially such that it enjoys the following crucial properties:
\begin{enumerate*}[label=(\roman*)]
\item our PGF transformers (when restricted to loop-free $\pplname$ programs) preserve closed-form PGF and thus are effectively constructable. In contrast, the existing PGF semantics in~\cite{lopstr/klinkenberg20} operates on infinite sums in a non-constructive fashion;
\item our PGF semantics naturally extends to SOP, which serves as the key to reason about the exact behavior of unbounded loops (under possibly uncountably many inputs) in a fully automatic manner. The PGF semantics in~\cite{lopstr/klinkenberg20}, however, supports only (over-)approximations of looping behaviors and can hardly be automated; and
\item our PGF semantics is capable of interpreting program constructs like i.i.d.\ sampling that is of particular interest in practice.
\end{enumerate*}

\paragraph*{Backward Semantics of Probabilistic Programs.}
Many verification systems for probabilistic programs make use of backward, denotational semantics -- most pertinently, the \emph{weakest preexpectation} (WP) calculi~\cite{McIver04,DBLP:phd/dnb/Kaminski19} as a quantitative extension of Dijkstra's weakest preconditions~\cite{DBLP:journals/cacm/Dijkstra75}. The WP of a probabilistic program $C$ w.r.t.\ a postexpectation $g$, denoted by $\wpsymbol \llbracket C \rrbracket(g)(\cdot)$, maps every initial program state $\sigma$ to the expected value of $g$ evaluated in final states reached after executing $C$ on $\sigma$. In contrast to Dijkstra's predicate transformer semantics which admits also strongest postconditions, the counterpart of \enquote{strongest postexpectations} does unfortunately not exist~\cite[Chap.\ 7]{DBLP:phd/ethos/Jones90}, thereby not amenable to forward reasoning. We remark, in particular, that checking program equivalence via WP is difficult, if not impossible, since it amounts to reasoning about uncountably many postexpectations $g$.
We refer interested readers to~\cite[Chaps.\ 1--4]{barthe_katoen_silva_2020} for more recent advancements in formal semantics of probabilistic programs.

\paragraph*{Probabilistic Inference.}
There are a handful of probabilistic systems that employ an alternative forward semantics based on \emph{probability density function} (PDF) representations of distributions, e.g., ($\lambda$)PSI~\cite{DBLP:conf/cav/GehrMV16,DBLP:conf/pldi/GehrSV20}, AQUA~\cite{DBLP:conf/atva/HuangDM21}, Hakaru~\cite{DBLP:conf/padl/CaretteS16,DBLP:conf/flops/NarayananCRSZ16}, and the density compiler in~\cite{DBLP:journals/lmcs/BhatBGR17,DBLP:conf/popl/BhatAVG12}. These systems are dedicated to 
probabilistic inference for programs encoding continuous distributions (or joint discrete-continuous distributions). Reasoning about the underlying PDF representations, however, amounts to resolving complex integral expressions in order to answer inference queries, thus confining these techniques either to (semi-)numerical methods \cite{DBLP:conf/atva/HuangDM21,DBLP:conf/padl/CaretteS16,DBLP:conf/flops/NarayananCRSZ16,DBLP:journals/lmcs/BhatBGR17,DBLP:conf/popl/BhatAVG12} or exact methods yet limited to bounded looping bahaviors \cite{DBLP:conf/cav/GehrMV16,DBLP:conf/pldi/GehrSV20}. Apart from these inference systems, a recently developed language called Dice~\cite{DBLP:journals/pacmpl/HoltzenBM20} featuring exact inference for discrete probabilistic programs is also confined to statically bounded loops. The tool Mora~\cite{DBLP:conf/tacas/BartocciKS20,DBLP:conf/ictac/BartocciKS20} supports exact inference for various types of Bayesian networks, but relies on a restricted form of intermediate representation known as prob-solvable loops, whose behaviors can be expressed by a system of C-finite recurrences admitting closed-form solutions.


\paragraph*{Equivalence of Probabilistic Programs.}
Murawski and Ouaknine~\cite{DBLP:conf/concur/MurawskiO05} showed an \textsc{Exptime} decidability result for checking the equivalence of probabilistic programs over \emph{finite} data types by recasting the problem in terms of probabilistic finite automata~\cite{DBLP:journals/siamcomp/Tzeng92,DBLP:conf/cav/KieferMOWW11,DBLP:journals/iandc/ForejtJKW14}. Their techniques have been automated in the equivalence checker APEX~\cite{DBLP:conf/tacas/LegayMOW08}. Barthe et al.~\cite{DBLP:conf/lics/BartheJK20} proved a 2-\textsc{Exptime} decidability result for checking equivalence of \emph{straight-line} probabilistic programs (with deterministic inputs and no loops nor recursion) interpreted over all possible extensions of a finite field. Barthe et al.~\cite{DBLP:conf/popl/BartheGB09} developed a relational Hoare logic for probabilistic programs, which has been extensively used for, amongst others, proving program equivalence with applications in provable security and side-channel analysis.

The decidability result established in this paper is \emph{orthogonal} to the aforementioned results:
\begin{enumerate*}[label=(\roman*)]
	\item our decidability for checking $L \sim S$ applies to discrete probabilistic programs $L$ with \emph{unbounded} looping behaviors over a possibly \emph{infinite} state space; the specification $S$ -- though, admitting no loops -- encodes a possibly \emph{infinite-support} distribution; yet as a compromise,
	\item our decidability result is confined to $\pplname$ programs that necessarily terminate almost-surely on all inputs, and involve only distributions with rational closed-form PGF.
\end{enumerate*}



\section{Conclusion and Future Work}
\label{sec:conclusion}

We showed the decidability of -- and have presented a fully-automated technique to verifying -- whether a (possibly unbounded) probabilistic loop is equivalent to a loop-free specification program.
Future directions include determining the complexity of our decision problem; amending the method to continuous distributions using, e.g., \emph{characteristic functions}; extending the 
notion of probabilistic equivalence to 
probabilistic refinements; exploring PGF-based counterexample-guided synthesis of quantitative loop invariants (see~\ifcameraready\cite[Appx.~F.6]{DBLP:journals/corr/abs-2205-01449} \else\Cref{proof:loop-free-decidability} \fi for generating counterexamples); and tackling Bayesian inference. 



\ifblindreview%
\else%
\subsubsection{Acknowledgments.}
The authors thank Philipp Schr\"oer for providing support for his tool \textsc{Probably}\footnote{\faGithub~\url{https://github.com/Philipp15b/Probably}.} 
which forms the basis of our implementation.
\fi

%
%
%
\bibliographystyle{splncs04}
\bibliography{references}

\ifcameraready
\else
\allowdisplaybreaks 
\appendix
\clearpage\section{Extended Syntax of $\pplname$}
\label{sec:syntsugar}

The examples presented in this paper make use of a richer programming language syntax than what is defined in \Cref{fig:seq_loops}; the following table defines these constructs in terms of the core language.
\begin{table}[h]
    \centering
    \begin{adjustbox}{max width=\textwidth}
        \setlength{\tabcolsep}{1em}
        \begin{tabular}{l  l}
            \toprule
            Extended syntax &  Expressed through...\\
            \midrule
            $\incrasgn{\progvar{x}}{\progvar{y}}$ & $\incrasgn{\progvar{x}}{\iid{\dirac{1}}{\progvar{y}}}$\\[0.5mm]
            $\asgn{\progvar{x}}{\progvar{y}}$ & $\asgn{\progvar{x}}{0} \fatsemi \incrasgn{\progvar{x}}{\progvar{y}}$ \\
            \midrule
            $\ASSIGN{\progvar{x}}{D}$ & $\ASSIGN{\progvar{tmp}}{1}\fatsemi \ASSIGN{\progvar{x}}{\iid{D}{\progvar{tmp}}}\fatsemi\ASSIGN{\progvar{tmp}}{0}$\\[1mm]
            $\PCHOICE{P_1}{p}{P_2}$ & \makecell[l]{$\ASSIGN{\progvar{tmp}}{\bernoulli{p}}\fatsemi$\\ $\IF{\progvar{tmp} = 1}\{P_1\}~\ELSE\{P_2\}\fatsemi\ASSIGN{\progvar{tmp}}{0}$}\\
            \midrule
            $\IF{\varphi}\{P\}$ & $\IF{\varphi}\{P\}~ \ELSE \{\pskip\}$\\[0.5mm]
            $\IF{\neg \varphi}\{P\}$ & $\IF{\varphi}\{\pskip\}~\ELSE\{P\}$\\[0.5mm]
            $\IF{\varphi ~\wedge~ \psi}\{P\}$ & $\IF{\varphi}\{\IF{\psi}\{P\}\}$\\[0.5mm]
            $\IF{\varphi ~\vee~\psi}\{P\}$& $\IF{\neg (\neg \varphi ~\wedge~ \neg \psi)}\{P\}$\\[0.5mm]
            $\mathtt{switch}~ x~\{ \ldots~\mathtt{case}~n\colon P_n\fatsemi \mathtt{break}\fatsemi~ \ldots\}$ & $\ldots~\IF{x = n}\{P_n\}~\ELSE \IF{\ldots}\{\ldots\}$\\
            \midrule
            $\mathtt{repeat} ~n~ \mathtt{times} ~\{P\}$ & $\underbrace{P\fatsemi P\fatsemi P \ldots\fatsemi P}_{n~\text{times}}$\\
            \bottomrule\\
        \end{tabular}
    \end{adjustbox}

	\vspace{-2mm}
	\scriptsize{$^\star$We allow any Boolean structures in loop guards in a similar fashion as in the \pif~statements.}
\end{table}

\clearpage\section{Additional Remarks on Non-Rectangular Guards}
\label{app:nonrect}

Consider the following program $P_{1/\pi}$:

\begin{align*}
    & \asgn{\progvar{x}}{\geometric{1/4}} \,\fatsemi \\
    & \asgn{\progvar{y}}{\geometric{1/4}} \,\fatsemi \\
    & \asgn{\progvar{t}}{\progvar{x} + \progvar{y}} \,\fatsemi \\
    & \{\, \asgn{\progvar{t}}{\progvar{t} + 1} \,\} ~[5/9]~ \{\, \pskip \,\} \,\fatsemi \\
    & \asgn{\progvar{r}}{1} \,\fatsemi \\
    & \mathtt{repeat} ~3~ \mathtt{times} ~\{ \\
    & \qquad \asgn{\progvar{s}}{\iid{\bernoulli{1/2}}{2\progvar{t}}} \,\fatsemi \\
    & \qquad \IF{\progvar{s} \neq \progvar{t}} \{\, \asgn{\progvar{r}}{0} \,\} \\
    & \} 
\end{align*}

It was shown in~\cite{DBLP:conf/soda/FlajoletPS11} that after termination of $P_{1/\pi}$, variable $\progvar{r}$ is 1 with probability $1/\pi$.
Clearly, the program $P_{1/\pi}$ is $\pplname$ up to the $\mathtt{if}$-statement $\IF{\progvar{s} \neq \progvar{t}} \ldots$ which has a \emph{non-rectangular} guard.
Since $1/\pi$ is an irrational number, this example shows that non-rectangular guards do not preserve rational or algebraic closed forms.
Indeed, if the output distribution of $P_{1/\pi}$ had a rational (or algebraic) closed form $g$, then the probability that $\progvar{r} = 1$ after termination is given by
\[
    \subsFPSVarFor{    \partial_R (\subsFPSVarFor{g}{X, Y, T, S}{1}) }{R}{0}
\]
which is rational or algebraic, respectively.
\clearpage\section{Background on Basic Fixed Point Theory}
\label{app:fixpoint}

This paper fundamentally relies on two basic fixed point theoretic principles:
(i) Kleene's fixed point theorem and
(ii) a variant of a fixed point induction theorem that is sometimes referred to as \emph{Park induction}~\cite{park1969fixpoint}.

Recall that $(A, \sqsubseteq)$, where $A$ is a set and $\sqsubseteq$ is a partial order on $A$, is an $\omega$-complete partial order ($\omega$-cpo) if there exists a least element $\bot \in A$ and for all infinite ascending chains $a_0 \sqsubseteq a_1 \sqsubseteq \ldots$ there exists a least upper bound (supremum) $\sup_{i\geq 0} a_i \in A$.
Further, a function $f \colon A \to A$ is continuous if for all chains $a_0 \sqsubseteq a_1 \sqsubseteq \ldots$ it holds that $f(\sup_{i\geq 0} a_i) = \sup_{i \geq 0} f(a_i)$.
Continuity implies \emph{monotonicity}:
If $f$ is continuous, then for all $a, b \in A$ we have
\[
    a \ssqsubseteq b \qimplies f(a) \ssqsubseteq f(b) ~.
\]

The variants of Kleene's fixed point theorem and Park induction used in this paper are as follows:
\begin{theorem}
    \label{thm:kleenePark}
    Let $f \colon A \to A$ be a continuous function on $\omega$-cpo $(A, \sqsubseteq)$.
    Then
    \begin{itemize}
        \item (Kleene's fixed point theorem) $f$ has a unique least fixed point given by
        \[
            \lfp f \eeq \sup_{i\geq 0} f^i(\bot) ~.
        \]
        \item (Park induction) For all $x \in A$,
        \[
            f(x) \ssqsubseteq x \qimplies \lfp f \ssqsubseteq x ~.
        \]
    \end{itemize}
\end{theorem}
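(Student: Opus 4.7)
The plan is to establish both bullet points by the standard Kleene/Park argument, using monotonicity (which follows from continuity) as the basic workhorse and then invoking continuity only once, at the decisive step where we verify that the supremum is a fixed point.

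First I would record the preliminary that continuity of $f$ implies monotonicity: if $a \sqsubseteq b$, then $\{a, b, b, b, \ldots\}$ is an ascending chain with supremum $b$, and applying $f$ pointwise gives $f(a) \sqsubseteq \sup\{f(a), f(b), f(b), \ldots\} = f(\sup\{a,b,\ldots\}) = f(b)$. With monotonicity in hand, I would verify that the Kleene sequence $\bot \sqsubseteq f(\bot) \sqsubseteq f^2(\bot) \sqsubseteq \ldots$ is indeed an ascending chain: the base step $\bot \sqsubseteq f(\bot)$ is immediate since $\bot$ is least, and the inductive step $f^i(\bot) \sqsubseteq f^{i+1}(\bot)$ follows by applying monotonicity to the previous inequality. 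Hence $\mu \coloneqq \sup_{i \geq 0} f^i(\bot)$ exists in $A$ by $\omega$-completeness.

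Next I would show that $\mu$ is a fixed point of $f$. This is the step that genuinely needs continuity, and it is the main technical move of the proof: $f(\mu) = f(\sup_{i \geq 0} f^i(\bot)) = \sup_{i \geq 0} f^{i+1}(\bot) = \sup_{i \geq 1} f^i(\bot) = \sup_{i \geq 0} f^i(\bot) = \mu$, where the penultimate equality uses that prepending the least element $\bot$ to a chain does not change its supremum. To see that $\mu$ is the \emph{least} fixed point, let $y \in A$ satisfy $f(y) = y$ and show $f^i(\bot) \sqsubseteq y$ by induction on $i$: the base case $\bot \sqsubseteq y$ holds trivially, and monotonicity together with the inductive hypothesis gives $f^{i+1}(\bot) \sqsubseteq f(y) = y$. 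Taking the supremum yields $\mu \sqsubseteq y$. Uniqueness of the least fixed point then follows from antisymmetry of $\sqsubseteq$.

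For the Park induction bullet, I would reuse precisely the same inductive scheme but weaken the assumption from $f(x) = x$ to $f(x) \sqsubseteq x$. Concretely, assuming $f(x) \sqsubseteq x$, I show by induction that $f^i(\bot) \sqsubseteq x$ for every $i \geq 0$: the base case is $\bot \sqsubseteq x$, and for the step, monotonicity gives $f^{i+1}(\bot) = f(f^i(\bot)) \sqsubseteq f(x) \sqsubseteq x$, the last inequality being the hypothesis. Taking suprema on the left yields $\lfp f = \mu \sqsubseteq x$, which completes the argument. The main obstacle is purely conceptual rather than technical: one has to be careful that the proof of the fixed point property uses full continuity (to push $f$ through the supremum) whereas the minimality and Park-induction steps only need monotonicity; once this split is appreciated, every inequality is a one-line induction.
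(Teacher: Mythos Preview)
Your proof is correct and follows essentially the same approach as the paper: both establish the fixed-point property of $\sup_{i\geq 0} f^i(\bot)$ via continuity, then prove minimality and Park induction by the same monotonicity-based induction showing $f^i(\bot) \sqsubseteq x$. You are simply more explicit about deriving monotonicity from continuity and verifying that the Kleene sequence is a chain, whereas the paper records monotonicity as a preliminary before the theorem and leaves the chain property implicit.
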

\begin{proof}
    We provide a self-contained proof.
    
    For Kleene's fixed point theorem note that
    \[
        f(\sup_{i \geq 0} f^i(\bot))
        \eeq
        \sup_{i \geq 0} f^{i+1}(\bot)
        \eeq
        \sup_{i \geq 0} f^{i}(\bot)
    \]
    where the first equality holds by continuity of $f$.
    Thus $\sup_{i \geq 0} f^i(\bot)$ is indeed a fixed point of $f$.
    To see that it is the least fixed point, let $x \in A$ such that $f(x) = x$.
    Since $\bot \sqsubseteq x$ we have $f(\bot) \sqsubseteq f(x) = x$ by monotonicity of $f$.
    Iterating this argument yields $f^i(\bot) \sqsubseteq x$ for all $i \geq 0$.
    This means that $x$ is an \emph{upper bound} on the chain $f^0(\bot) \sqsubseteq f^i(\bot) \sqsubseteq \ldots$, and thus it is greater or equal to the least upper bound (the supremum):
    \[
        \sup_{i \geq 0} f^i(\bot) \sqsubseteq x ~.
    \]
    This concludes the proof of Kleene's fixed point theorem.
    
    For Park induction assume that $x \in A$ is such that $f(x) \sqsubseteq x$.
    We claim that $f^i(\bot) \sqsubseteq x$ for all $i \geq 0$.
    This is proved by induction on $i$:
    \begin{itemize}
        \item For $i = 0$ this is trivial because $f^0(\bot) = \bot \sqsubseteq a$.
        \item For $i > 0$ we have $f^{i}(\bot) = f(f^{i-1}(\bot)) \overset{\text{I.H. + Mon.}}{\sqsubseteq} f(x) \overset{\text{Ass.}}{\sqsubseteq} x$.
    \end{itemize}
    Therefore, $x$ is again an upper bound on the chain $f^0(\bot) \sqsubseteq f^i(\bot) \sqsubseteq \ldots$ and thus
    \[
        x \ssqsupseteq \sup_{i \geq 0} f^i(\bot) \eeq \lfp f
    \]
    where the right equation is Kleene's fixed point theorem.
\end{proof}
\clearpage\section{Background on Formal Power Series}
\label{sec:fpsbackground}

This section is intended to be a self-contained introduction to FPS with a particular focus on the operations necessary for our technique.

Let $ k \in \N = \{0,1,\ldots\}$.
A $k$-dimensional \emph{formal power series} (FPS) is a map $f \colon \N^k \to \R$, i.e., a $k$-dimensional \emph{array} with real-valued entries.
It is common and useful to denote FPS as \enquote{formal sums}.
To this end, let $\vec{X} = (X_1,\ldots,X_k)$ be an ordered vector of formal symbols, called \emph{indeterminates}\footnote{Note that empty $\vec{X}$ is possible if $k =0$; a 0-dim.\ FPS is just a real number.}.
These indeterminates are merely \emph{labels} for the $k$ dimensions of $f$ and do not have any other particular meaning.
An FPS $f \colon \N^k \to \R$ is then written
\[
    f \eeq \fsum_{\sigma \in \N^k} f(\sigma) \vec{X}^\sigma
\]
where $\vec{X}^\sigma$ is the formal \emph{monomial} $X_1^{\sigma_1} X_2^{\sigma_2} \ldots X_n^{\sigma_n}$.
The exact \enquote{syntax} of the formal sum notation is not important and it is actually useful to allow a certain level of flexibility.
However, it is crucial that a formal sum unambiguously identifies each coefficient $f(\sigma)$ of the FPS.
For example, notations such as
\begin{equation}
    \label{eq:notationExamples}
    1 + X + Y, \qquad \fsum_{i \geq 5} X^i, \quad \text{or} \qquad X^5 + X^6 + X^7 + \ldots
\end{equation}
are all valid ways of denoting FPS (in fact, the latter two examples denote the same FPS).
The FPS where all coefficients are 0 is simply denoted $0$.
An FPS with $f(\sigma) \neq 0$ for finitely many $\sigma \in \N^k$ (like the leftmost example in \eqref{eq:notationExamples}) is called \emph{finite-support} or \emph{polynomial}, and otherwise it is called \emph{infinite-support}.
An FPS with $k=1$ ($k > 1$) is called \emph{univariate} (\emph{multivariate}, respectively).

\begin{remark}
    \label{rem:noFunction}
    One does not necessarily view FPS as \emph{analytic functions in their indeterminates}, even though the formal sum notation seems to suggest the opposite.
    This applies in particular to this paper for which there are several reasons:
    (1) For our approach, \emph{evaluating} the formal sums for arbitrary values of the indeterminates is neither meaningful nor necessary\footnote{Sometimes, we have to substitute an indeterminate by 0 or 1. Whenever we do so, we make sure that it is well-defined.}.
    (2) Relating an FPS to a function is problematic in general because the order in which the infinite sum is evaluated may affect the result; in fact, we would have to fix such an order upfront. However, especially for multivariate FPS, it is not clear what this order should be.
    (3) The formal, algebraic operations we apply to FPS are not always well-defined in the analytic sense.
    For instance, the FPS $\fsum_{i \geq 0} i!X^i$ converges for no $X \neq 0$~\cite{generatingfunctionology} and thus it is nowhere differentiable when viewed as an analytic function.
    However, its formal derivative is perfectly well-defined.
    
    The bottom line is that viewing FPS as purely algebraic objects, namely \emph{arrays} (rather than analytic functions) avoids many complications and is well-suited for the purposes of this paper.
\end{remark}

\subsubsection{Probability Generating Functions.}

A PGF $g$ is the special case of an FPS where the coefficients $g(\sigma)$ encode a (discrete) \emph{probability distribution} on $\N^k$:

\begin{definition}[PGF]
    \label{def:pgf}
    A $k$-dimensional FPS $g$ is called a probability generating function (PGF) if for all $\sigma \in \N^k$ we have \emph{(i)} $g(\sigma) \geq 0$ and \emph{(ii)} $\sum_{\sigma \in \N^k} g(\sigma) \leq 1$.
\end{definition}

For example,
\[
\frac{1}{3}X + \frac{1}{3}Y + \frac{1}{3}XY
\qqand
\frac{1}{2} + \frac{1}{4}X + \frac{1}{8}X^2 + \ldots
\]
are PGF; the former encodes a (finite-support) uniform distribution on the tuples $\{(1,0),(0,1),(1,1)\}$, and the latter is an (infinite-support) geometric distribution on the non-negative integers $\N$.

The sum in constraint (ii) of \Cref{def:pgf} is an \emph{actual infinite sum}, not a formal one.
However, this is not problematic because all coefficients are non-negative due to constraint (i), and thus we need not pay attention to issues of summation order.

In the following, we describe the relevant algebraic operations that can be performed on FPS and PGF in particular.

\subsubsection{Addition and multiplication.}
A key advantage of the sum notation is that it allows for a more intuitive definition of important FPS operations.
In fact, one can define \emph{sum} and \emph{product} operators such that the set $\fpsring{\vec{X}}$ of all $k$-dimensional FPS becomes a (commutative) ring, the \emph{ring of formal power series}.
These operations are defined as follows:
\begin{align*}
f + g \eeq \fsum_{\sigma \in \N^k} (f(\sigma) + g(\sigma))\vec{X}^\sigma
\qquad
fg \eeq \fsum_{\sigma \in \N^k} \sum_{\substack{ \tau + \rho = \sigma}} f(\tau) g(\rho) \vec{X}^\sigma
~.
\end{align*}
For the multiplication, the inner sum symbol is not a formal sum, but an actual \emph{finite} sum in the ring; it is finite because for every $\sigma \in \N^k$ there are just finitely many $\tau, \rho \in \N^k$ such that $\tau + \rho = \sigma$.

It is straightforward to verify that the two operations are associative, commutative and that multiplication distributes over addition.
Moreover, the constant FPS $0$ and $1$ are neutral w.r.t.\ to addition and multiplication, respectively, and thus $\fpsring{\vec{X}}$ is indeed a ring.

\subsubsection{Inverses.}
Even though $\fpsring{\vec{X}}$ is not a field for $k \geq 1$, there still exist elements $f$ with a unique \emph{inverse} $f^{-1} = 1/f$ satisfying $f f^{-1}= 1$ (uniqueness of inverses follows because $\fpsring{\vec{X}}$ is zero-divisor free).
Of particular interest for this paper (and in fact for the whole method of generating functions) are the cases where $f$ is an infinite-support FPS (meaning that $\{\sigma \in \N^k \mid f(\sigma) \neq 0\}$ is infinite) but $f^{-1}$ is a \emph{finite}-support FPS, also called \emph{polynomial}.
This enables representing the \emph{infinite} array $f$ implicitly through its inverse \emph{finite} array $f^{-1}$.
Well-known examples of this are the \emph{geometric series} and variants thereof, for example
\begin{equation}
\label{eq:inverseExample}
\frac{1}{2 - X}
\eeq
\fsum_{i \geq 0} \frac{1}{2^{i+1}}X^i ~.
\end{equation}
We stress that the \emph{only} meaning associated to equation \eqref{eq:inverseExample} is that $2 - X$ and $\fsum_{i \geq 0}1/2^{i+1} X^i$ are mutually inverse elements in the FPS ring.
In particular, we are usually \emph{not} concerned with viewing the left and right hand side as \enquote{functions in $X$}, cf.\ \Cref{rem:noFunction}.
We call a representation like $g = f/h = f h^{-1}$ a \emph{rational closed form} in this paper.
It is easy to show that for all $f,g,p,q \in \fpsring{\vec{X}}$ such that $g^{-1}$ and $q^{-1}$ exist it holds that, as expected,
\[
\frac{f}{g} \, \frac{p}{q} \eeq \frac{fp}{gq}
\qquad\text{and}\qquad
\frac{f}{g} + \frac{p}{q} \eeq \frac{fq + gp}{gq} ~,
\]
in other words, addition and multiplication preserve rational closed forms.

An alternative view on inverses is that for a given $f$ we look for a solution $x$ of the equation $fx = 1$ (in this case, the solution is unique if it exists).
In a similar way, we can define \emph{roots} in $\fpsring{\vec{X}}$: An $n$-th root of $f$ is a solution $x$ to the equation $x^n = f$.
In general, roots do not always exist and are not necessarily unique if they exist.
However, there are infinite-support FPS that have a closed form representation involving roots.
An example is the \enquote{Catalan distribution} that arises in the program in \Cref{fig:nested_loops}.
\twk{The explanation of roots should be improved}

\subsubsection{FPS as coefficients.}

For $k \geq 2$, a $k$-dimensional array can be viewed equivalently as an $l$-dimensional ($l < k$) array with $(k{-}l)$-dimensional arrays as entries, i.e., we identify $f \colon \N^k \to \R$ and $f' \colon \N^l \to (\N^{k-l} \to \R)$ if $f(\sigma) = f'(\tau)(\rho)$ for all $\sigma = (\tau, \rho)$.
In the formal sum notation, this is reflected by partitioning $\vec{X} = (\vec{Y}, \vec{Z}) = (Y_1,\ldots,Y_l,Z_{l+1},\ldots,Z_k)$ and viewing $f$ as an FPS in $\vec{Y}$ \emph{with coefficients that are FPS in the remaining indeterminates $\vec{Z}$}:
\[
\fsum_{\sigma \in \N^k} f(\sigma) \vec{X}^{\sigma}
\eeq
\fsum_{\tau \in \N^{l}} f'(\tau) \vec{Y}^\tau
~.
\]
Note that the formal sum notation on the right is not \enquote{problematic} as it still identifies all coefficients of $f$ in an unambiguous fashion.
In the following, we write $f(\tau)$ instead of $f'(\tau)$; the precise meaning of $f(\tau)$ for $\tau \in \N^l$, $l < k$, is to treat the FPS $f$ as a map of type $\N^l \to (\N^{k-l} \to \R)$ as explained above.
Partitioning the indeterminates in this manner is often useful for understanding the structure of an FPS.
For example, it holds that
\[
\fsum_{i,j \geq 0} \frac{1}{2^{2i + 1}} {i \choose j} Y^j X^i
\eeq
\fsum_{i \geq 0} \frac{1}{2^{i+1}} (\frac{1}{2} + \frac{1}{2} Y)^i X^i
\]
and the right hand side reveals almost immediately that this FPS is in fact the PGF resulting from executing the program
\begin{align*}
& \asgn{\progvar{x}}{\geometric{1/2}} \,\fatsemi \\
& \asgn{\progvar{y}}{\binomial{1/2}{\progvar{x}})} ~.
\end{align*}
Treating a given FPS as univariate in some distinguished indeterminate $X$ is particularly useful for defining the operations of \emph{substitution} and \emph{differentiation} that we explain in the following.

\subsubsection{Substitution.}
Substitution refers to the operation of replacing an indeterminate by another FPS.
Let $f = \fsum_{i \geq 0} f(i) X^i$ and $h$ be FPS.
Intuitively, the substitution of $X$ by $h$ in $f$ should satisfy the equality
\[
\subsFPSVarFor{f}{X}{h} \eeq \sum_{i \geq 0} f(i) h^i
\]
with the convention that $h^0 = 1$.
However, this equality cannot be seen as a general definition; recall from \Cref{rem:noFunction} that \enquote{evaluating} an FPS is often problematic, and substitution is even more general than evaluation.
Indeed, if $f = 1 + X + X^2 + \ldots = (1-X)^{-1}$, then neither $\subsFPSVarFor{f}{X}{1}$ nor $\subsFPSVarFor{f}{X}{1+X}$ are well-defined.
Examples of well-defined substitutions are $\subsFPSVarFor{f}{X}{0}$ and $\subsFPSVarFor{f}{X}{X^2}$, the former is equal to $1$ and the latter is the FPS $1 + X^2 + X^4 + \ldots = (1 - X^2)^{-1}$.
The next lemma states that substitution is well-defined in general if $h$ has no \enquote{constant coefficient} as in the last two examples:

\begin{lemma}
    Let $f = \fsum_{i \geq 0} f(i) X^i$ and $h$ be FPS in $\fpsring{X, \vec{Y}}$.
    If $h(0,\vec{0}) = 0$, then $\subsFPSVarFor{f}{X}{h}$ is a well-defined FPS and for all $i \geq 0$ it holds that
    \[
    \subsFPSVarFor{f}{X}{h}(i) \eeq \sum_{j \geq 0} f(j) h^j(i)
    \]
    where the right hand side is a \emph{finite} sum in $\fpsring{\vec{Y}}$.
\end{lemma}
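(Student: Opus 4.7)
The key observation is that the hypothesis $h(0,\vec{0}) = 0$ means $h$ has no constant term, i.e., every nonzero monomial of $h$ has strictly positive total degree. My plan is to introduce the \emph{valuation} (order) of an FPS and use its multiplicativity to bound which powers $h^j$ can contribute to a fixed coefficient of the substitution.

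First, I would define for any nonzero $p \in \fpsring{X, \vec{Y}}$ the valuation $\nu(p) = \min\{|\sigma| : p(\sigma) \neq 0\}$, where $|\sigma|$ denotes the total degree of the monomial index, and set $\nu(0) = \infty$. The hypothesis $h(0, \vec{0}) = 0$ says exactly that $\nu(h) \geq 1$. A standard and easy computation (direct from the Cauchy product formula for multiplication) gives $\nu(pq) \geq \nu(p) + \nu(q)$, and hence by induction on $j$,
\[
    \nu(h^j) \geq j \cdot \nu(h) \geq j ~.
\]
In particular, for every fixed index $\tau$, the coefficient $h^j(\tau)$ vanishes whenever $j > |\tau|$.

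Next, I would use this to show that the candidate definition
\[
    \subsFPSVarFor{f}{X}{h}(\tau) \ccoloneqq \fsum_{j \geq 0} f(j) \, h^j(\tau)
\]
is a \emph{finite} sum in $\R$ (or in $\fpsring{\vec{Y}}$, depending on whether $f(j)$ are treated as scalars or series) for each $\tau$: by the bound above, only the terms with $j \leq |\tau|$ are possibly nonzero, so the infinite formal sum collapses to the finite sum $\sum_{j=0}^{|\tau|} f(j) h^j(\tau)$. This both \emph{witnesses} well-definedness and \emph{proves} the stated coefficient formula simultaneously.

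Finally, I would verify that the family $\{\subsFPSVarFor{f}{X}{h}(\tau)\}_\tau$ assembles into a legitimate FPS: the assignment $\tau \mapsto \subsFPSVarFor{f}{X}{h}(\tau)$ is a function from the appropriate index set to $\R$ (respectively $\fpsring{\vec{Y}}$), which is all that \Cref{def:fps} requires. The main conceptual hurdle is just making the valuation argument precise and confirming that the truncation $\sum_{j=0}^{|\tau|} f(j) h^j(\tau)$ does not depend on any chosen summation order; this follows because all but finitely many summands are $0$. No deep obstacle remains beyond carefully setting up the valuation bookkeeping.
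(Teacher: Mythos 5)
The paper states this lemma without proof, so there is no proof of record to compare against; judged on its own, your valuation argument is the standard and essentially correct way to establish it. Defining $\nu(p)$ as the minimal total degree of a nonzero monomial, reading $h(0,\vec{0})=0$ as $\nu(h)\geq 1$, and deducing $\nu(h^j)\geq j$ from $\nu(pq)\geq\nu(p)+\nu(q)$ (immediate from the Cauchy product) is exactly the right mechanism: the coefficient of $\subsFPSVarFor{f}{X}{h}$ at any multi-index $\tau$ over $(X,\vec{Y})$ then receives contributions only from $j\leq\abs{\tau}$, which yields well-definedness and the coefficient formula in one stroke.

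One wrinkle you should tidy up is the indexing. In the lemma, $i$ ranges over the $X$-degree only, so $f(j)$ and $h^j(i)$ are elements of $\fpsring{\vec{Y}}$, whereas your $\tau$ is a full multi-index over $(X,\vec{Y})$; the expression $f(j)\,h^j(\tau)$ then conflates ``the $\tau$-coefficient of the product $f(j)h^j$'' with ``the series $f(j)$ times the scalar $h^j(\tau)$''. The object you actually need is the convolution $(f(j)h^j)(\tau)=\sum_{\tau_1+\tau_2=\tau}f(j)(\tau_1)\,(h^j)(\tau_2)$, and your bound still applies to it because only indices $\tau_2\leq\tau$, hence $\abs{\tau_2}\leq\abs{\tau}$, occur there. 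Note also that your coefficient-wise, full-multi-index reading is the robust one: grouped by $X$-degree alone, as the lemma literally states it, the sum $\sum_{j}f(j)h^j(i)$ can have infinitely many nonzero summands in $\fpsring{\vec{Y}}$ (take $h=Y$ and $i=0$, where $h^j(0)=Y^j\neq 0$ for every $j$), even though it remains well-defined coefficient by coefficient. So your argument in fact proves the statement in the form in which it is actually used; just make the passage between the two groupings explicit rather than leaving it as a parenthetical.
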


It can be shown that such well-defined substitutions preserve the arithmetic operations, i.e., they are ring homomorphisms.
For example, it holds that
\[
fg \eeq 1 \qqimplies \subsFPSVarFor{f}{X}{h} \subsFPSVarFor{g}{X}{h} \eeq 1
\]
from which it follows that substitution preserves rational and other algebraic closed forms:
\[
    f \eeq \frac{p}{q} \qqimplies \subsFPSVarFor{f}{X}{h} \eeq \frac{\subsFPSVarFor{p}{X}{h}}{\subsFPSVarFor{q}{X}{h}} ~.
\]
For example, if we know that $(1-X)^{-1}$ is a closed form of $1 + X + X^2 + \ldots$, then $(1-X^2)^{-1}$ is a correct closed form of $1 + X^2 + X^4 + \ldots$.
\twk{include in future version: proof that substitution is homomorphism}


\subsubsection{Substitution by 1.}
Somewhat contrary to our \enquote{FPS aren't functions}-philosophy, we occasionally have to substitute an indeterminate by the constant $1$.
In the case of PGF, this corresponds to computing the marginal distribution in all \emph{other} variables -- a key operation necessary for executing assignments such as $\asgn{\progvar{x}}{5}$ (see \Cref{sec:pgfsem}).
However, as explained above, such substitutions are not well-defined for general FPS.
Therefore, we first identify a class of FPS where substitution by $1$ is well-defined: the \emph{$X$-absolutely convergent} FPS.

Let $f$ be an FPS in $\fpsring{X, \vec{Y}}$.
We say that $f$ is $X$-absolutely convergent if for all $\sigma \in \N^{\mathbf{Y}}$, the infinite series
\[
\sum_{j =0 }^\infty f(j, \sigma) 
\]
is absolutely convergent\footnote{A series $\sum_{i=0}^\infty a_i$ is called absolutely convergent if the sequence $(\sum_{i=0}^n |a_i|)_{n \geq 0}$ converges. Note that the sums in this notation are not \enquote{formal} as in the FPS notation; they refer to actual series and limits in $\mathbb{R}$.}.
If $f$ is $X$-absolutely convergent, then it makes good sense to define the substitution $\subsFPSVarFor{f}{X}{1}$ as $\subsFPSVarFor{f}{X}{1}(0,\sigma) = \sum_{j =0 }^\infty f(j, \sigma)$ and $\subsFPSVarFor{f}{X}{1}(i,\sigma) = 0$ for all $i > 0$ and $\sigma \in \N^{\mathbf{Y}}$.
\twk{in future version: show that the set of $X$-absolutely convergent FPS is a subring of $\fpsring{X, \vec{Y}}$}

For example, $f = \frac 1 2 + \frac 1 4 X + \frac 1 8 X^2 + \ldots$ and $g = 2-X$ are $X$-absolutely convergent, and we have $\subsFPSVarFor{f}{X}{1} = \subsFPSVarFor{g}{X}{1} = 1$.
Note that for finite-support (polynomial) FPS such as $g$, the substitution $\subsFPSVarFor{g}{X}{1}$ is an effective operation: simply replace all occurrences of $X$ in $g$ by $1$ and simplify the resulting term.

In general, all PGF are $X$-absolutely convergent for all their indeterminates $X$.
Second-order PGF (SOP, see \Cref{sec:loops}) are $X$-absolutely convergent for all non-meta indeterminates $X$; however, they are in general not $U$-absolutely convergent for meta indeterminates $U$.

The next lemma shows that substitution by $1$ can also be effectively performed on FPS for which a rational closed form is known:

\begin{lemma}
    \label{thm:subsByOne}
    Let $f$ and $g$ be FPS in $\fpsring{X, \vec{Y}}$ and suppose that both are $X$-absolutely convergent.
    Then,
    \[
    fg = 1 \qimplies \subsFPSVarFor{f}{X}{1} \subsFPSVarFor{g}{X}{1} = 1 ~.
    \]
\end{lemma}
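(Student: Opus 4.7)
The plan is to compare the $\vec{Y}^\tau$-coefficient of $\subsFPSVarFor{f}{X}{1} \cdot \subsFPSVarFor{g}{X}{1}$ with the corresponding coefficient of $fg = 1$, using $X$-absolute convergence to justify rearranging an a priori conditionally-convergent double sum. Since, by definition, $\subsFPSVarFor{f}{X}{1}$ and $\subsFPSVarFor{g}{X}{1}$ are supported only on $X^0$, it suffices to show that their Cauchy product satisfies $(\subsFPSVarFor{f}{X}{1} \cdot \subsFPSVarFor{g}{X}{1})(0, \tau) = [\tau = \vec{0}]$ for every $\tau \in \N^{|\vec{Y}|}$.

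First I would fix $\tau$ and expand using the Cauchy product together with the definition of substitution by $1$:
\[
    (\subsFPSVarFor{f}{X}{1} \cdot \subsFPSVarFor{g}{X}{1})(0,\tau)
    \eeq
    \sum_{\tau_1 + \tau_2 = \tau} \Bigl(\sum_{i=0}^\infty f(i, \tau_1)\Bigr) \Bigl(\sum_{j=0}^\infty g(j, \tau_2)\Bigr) ~.
\]
The outer sum is a \emph{finite} real-valued sum (finitely many decompositions $\tau_1 + \tau_2 = \tau$ in $\N^{|\vec{Y}|}$), while each factor inside is an absolutely convergent series in $\R$ by the $X$-absolute convergence hypothesis on $f$ and $g$. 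By the classical theorem on products of absolutely convergent series (Mertens/Cauchy), each factored product equals both $\sum_{i,j \geq 0} f(i,\tau_1) g(j,\tau_2)$ and its Cauchy reshuffle $\sum_{n \geq 0} \sum_{i+j=n} f(i,\tau_1) g(j,\tau_2)$.

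Next I would pull the finite sum over $\tau_1 + \tau_2 = \tau$ inside the outer sum over $n$, obtaining
\[
    \sum_{n=0}^{\infty} \sum_{\tau_1 + \tau_2 = \tau} \sum_{i+j=n} f(i, \tau_1) g(j, \tau_2)
    \eeq
    \sum_{n=0}^\infty (fg)(n, \tau) ~,
\]
where the inner double sum is precisely the $(X^n \vec{Y}^\tau)$-coefficient of $fg$ by definition of FPS multiplication. Since $fg = 1$, we have $(fg)(n, \tau) = [n=0]\cdot[\tau = \vec{0}]$, and the remaining infinite sum collapses to $[\tau = \vec{0}]$, finishing the argument.

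The main obstacle, and the only place analytic reasoning enters, is the rearrangement step: I must be careful to invoke absolute convergence \emph{before} swapping the order of the infinite summations over $n$ and over $(\tau_1, \tau_2)$, and before reassembling the Cauchy-product form. The fact that $\tau_1 + \tau_2 = \tau$ has only finitely many solutions in $\N^{|\vec{Y}|}$ is what keeps the rearrangement elementary—no Fubini for double \emph{infinite} sums is needed, only the standard theorem on products of absolutely convergent series applied finitely many times. Everything else is routine manipulation of formal-sum definitions, so I do not anticipate further difficulty.
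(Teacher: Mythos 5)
Your proposal is correct and follows essentially the same route as the paper's proof: expand the $\vec{Y}^\tau$-coefficient of the product via the Cauchy product of FPS, invoke the classical theorem on products of absolutely convergent real series for each pair $(\tau_1,\tau_2)$, interchange the resulting series with the finite sum over decompositions of $\tau$, and collapse using $fg=1$. The only (cosmetic) difference is that you identify the inner double sum directly as $(fg)(n,\tau)$, whereas the paper splits off the $n=0$ term before applying the coefficient identities; both are the same argument.
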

\begin{proof}
    Intuitively, this observation is a straightforward application of the Cauchy product formula for absolutely converging power series.
    Formally, let $fg = 1$.
    Then by definition of the FPS product we have
    \begin{align}
    \label{eq:isone} f(0)g(0) & \eeq 1, \qand \\
    \label{eq:zeroterms} \sum_{i_1 + i_2 = i} f(i_1) g(i_2) & \eeq 0  \quad \text{for all } i>0 ~.
    \end{align}
    We now consider the product $\subsFPSVarFor{f}{X}{1} \subsFPSVarFor{g}{X}{1}$ and show that it is also equal to $1$:
    Let $\sigma \in \N^\mathbf{Y}$ be arbitrary.
    \begin{align*}
    & (\subsFPSVarFor{f}{X}{1} \subsFPSVarFor{g}{X}{1})(0, \sigma) \\
    \eeq & \sum_{\sigma_1 + \sigma_2 = \sigma}\subsFPSVarFor{f}{X}{1}(0, \sigma_1) \subsFPSVarFor{g}{X}{1}(0, \sigma_2) \tag{by def. of the FPS product}\\
    \eeq & \sum_{\sigma_1 + \sigma_2 = \sigma} \left(\sum_{j=0}^\infty f(j,\sigma_1)\right) \left(\sum_{j=0}^\infty g(j,\sigma_2) \right) \tag{by def. of substitution by $1$}\\
    \eeq & \sum_{\sigma_1 + \sigma_2 = \sigma} \sum_{j=0}^\infty \sum_{j_1 + j_2 = j}f(j_1, \sigma_1) g(j_2, \sigma_2) \tag{standard Cauchy product formula for absolutely converging series}\\
    \eeq & \sum_{\sigma_1 + \sigma_2 = \sigma} \left((f(0)g(0))(\sigma_1 + \sigma_2) +  \sum_{j=1}^\infty \sum_{j_1 + j_2 = j} (f(j_1) g(j_2))(\sigma_1 + \sigma_2)  \right)\tag{rewriting}\\
    \eeq & \sum_{\sigma_1 + \sigma_2 = \sigma} \left((f(0)g(0))(\sigma) +  \sum_{j=1}^\infty 0  \right)\tag{by \eqref{eq:zeroterms}}\\
    \eeq & \begin{cases}
    1 & \text{if } \sigma = \vec{0} \\
    0 & \text{else.}
    \end{cases}\tag{by \eqref{eq:isone}}\\
    \end{align*}
    Thus $\subsFPSVarFor{f}{X}{1} \subsFPSVarFor{g}{X}{1} = 1$.
\end{proof}

\Cref{thm:subsByOne} means that substitution of $X$ by $1$ also preserves inverses (provided that the involved FPS are $X$-absolutely convergent).
For example,
\begin{align*}
    \subsFPSVarFor{\frac{3}{2 X Y - 5 X - 4 Y + 10}}{X}{1}
    \eeq 
    \frac{3}{5 - 2Y}
    \eeq 
    \frac{0.6}{1 - 0.4Y}
    ~,
\end{align*}
implying that the left hand side is a PGF whose marginal in $Y$ is a geometric distribution with parameter $0.4$.

\twk{In future version say that subs by 1 is ring hom in abs converging subring}

Now let $f$ and $h$ be PGF.
We can also define substitution of $X$ in $f$ by the whole PGF $h$ (this is needed for the $\mathtt{iid}$-statement, see \Cref{sec:explainTransformer}).
Technically, we can recycle the two kinds of substitutions discussed above (substitution by $h$ with $h(0,\vec{0}) = 0$ and substitution by $1$):
\begin{corollary}
    Let $g$ and $h$ be PGF in $\fpsring{X, \vec{Y}}$.
    Let $h_T = h - h(0,\vec{0}) + h(0,\vec{0})T$ for a temporary auxiliary indeterminate $T$.
    Then
    \[
    \subsFPSVarFor{g}{X}{h} \ccoloneqq \subsFPSVarFor{\subsFPSVarFor{g}{X}{h_T}}{T}{1}
    \]
    is a well-defined PGF.
\end{corollary}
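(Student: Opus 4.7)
The plan is to verify the two nested substitutions in sequence, invoking the two preceding substitution lemmas.

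First, I would check that the inner substitution $\subsFPSVarFor{g}{X}{h_T}$ is well-defined. Viewed as an FPS in $\fpsring{X, \vec{Y}, T}$, the series $h_T$ has combined constant coefficient $h_T(0, \vec{0}, 0) = h(0, \vec{0}) - h(0, \vec{0}) + 0 = 0$, so the earlier substitution lemma (for FPS with vanishing constant coefficient) applies and yields $\subsFPSVarFor{g}{X}{h_T} \in \fpsring{X, \vec{Y}, T}$.

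Next comes the substantial step: establishing that $G := \subsFPSVarFor{g}{X}{h_T}$ is $T$-absolutely convergent, so that \Cref{thm:subsByOne} justifies the outer substitution. Since $g$ and $h$ are PGFs, all coefficients of $G$ are non-negative and absolute convergence reduces to plain convergence. Writing $c = h(0, \vec{0})$ and $\tilde{h} = h - c$, the binomial expansion
\[
h_T^i \eeq \sum_{k=0}^{i} \binom{i}{k} c^k T^k \tilde{h}^{i-k}
\]
collapses under $T \to 1$ to $(c + \tilde{h})^i = h^i$. For fixed $(a, b)$, Tonelli (all entries non-negative) lets me swap $\sum_k$ with the implicit $i$-summation of the substitution, so the total $T$-mass $\sum_k G(a, b, k)$ equals the coefficient of $X^a \vec{Y}^b$ in $\sum_i g_i(\vec{Y}) \cdot h^i$, where $g_i \in \fpsring{\vec{Y}}$ is the $X^i$-coefficient of $g$. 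Since each $h^i$ is a PGF (mass $\leq 1$), this quantity is bounded by $\sum_{i, \sigma} g(i, \sigma) \leq 1$.

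Finally, \Cref{thm:subsByOne} then grants that $\subsFPSVarFor{G}{T}{1} \in \fpsring{X, \vec{Y}}$ is well-defined, and PGF-ness follows immediately: non-negativity is inherited from $g$ and $h$, and summing the preceding bound over $(a, b)$ shows the total mass is at most $1$. The main obstacle is the $T$-absolute convergence step: the role of the $h_T$-construction is precisely to separate contributions of distinct constant-term powers $c^k$ into different $T$-monomials, rendering the inner substitution formally sound despite $h$ having a nonzero constant coefficient; the outer substitution $T \to 1$ then recombines these contributions, with convergence secured by the PGF mass bound.
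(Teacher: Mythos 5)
Your proof is correct and follows essentially the same route as the paper's: the inner substitution is licensed by $h_T$ having zero constant coefficient, and the outer substitution by $T$-absolute convergence of the resulting series, with the mass bound giving PGF-ness. The only difference is that the paper disposes of the middle step in one line by citing that PGF are closed under substitution (and hence absolutely convergent in every indeterminate), whereas you establish that fact explicitly via the binomial expansion of $h_T^i$ and Tonelli --- a welcome amount of extra rigor, since the paper states that closure property without proof.
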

For the proof it suffices to note that since $h_T$ is still a PGF, $\subsFPSVarFor{g}{X}{h_T}$ is a PGF as well because PGF are closed under substitution.

\subsubsection{Derivatives.}
Differentiation is another basic algebraic operation on FPS.
Again, we consider a distinguished indeterminate $X$ and the FPS ring $\fpsring{X,\vec{Y}}$.
The formal (partial) derivative of an FPS $f = \fsum_{i \geq 0} f(i)X^i$ is defined as
\[
    \partial_X f \eeq \fsum_{i \geq 0} i \cdot f(i)X^{i-1} ~.
\]
Recall that the coefficients $f(i)$ are FPS in $\fpsring{\vec{Y}}$.
One application of derivatives in this paper is the extraction of these coefficients given some closed form of $f$:
\[
    f(i) \eeq \frac{1}{i!} \subsFPSVarFor{(\partial_X^i f)}{X}{0} ~.
\]
It can be shown that formal differentiation satisfies the familiar properties w.r.t.\ to sums, products and inverses, i.e.,
$\partial_X (f + g) = \partial_X f + \partial_X g$,
$\partial_X (fg) = \partial_X f + f \partial_X g$, and
$\partial_X \frac{1}{f} =  -\partial_X f / f^2$.
Notably, the derivative of an FPS expressed in a closed form is also in closed form.
This observation is crucial for our approach as it allows us to implement a closed form preserving PGF semantics for conditional branching.

\clearpage\section{Further Examples}
\label{apx:casestudies}

We list some additional examples which can be automatically verified using the our tool $\toolname$.
The timings are CPU-timings measured with the GiNaC backend without counting for parsing the input programs.
Even more examples (including features not discussed in this paper such as exact PGF-based Bayesian inference) can be found in the project files of the artifact.

\begin{example}[Geometric Distribution Generator]
\label{apx:exmp:geom}
This program generates a geometric distribution on $\progvar{c}$ using a while loop and fair coin flips.
The equivalence check using \texttt{GiNaC} as a backend was performed in 7.4ms.
\begin{figure}[h]
	\begin{adjustbox}{max width=\textwidth, max height=35mm}
		\begin{minipage}{0.6\textwidth}
			\begin{align*}
					& \WHILE{\progvar{x} = 1}\\
					& \qquad \PCHOICE{\ASSIGN{\progvar{x}}{0}}{1/2}{\ASSIGN{\progvar{c}}{\progvar{c} + 1}}\\
					&\}
			\end{align*}
		\end{minipage}
		\qquad
		\begin{minipage}{0.3\textwidth}
			\begin{align*}
				& \IF{\progvar{x} = 1}\{\\
				& \qquad \incrasgn{\progvar{c}}{\geometric{1/2}}\fatsemi\\
				& \qquad \ASSIGN{\progvar{x}}{0}\\
				& \}
			\end{align*}
		\end{minipage}
	\end{adjustbox}
	\label{fig:geometric_generator}
\end{figure}
\end{example}

\begin{example}[$n$-Geometric Distribution Generator]
\label{apx:exmp:ngeo}
This example is an equivalent version of the path generation program shown in the introduction.
$\toolname$ was able to automatically verify the invariant with the \texttt{GiNaC} backend in 4.8ms.
\begin{figure}[h]
	\begin{adjustbox}{max width=\textwidth, max height=35mm}
		\begin{minipage}{0.6\textwidth}
			\begin{align*}
				& \WHILE{\progvar{n} > 0} \\
				& \qquad \PCHOICE{\asgn{\progvar{n}}{\progvar{n} - 1}}{1/2}{\asgn{\progvar{c}}{\progvar{c} + 1}} \\
				& \}
			\end{align*}
		\end{minipage}
		\qquad
		\begin{minipage}{0.3\textwidth}
			\begin{align*}
				& \incrasgn{\progvar{c}}{\iid{\geometric{1/2}}{\progvar{n}}}\fatsemi\\
				& \ASSIGN{\progvar{n}}{0}
			\end{align*}
		\end{minipage}
	\end{adjustbox}
	\label{lab:kgeo}
\end{figure}
\end{example}

\begin{example}[IID-Sampling Statement]
\label{apx:exmp:iid_sampling}
This example illustrates the semantics of the $\mathtt{iid}$-statement (see \Cref{sec:iid}).
The example was automatically verified using the \texttt{GiNaC} backend in 14.5ms.
\begin{figure}[h]
	\centering
	\begin{adjustbox}{max width=\textwidth, max height=35mm}
		\begin{minipage}{0.6\textwidth}
			\begin{align*}
				& \WHILE{\progvar{n} > 0}\\
				&\qquad \incrasgn{\progvar{m}}{\mathtt{unif}(1,6)}\fatsemi\\
				&\qquad \ASSIGN{\progvar{n}}{\progvar{n} -1}\\
				&\}
			\end{align*}
		\end{minipage}
		\qquad
		\begin{minipage}{0.3\textwidth}
			\begin{align*}
				& \IF{\progvar{n} > 0}\{\\
				&\qquad \incrasgn{\progvar{m}}{\iid{\mathtt{unif}(1,6)}{\progvar{n}}}\fatsemi\\
				&\qquad \ASSIGN{\progvar{n}}{0}\\
				&\}
			\end{align*}
		\end{minipage}
	\end{adjustbox}
	\label{fig:iid_loops}
\end{figure}
\end{example}

\begin{example}[Random Walk]
\label{apx:exmp:random_walk}
This example implements a symmetric random walk on $\N$ which terminates upon reaching $\progvar{s} = 0$ for the first time.
It is well known that the program is UAST (even though the expected number of steps until termination is infinite regardless of the starting position).
We are able to verify that the runtime is distributed as the sum of $\progvar{s}$ i.i.d.\ samples from a linearly transformed Catalan distribution.
Using the \texttt{GiNaC} backend, the time to prove the equivalence was 9.8ms.
\begin{figure}[h]
	\centering
	\begin{adjustbox}{max width=\textwidth, max height=35mm}
		\begin{minipage}{0.6\textwidth}
			\begin{align*}
				& \WHILE{\progvar{s} > 0} \\
				& \qquad \PCHOICE{\ASSIGN{\progvar{s}}{\progvar{s}+1}}{1/2}{\ASSIGN{\progvar{s}}{\progvar{s} - 1}}\fatsemi\\
				& \qquad \ASSIGN{\progvar{c}}{\progvar{c} +1}\\
				&\}
			\end{align*}
		\end{minipage}
		\qquad
		\begin{minipage}{0.3\textwidth}
			\begin{align*}
				& \IF{\progvar{s} > 0} \{\\
				& \qquad \incrasgn{\progvar{c}}{\iid{2 \cdot \mathtt{catalan}(\sfrac{1}{2}) + 1}{\progvar{s}}}\fatsemi\\
				& \qquad \ASSIGN{\progvar{s}}{0}\\
				& \}
			\end{align*}
		\end{minipage}
	\end{adjustbox}
	\label{fig:random_walk}
\end{figure}
\end{example}

\begin{example}[Knuth-Yao Die]
	\label{apx:exmp:knuth_yao}
	The program in \Cref{fig:kydie} models the Knuth--Yao algorithm for simulating a fair six-sided die using coin flips only.
	$\pplname$ verifies the loop against its specification in 43.9ms using the \texttt{GiNaC} backend.
	\begin{figure}[h]
		\begin{adjustbox}{max width = \textwidth, max height=60mm}
			\begin{minipage}{0.4\textwidth}
				\begin{align*}
				& \WHILE{\progvar{s} < 7}\\
				& \qquad \mathtt{switch}~\progvar{s}~ \{\\
				& \qquad \qquad \pcase~ 0\colon \\
				& \qquad \qquad \qquad \PCHOICE{\ASSIGN{\progvar{s}}{1}}{1/2}{\ASSIGN{\progvar{s}}{2}}\fatsemi\\
				& \qquad \qquad \qquad \pbreak\fatsemi\\
				& \qquad \qquad \pcase~ 1\colon \\
				& \qquad \qquad \qquad \PCHOICE{\ASSIGN{\progvar{s}}{3}}{1/2}{\ASSIGN{\progvar{s}}{4}}\fatsemi\\
				& \qquad \qquad \qquad \pbreak\fatsemi\\
				& \qquad \qquad \pcase~ 2\colon \\
				& \qquad \qquad \qquad \PCHOICE{\ASSIGN{\progvar{s}}{5}}{1/2}{\ASSIGN{\progvar{s}}{6}}\fatsemi\\
				& \qquad \qquad \qquad \pbreak\fatsemi\\
				& \qquad \qquad \pcase~ 3\colon \\
				& \qquad \qquad \qquad \PCHOICE{\ASSIGN{\progvar{s}}{1}}{1/2}{\compose{\ASSIGN{\progvar{s}}{7}}{\ASSIGN{\progvar{die}}{1}}}\fatsemi\\
				& \qquad \qquad \qquad \pbreak\fatsemi\\
				& \qquad \qquad \pcase~ 4\colon \\
				& \qquad \qquad \qquad \PCHOICE{\compose{\ASSIGN{\progvar{s}}{7}}{\ASSIGN{\progvar{die}}{2}}}{1/2}{\compose{\ASSIGN{\progvar{s}}{7}}{\ASSIGN{\progvar{die}}{3}}}\fatsemi\\
				& \qquad \qquad \qquad \pbreak\fatsemi\\
				& \qquad \qquad \pcase~ 5\colon \\
				& \qquad \qquad \qquad \PCHOICE{\compose{\ASSIGN{\progvar{s}}{7}}{\ASSIGN{\progvar{die}}{4}}}{1/2}{\compose{\ASSIGN{\progvar{s}}{7}}{\ASSIGN{\progvar{die}}{5}}}\fatsemi\\
				& \qquad \qquad \qquad \pbreak\fatsemi\\
				& \qquad \qquad \pcase~ 6\colon \\
				& \qquad \qquad \qquad \PCHOICE{\ASSIGN{\progvar{s}}{2}}{1/2}{\compose{\ASSIGN{\progvar{s}}{7}}{\ASSIGN{\progvar{die}}{6}}}\fatsemi\\
				& \qquad \qquad \qquad \pbreak\fatsemi\\
				&\qquad \}\\
				& \}
				\end{align*}
			\end{minipage}
			\qquad
			\begin{minipage}{0.5\textwidth}
				\begin{align*}
				& \mathtt{switch}~ \progvar{s}~\{\\
				& \qquad \pcase~ 0\colon\\
				& \qquad \qquad \ASSIGN{\progvar{s}}{7}\fatsemi\\
				& \qquad \qquad \ASSIGN{\progvar{die}}{\mathtt{unif}(1,6)}\fatsemi\\
				& \qquad \qquad \pbreak\fatsemi\\
				& \qquad \pcase~ 1\colon\\
				& \qquad \qquad \ASSIGN{\progvar{s}}{7}\fatsemi\\
				& \qquad \qquad \ASSIGN{\progvar{die}}{\mathtt{unif}(1,3)}\fatsemi\\
				& \qquad \qquad \pbreak\fatsemi\\
				& \qquad \pcase~ 2\colon\\
				& \qquad \qquad \ASSIGN{\progvar{s}}{7}\fatsemi\\
				& \qquad \qquad \ASSIGN{\progvar{die}}{\mathtt{unif}(4,6)}\fatsemi\\
				& \qquad \qquad \pbreak\fatsemi\\
				& \qquad \pcase~ 3\colon\\
				& \qquad \qquad \ASSIGN{\progvar{s}}{7}\fatsemi\\
				& \qquad \qquad \PCHOICE{\ASSIGN{\progvar{die}}{1}}{2/3}{\PCHOICE{\ASSIGN{\progvar{die}}{2}}{1/2}{\ASSIGN{\progvar{die}}{3}}}\fatsemi\\
				& \qquad \qquad \pbreak\fatsemi\\
				& \qquad \pcase~ 4\colon\\
				& \qquad \qquad \ASSIGN{\progvar{s}}{7}\fatsemi\\
				& \qquad \qquad \PCHOICE{\ASSIGN{\progvar{die}}{2}}{1/2}{\ASSIGN{\progvar{die}}{3}}\fatsemi\\
				& \qquad \qquad \pbreak\fatsemi\\
				& \qquad \pcase~ 5\colon\\
				& \qquad \qquad \ASSIGN{\progvar{s}}{7}\fatsemi\\
				& \qquad \qquad \PCHOICE{\ASSIGN{\progvar{die}}{4}}{1/2}{\ASSIGN{\progvar{die}}{5}}\fatsemi\\
				& \qquad \qquad \pbreak\fatsemi\\
				& \qquad \pcase~ 6\colon\\
				& \qquad \qquad \ASSIGN{\progvar{s}}{7}\fatsemi\\
				& \qquad \qquad \PCHOICE{\ASSIGN{\progvar{die}}{6}}{2/3}{\PCHOICE{\ASSIGN{\progvar{die}}{4}}{1/2}{\ASSIGN{\progvar{die}}{5}}}\fatsemi\\
				& \qquad \qquad \pbreak\fatsemi\\
				&\}
				\end{align*}
			\end{minipage}
		\end{adjustbox}
		\caption{Knuth-Yao Die encoded as a \pgcl program together with its loop-free invariant.}
		\label{fig:kydie}
	\end{figure}
\end{example}

\begin{example}[Sequential Loops]
	\label{apx:exmp:seq_loops}
	Our technique also allows reasoning about \emph{sequential} loops by providing individual specifications for each occurring loop, as done in \Cref{fig:seq_loops}.
	The program presented is the sequential composition of \Cref{apx:exmp:iid_sampling} and \Cref{apx:exmp:random_walk}.
	\begin{figure}
		\begin{adjustbox}{max width=\textwidth, max height=35mm}
			\begin{minipage}[t]{0.45\textwidth}
				\begin{align*}
					& \WHILE{0 < \progvar{n}}\\
					& \qquad \incrasgn{\progvar{m}}{\mathtt{unif}(1,6)}\fatsemi\\
					& \qquad \decr{\progvar{n}}\\
					&\}\\
					& \WHILE{\progvar{m} > 0}\\
					& \qquad \PCHOICE{\incrasgn{\progvar{m}}{1}}{1/2}{\decr{\progvar{m}}}\fatsemi\\
					& \qquad \incrasgn{\progvar{c}}{1}\fatsemi\\
					&\}
				\end{align*}
			\end{minipage}
			\qquad
			\begin{minipage}[t]{0.45\textwidth}
				\begin{align*}
					& \comment{first invariant}\\
					& \IF{0 < \progvar{n}}\{\\
					& \qquad \incrasgn{\progvar{m}}{\iid{\mathtt{unif}(1,6)}{\progvar{n}}}\fatsemi\\
					& \qquad \ASSIGN{\progvar{n}}{0}\\
					&\} \\
                    & \comment{second invariant}\\
                    & \IF{\progvar{m} > 0} \{\\
                    & \qquad \incrasgn{\progvar{c}}{\iid{2 \cdot \mathtt{catalan}(\sfrac{1}{2}) + 1}{\progvar{m}}}\fatsemi\\
                    & \qquad \ASSIGN{\progvar{m}}{0}\\
                    & \}
				\end{align*}
			\end{minipage}
		\end{adjustbox}
	\caption{The two sequentially composed \while loops on the right are equivalent to the sequential composition of their specifications.}
	\label{fig:seq_loops}
	\end{figure}
\end{example}
\clearpage\section{Proofs Omitted in Main Text}

We do not present the proofs in the exact order in which they appear in the main text; instead, we first present the proof of \Cref{thm:soptrans} from which most other results follow as corollaries.

\subsection{Proof of \Cref{thm:soptrans}}
\label{proof:soptrans}

\noindent We restate \Cref{fig:synsem} for convenience.
\begin{table}[h]
    \centering
    \synsemtab
\end{table}

\restateSoptrans*

Recall that $\sem{P} \colon \sopdom \to \sopdom$ is defined inductively on the structure of $P$ according to the second column of \Cref{fig:synsem}.

We prove \Cref{thm:soptrans} with the help of an extra definition -- the notion of \emph{admissible} SOP-transformers -- and several lemmas.
For the whole proof, we fix $k$-many indeterminates $\vec{X} = (X_1,\ldots,X_k)$ corresponding to program variables and $l$-many \enquote{meta-indeterminates} $\vec{U} = (U_1,\ldots,U_l)$.
Therefore, in the whole section, the domains $\pgfdom$ and $\sopdom$ are subsets of $\fpsring{\vec{X}}$ and $\fpsring{\vec{U},\vec{X}}$, respectively.

\begin{definition}[Admissible SOP-transformer]
    \label{def:admissibleSOPTransformer}
    A function $\psi \colon \sopdom \to \sopdom$ is called \emph{admissible} if
    \begin{itemize}
        \item $\psi$ is \emph{continuous} on the $\omega$-cpo $\sopdom$.
        \item $\psi$ is \emph{linear} in the following sense:
        For all $f_1, f_2 \in \sopdom$ and $p \in [0,1]$
        \[
            pf_1 + f_2 \in \sopdom
            \qimplies
            \psi(pf_1 + f_2)
            \eeq
            p \psi(f_1) + \psi(f_2)
            ~.
        \]
        \item $\psi$ is \emph{homogeneous w.r.t.\ meta indeterminates}, i.e., for all $f \in \sopdom$ and $\tau \in \N^l$,
        \[
            \psi(f \vec{U}^\tau)
            \eeq
            \psi(f) \vec{U}^\tau
            ~.
        \]
        \item $\psi$ \emph{preserves PGF}, i.e., $g \in \pgfdom$ implies $\psi(g) \in \pgfdom$.
    \end{itemize}
\end{definition}

The general idea of the proof is to show that $\sem{P}$ is also an admissible SOP-transformer for every $\pplname$-program $P$.
To this end, we show that (i) the \enquote{atomic} FPS transformations like multiplication, substitution, etc., that are used in the definition of $\sem{P}$ are admissible, and (ii) admissible transformations are closed under composition and taking limits (the latter is needed for loops).
The claims from \Cref{thm:soptrans} then follow more or less directly from the admissibility of $\sem{P}$.

\begin{lemma}
    \label{thm:compAdmissible}
    Let $\psi_1$ and $\psi_2$ be admissible SOP-transformers.
    Their composition $\psi_1 \circ \psi_2$ is admissible as well.
\end{lemma}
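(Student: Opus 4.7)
The plan is to verify each of the four clauses of \Cref{def:admissibleSOPTransformer} for the composition $\psi = \psi_1 \circ \psi_2$, in each case deriving the corresponding property of $\psi$ directly from the matching property of $\psi_1$ and $\psi_2$. First, $\psi$ is a well-defined map $\sopdom \to \sopdom$ because $\psi_2$ maps $\sopdom$ into $\sopdom$ and then $\psi_1$ does the same, so the composition lands in $\sopdom$ as required.

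For continuity, I would invoke the standard fact that the composition of two continuous functions on an $\omega$-cpo is continuous: for any chain $\chain \subseteq \sopdom$,
\[
\psi(\sup \chain) \eeq \psi_1(\psi_2(\sup \chain)) \eeq \psi_1(\sup \psi_2(\chain)) \eeq \sup \psi_1(\psi_2(\chain)) \eeq \sup \psi(\chain),
\]
using continuity of $\psi_2$ in the second step (noting that $\psi_2(\chain)$ is again an ascending chain in $\sopdom$ since continuous maps are monotone) and continuity of $\psi_1$ in the third. PGF preservation is also immediate: if $g \in \pgfdom$, then $\psi_2(g) \in \pgfdom$ by admissibility of $\psi_2$, and hence $\psi_1(\psi_2(g)) \in \pgfdom$ by admissibility of $\psi_1$. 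Homogeneity is a one-liner: $\psi(f\vec{U}^\tau) = \psi_1(\psi_2(f\vec{U}^\tau)) = \psi_1(\psi_2(f)\vec{U}^\tau) = \psi_1(\psi_2(f))\vec{U}^\tau = \psi(f)\vec{U}^\tau$, applying homogeneity of $\psi_2$ and then of $\psi_1$.

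The one place that needs a little care is linearity, since the hypothesis of linearity requires the argument to be a genuine element of $\sopdom$. Suppose $pf_1 + f_2 \in \sopdom$ with $p \in [0,1]$. By linearity of $\psi_2$ we have $\psi_2(pf_1+f_2) = p\psi_2(f_1) + \psi_2(f_2)$, and the left-hand side lies in $\sopdom$ because $\psi_2$ maps $\sopdom$ into $\sopdom$; hence the right-hand side $p\psi_2(f_1) + \psi_2(f_2)$ is also in $\sopdom$, which is precisely what we need to invoke linearity of $\psi_1$:
\[
\psi(pf_1+f_2) \eeq \psi_1\bigl(p\psi_2(f_1) + \psi_2(f_2)\bigr) \eeq p\psi_1(\psi_2(f_1)) + \psi_1(\psi_2(f_2)) \eeq p\psi(f_1) + \psi(f_2).
\]
This completes the verification of all four clauses. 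I do not anticipate any real obstacle; the only subtle step is threading the \enquote{$pf_1 + f_2 \in \sopdom$} side condition through the composition, and this is handled by the observation that $\psi_2$ preserves membership in $\sopdom$.
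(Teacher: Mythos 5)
Your proof is correct and follows essentially the same route as the paper's: compose the four admissibility clauses one by one, using monotonicity of $\psi_2$ to see that the image of a chain is a chain. Your extra care in threading the side condition $p\psi_2(f_1)+\psi_2(f_2)=\psi_2(pf_1+f_2)\in\sopdom$ before invoking linearity of $\psi_1$ is a small refinement the paper glosses over, but the argument is otherwise identical.
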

\begin{proof}
    The proof is mostly standard and goes as follows:
    For continuity, let $f_0 \sqsubseteq f_1 \sqsubseteq \ldots$ be a chain in $\sopdom$.
    We have
    \[
        \sup_{i \geq 0} \psi_1(\psi_2(f_i))
        \eeq
        \psi_1(\sup_{i \geq 0} \psi_2(f_i))
        \eeq
        \psi_1(\psi_2(\sup_{i \geq 0}  f_i))
    \]
    where the first equation holds because $\psi_2(f_0) \sqsubseteq \psi_2(f_1) \sqsubseteq \ldots$ is also a chain in $\sopdom$ by monotonicity of $\psi_2$.
    
    Linearity of the composition follows because
    \[
        \psi_1(\psi_2(pf_1 + f_2))
        \eeq
        \psi_1(p\psi_2(f_1) + \psi_2(f_2)))
        \eeq
        p\psi_1(\psi_2(f_1)) + \psi_1(\psi_2(f_2)))
        ~.
    \]
    
    Homogeneity w.r.t.\ to the meta-indeterminates $\vec{U}$ holds because $\psi_1(\psi_2(f\vec{U}^\tau)) = \psi_1(\psi_2(f)\vec{U}^\tau) = \psi_1(\psi_2(f))\vec{U}^\tau$ for all $\tau \in \N^l$.
    
    PGF preservation is also trivial because both $\psi_1$ and $\psi_2$ preserve PGF.
\end{proof}

\begin{lemma}
    \label{thm:limitAdmissible}
    Let $\psi_0 \sqsubseteq \psi_1 \sqsubseteq \ldots $ be a chain in $(\sopdom \to \sopdom)$ and assume that $\psi_i$ is admissible for all $i \geq 0$.
    Then $\sup_{i \geq 0} \psi_i$ is admissible as well.
\end{lemma}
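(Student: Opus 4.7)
The plan is to set $\psi := \sup_{i \geq 0} \psi_i$, taken pointwise in $(\sopdom \to \sopdom)$, and verify each of the four clauses of admissibility for $\psi$ in turn. Before doing so, I would observe that $\psi$ is well-defined as a map $\sopdom \to \sopdom$: for every $f \in \sopdom$, the sequence $\psi_0(f) \sqsubseteq \psi_1(f) \sqsubseteq \ldots$ is an ascending chain of SOPs (by $\psi_i \sqsubseteq \psi_{i+1}$), and its coefficient-wise supremum is itself an SOP because each coefficient sequence is an ascending chain in the $\omega$-cpo $\pgfdom$ from \cref{sec:pgfdom}. Every subsequent step then reduces to the fact that suprema in $\sopdom$ are coefficient-wise, and hence commute with all the relevant algebraic operations.

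The only genuinely technical step is continuity. For a chain $f_0 \sqsubseteq f_1 \sqsubseteq \ldots$ in $\sopdom$, I would chain equalities $\psi(\sup_j f_j) = \sup_i \psi_i(\sup_j f_j) = \sup_i \sup_j \psi_i(f_j) = \sup_j \sup_i \psi_i(f_j) = \sup_j \psi(f_j)$, where the second equality uses continuity of each $\psi_i$, and the third is the standard commutation of iterated suprema for a monotone double sequence. The required monotonicity holds in $j$ because each $\psi_i$ is continuous hence monotone, and in $i$ because $(\psi_i)$ is a chain in the pointwise order on $(\sopdom \to \sopdom)$; since the order on $\sopdom$ is coefficient-wise, the suprema-swap amounts to the elementary observation that, for a doubly monotone sequence of non-negative reals, the two iterated suprema agree.

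Linearity and meta-homogeneity are then mechanical. For linearity with admissible $pf_1 + f_2 \in \sopdom$, I would write $\psi(pf_1 + f_2) = \sup_i \psi_i(pf_1 + f_2) = \sup_i (p\psi_i(f_1) + \psi_i(f_2)) = p\sup_i \psi_i(f_1) + \sup_i \psi_i(f_2) = p\psi(f_1) + \psi(f_2)$, using linearity of each $\psi_i$ and the fact that coefficient-wise suprema commute with non-negative scalar multiplication and with pointwise sums of non-negative chains. Homogeneity is analogous: multiplication by $\vec{U}^\tau$ acts on the SOP array as an index shift, which trivially commutes with coefficient-wise $\sup$, giving $\psi(f\vec{U}^\tau) = \sup_i \psi_i(f)\vec{U}^\tau = \psi(f)\vec{U}^\tau$. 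Finally, PGF preservation is immediate from $\omega$-completeness of $\pgfdom$: if $g \in \pgfdom$, then each $\psi_i(g) \in \pgfdom$ by admissibility of $\psi_i$, so the chain $(\psi_i(g))_i$ sits in $\pgfdom$ and its supremum $\psi(g)$ lies in $\pgfdom$ as well.

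I expect no substantive obstacle beyond the bookkeeping for the suprema-swap in the continuity step; everything else simply lifts a property that holds uniformly for every $\psi_i$ through the pointwise supremum, by appealing to the coefficient-wise definition of $\sqsubseteq$ and elementary facts about monotone sequences of non-negative reals.
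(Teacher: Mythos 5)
Your proposal is correct and follows essentially the same route as the paper's proof: continuity via the swap of iterated suprema for the doubly monotone family $\psi_i(f_j)$, linearity via coefficient-wise continuity of addition and scalar multiplication, homogeneity as a trivial commutation with the coefficient-wise supremum, and PGF preservation from $\omega$-completeness of $\pgfdom$. Your justification of the suprema swap and of the two-argument additivity step is, if anything, slightly more explicit than the paper's.
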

\begin{proof}
    It is clear that $\sup_{i \geq 0} \psi_i$ exists because $(\sopdom \to \sopdom)$ is an $\omega$-cpo.
    As a general observation, note that for all $f \in \sopdom$ it holds that
    \begin{align}
        \label{eq:obsBracketsSup}
        \sup_{i \geq 0} \psi_i (f) \eeq (\sup_{i \geq 0} \psi_i) (f)
    \end{align}
    due to the way the order is defined on $(\sopdom \to \sopdom)$.
    
    We now show that $\sup_{i \geq 0} \psi_i$ is continuous.
    To this end, let $f_0 \sqsubseteq f_1 \sqsubseteq \ldots$ be a chain in $\sopdom$.
    The following argument proves continuity:
    \begin{align*}
        & (\sup_{i \geq 0} \psi_i) (\sup_{j \geq 0} f_j) \\
        \eeq & \sup_{i \geq 0}\, \psi_i (\sup_{j \geq 0} f_j) \tag{By \eqref{eq:obsBracketsSup}}\\
        \eeq & \sup_{i \geq 0}\, \sup_{j \geq 0} \psi_i (f_j) \tag{$\psi_i$ is continuous by assumption}\\
        \eeq & \sup_{j \geq 0}\, \sup_{i \geq 0} \psi_i (f_j) \tag{suprema commute}\\
        \eeq & \sup_{j \geq 0}\, (\sup_{i \geq 0} \psi_i) (f_j) ~. \tag{By \eqref{eq:obsBracketsSup}}\\
    \end{align*}
    Next, we show that $\sup_{i \geq 0} \psi_i$ is linear.
    To this end, note that addition and \enquote{scalar multiplication} are continuous:
    For all chains $f_0 \sqsubseteq f_1 \sqsubseteq \ldots$ and in $\sopdom$ and $h \in \sopdom$ we have
    \begin{equation}
        \label{eq:addCont}
        \sup_{i\geq 0} (f_i + h) \eeq \sup_{i\geq 0} f_i + h
    \end{equation}
    because both addition of FPS and the order on $\sopdom$ are defined \enquote{coefficient-wise} and addition in $\R$ is continuous.
    For similar reasons, scalar multiplication is continuous, i.e., for all $p \in [0,1]$,
    \begin{equation}
        \label{eq:scalarMultCont}
        \sup_{i\geq 0} pf_i \eeq p \sup_{i\geq 0} f_i ~.
    \end{equation}
    Now let $p \in [0,1]$ and $f, g \in \sopdom$ such that $pg + f \in \sopdom$.
    Then
    \begin{align*}
        & (\sup_{i \geq 0} \psi_i)(pg + f) \\
        \eeq & \sup_{i \geq 0}\, \psi_i(pg + f) \tag{by \eqref{eq:obsBracketsSup}}\\
        \eeq & \sup_{i \geq 0}\, \left[p \psi_i(g) + \psi_i(f)\right] \tag{by assumption} \\
        \eeq & \sup_{i \geq 0}\, p \psi_i(g) + \sup_{i \geq 0}\, \psi_i(f) \tag{continuity of + as in \eqref{eq:addCont}} \\
        \eeq & p \sup_{i \geq 0}\, \psi_i(g) + \sup_{i \geq 0}\, \psi_i(f) \tag{continuity of scalar multiplication as in \eqref{eq:scalarMultCont}} \\
        \eeq & p (\sup_{i \geq 0} \psi_i)(g) + (\sup_{i \geq 0} \psi_i)(f) \tag{by \eqref{eq:obsBracketsSup}}
    \end{align*}
    which proves linearity of $\sup_{i \geq 0} \psi_i$.
    
    To show that $\sup_{i \geq 0} \psi_i$ is homogeneous w.r.t.\ meta-indeterminates we argue as follows:
    \begin{align*}
        & (\sup_{i \geq 0} \psi_i)(f \vec{U}^\tau) \\
        \eeq & \sup_{i \geq 0} \psi_i (f \vec{U}^\tau) \tag{By \eqref{eq:obsBracketsSup}}\\
        \eeq & \sup_{i \geq 0} \psi_i (f) \vec{U}^\tau \tag{$\psi_i$ is admissible}\\
        \eeq & (\sup_{i \geq 0} \psi_i) (f) \vec{U}^\tau \tag{By \eqref{eq:obsBracketsSup}} ~.
    \end{align*}
    
    Finally, for all $g \in \pgfdom$ it holds that $(\sup_{i \geq 0} \psi_i)(g) \in \pgfdom$ because $\sup_{i \geq 0} \psi_i(g)$ is an ascending chain in the $\omega$-cpo $\pgfdom$, and hence the supremum exists and is itself a PGF.
\end{proof}

\begin{lemma}
    \label{thm:sumadmissible}
    Let $\psi_1$ and $\psi_2$ be admissible SOP transformers, and let $\guard$ a guard of the form $\progvar{x} < n$.
    Then the transformer
    \[
        (\psi_1 +_{\guard} \psi_2) \colon \sopdom \to \sopdom,~ f \mmapsto \psi_1(\restrict{f}{\guard}) + \psi_2(\restrict{f}{\neg\guard})
    \]
    is admissible as well.
\end{lemma}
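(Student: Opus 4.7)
The plan is to verify the four conditions of \Cref{def:admissibleSOPTransformer} in turn for $\psi_1 +_\guard \psi_2$, leveraging a key observation: because $\guard$ is a rectangular guard $\progvar{x} < n$, the \emph{filter} operations $F_\guard \colon f \mapsto \restrict{f}{\guard}$ and $F_{\neg\guard} \colon f \mapsto f - \restrict{f}{\guard}$ are themselves well-behaved SOP maps. Namely, $F_\guard$ selects the coefficients of $X^0,\ldots,X^{n-1}$ (where $X$ is the indeterminate corresponding to $\progvar{x}$), which is a finite, coefficient-wise projection that involves none of the meta-indeterminates $\vec{U}$. Hence $F_\guard$ and $F_{\neg\guard}$ are linear (coefficient-wise projections are linear), continuous (suprema in $\sopdom$ are also computed coefficient-wise, so projection onto finitely many coefficients commutes with $\sup$), homogeneous with respect to $\vec{U}$ (the guard acts only on $X$, so multiplication by $\vec{U}^\tau$ factors out), and they map $\pgfdom$ into $\pgfdom$ (the two filters partition a PGF into two non-negative sub-PGFs whose total masses sum to $|g|\leq 1$).

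Given these properties of the filters, continuity, linearity, and $\vec{U}$-homogeneity of $\psi_1 +_\guard \psi_2$ follow by stringing the observations together: for any admissible $\phi$, the composition $\phi \circ F_\bullet$ inherits these three properties from $\phi$ (using the corresponding properties of $F_\bullet$); and the pointwise sum of two maps satisfying these three conditions again satisfies them, since addition of SOP is linear and continuous (\Cref{eq:addCont} in the proof of \Cref{thm:limitAdmissible}) and commutes with multiplication by $\vec{U}^\tau$. None of these steps requires any new ideas beyond what appeared in \Cref{thm:compAdmissible} and \Cref{thm:limitAdmissible}.

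The only delicate point is PGF preservation, since we cannot simply add two PGFs and stay in $\pgfdom$. The plan is to first establish the auxiliary fact that any admissible SOP transformer $\psi$ is \emph{mass-nonincreasing} on $\pgfdom$, i.e., $|\psi(h)| \leq |h|$ for every $h \in \pgfdom$. This is obtained from linearity and PGF preservation: for $h \in \pgfdom$ with $m = |h| > 0$, write $h = m \cdot (h/m)$ where $h/m$ is a PGF of total mass $1$; linearity (with $f_2 = 0$ and $p = m$, noting $\psi(0)=0$ by another application of linearity) gives $\psi(h) = m \cdot \psi(h/m)$, and $\psi(h/m) \in \pgfdom$ yields $|\psi(h)| \leq m = |h|$. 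Applying this to $g_1 := \restrict{g}{\guard}$ and $g_2 := \restrict{g}{\neg\guard}$, which are sub-PGFs with $|g_1| + |g_2| = |g| \leq 1$, we get
\[
  |\psi_1(g_1) + \psi_2(g_2)| \eeq |\psi_1(g_1)| + |\psi_2(g_2)| \lleq |g_1| + |g_2| \lleq 1,
\]
while non-negativity of all coefficients is immediate since $\psi_i(g_i) \in \pgfdom$. Hence $\psi_1(g_1) + \psi_2(g_2) \in \pgfdom$, completing the proof. The mass-nonincreasingness lemma is the main obstacle; everything else reduces to checking that the finite filter $F_\guard$ is itself admissible and that admissible maps compose and add well.
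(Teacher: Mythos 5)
Your proposal is correct and follows essentially the same route as the paper: verify the four admissibility conditions, with continuity, linearity, and $\vec{U}$-homogeneity inherited coefficient-wise from $\psi_1$, $\psi_2$, and the filters, and PGF preservation reduced to the fact that admissible transformers do not increase probability mass. In fact your proof is more complete than the paper's, which merely asserts the mass-nonincreasing property; your derivation of it via the rescaling $h = m\cdot(h/m)$ together with linearity and $\psi(0)=0$ supplies exactly the detail the paper leaves out.
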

\begin{proof}
    Continuity is straightforward.
    Linearity follows easily because $\psi_1$ and $\psi_2$ are linear, and the same applies to homogeneity.
    Finally, PGF preservation applies because neither $\psi_1$ nor $\psi_2$ increase the probability mass of $\restrict{f}{\guard}$ and $\restrict{f}{\neg\guard}$ (for $f$ a PGF), respectively, and thus the mass of $\psi_1(\restrict{f}{\guard}) + \psi_2(\restrict{f}{\neg\guard})$ is at most one.
    Hence it is a PGF.
    \twk{todo: more detail}
\end{proof}

\begin{lemma}
    \label{thm:atomicOpsAdmissible}
    The following functions ($\bullet$ denotes the argument) are admissible:
    \begin{itemize}
        \item $g \bullet \colon \sopdom \to \sopdom$ with $g \in \pgfdom$ (multiplication by a constant PGF $g$),
    \end{itemize}
    and for a \underline{non-meta} indeterminate $X \in \vec{X}$,
    \begin{itemize}
        \item $\subsFPSVarFor{\bullet}{X}{g} \colon \sopdom \to \sopdom$ with $g \in \pgfdom$ (substitution by a constant PGF $g$).
    \end{itemize}
\end{lemma}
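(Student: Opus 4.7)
My plan is to verify each of the four clauses of Definition~\ref{def:admissibleSOPTransformer} for both operations; all four conditions should factor cleanly from routine algebraic properties of the underlying FPS operations, so the proof is essentially a bookkeeping exercise, with one genuine analytic check hiding in the PGF-preservation clauses.

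For multiplication $f \mapsto gf$ with $g \in \pgfdom$, I would argue as follows. Continuity reduces to the observation that both the Cauchy product and the pointwise order on $\sopdom$ are defined coefficient-by-coefficient, so for any chain $f_0 \sqsubseteq f_1 \sqsubseteq \ldots$ in $\sopdom$ one obtains $g \cdot \sup_i f_i = \sup_i (gf_i)$ by monotone convergence on each coefficient. Linearity is immediate from ring distributivity: $g(pf_1 + f_2) = p\,gf_1 + gf_2$. Homogeneity in $\vec{U}$ holds because $g \in \pgfdom \subseteq \fpsring{\vec{X}}$ involves no meta-indeterminates, and so $g \cdot (f\vec{U}^\tau) = (gf)\vec{U}^\tau$ by associativity and commutativity of FPS multiplication. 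PGF preservation is the standard fact that the product of two PGF is a PGF, reflecting that the convolution of two distributions on $\N^k$ is again a subprobability distribution.

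For substitution $f \mapsto \subsFPSVarFor{f}{X}{g}$ with $X$ a non-meta indeterminate and $g \in \pgfdom$, I would proceed analogously. The first task is to confirm that $\subsFPSVarFor{\bullet}{X}{g}$ is well-defined and total on $\sopdom$: writing $f = \sum_\tau f(\tau)\vec{U}^\tau$ with $f(\tau) \in \pgfdom$, every $f(\tau)$ is $X$-absolutely convergent (since its coefficients are non-negative and sum to at most $1$), so the coefficient-wise formula $\subsFPSVarFor{f}{X}{g} = \sum_\tau \subsFPSVarFor{f(\tau)}{X}{g}\,\vec{U}^\tau$ is legitimate. Linearity then follows from the fact that substitution is a ring homomorphism on this absolutely-convergent subring; continuity again falls out of the coefficient-wise definition; and homogeneity in $\vec{U}$ is immediate because $X$ is disjoint from $\vec{U}$, so substitution commutes with multiplication by $\vec{U}^\tau$. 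For PGF preservation, when $f \in \pgfdom$ I would invoke the classical fact (mirroring the semantics of $\mathtt{iid}$ in~\cref{sec:iid}) that substituting a PGF into a PGF yields a PGF, using non-negativity of all coefficients and Tonelli's theorem to conclude that the resulting mass is bounded by $\sum_\sigma f(\sigma) \leq 1$.

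The main obstacle I anticipate is the PGF-preservation clauses: for multiplication this amounts to justifying the Cauchy-product identity on probability masses, while for substitution it requires a careful swap of summations to show both non-negativity and $\sum \leq 1$ of the substituted series. Once these two convergence facts are pinned down, the remaining nine subconditions (continuity, linearity, homogeneity for each operation) follow mechanically by commuting the operation past suprema, scalar multiples, and $\vec{U}$-monomials on the level of coefficients.
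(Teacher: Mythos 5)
Your proposal is correct and follows essentially the same route as the paper's proof: a clause-by-clause verification of Definition~\ref{def:admissibleSOPTransformer} for each operation, with continuity reduced to coefficient-wise reasoning via the finiteness of each Cauchy-product sum, linearity and homogeneity from ring identities and the disjointness of $X$ from $\vec{U}$, and PGF preservation from closure of PGF under multiplication and under substitution (the latter justified via $X$-absolute convergence, exactly as in the paper's FPS background). No gaps.
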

\begin{proof}
    In each of the following cases we show (i) continuity, (ii) linearity, (iii) homogeneity w.r.t.\ to meta invariants, and (iv) PGF preservation.
    \begin{itemize}
        \item \emph{Multiplication by a constant PGF $g$.}\\
        Let $f = \fsum_{\tau \in \N^l} f(\tau) \vec{U}^\tau$, where $f(\tau) \in \pgfdom$, be an arbitrary SOP.
        Note that $gf \in \sopdom$ because
        \[
            gf \eeq \fsum_{\tau \in \N^l} g f(\tau) \vec{U}^\tau
        \]
        and $g f(\tau) \in \pgfdom$ for all $\tau \in \N^l$ since PGF are closed under multiplication.

        (i) For continuity assume that $f_0 \sqsubseteq f_1 \sqsubseteq \ldots$ is a chain in $\sopdom$.
        \begin{align*}
        & \sup_{i \geq 0} g f_i \\
        \eeq & \sup_{i \geq 0} \fsum_{\tau \in \N^l} g f_i(\tau) \vec{U}^\tau \\
        \eeq & \fsum_{\tau \in \N^l} \sup_{i \geq 0} g f_i(\tau) \vec{U}^\tau \tag{order on $\sopdom$ is pointwise}\\
        \eeq & \fsum_{\tau \in \N^l} g \sup_{i \geq 0} f_i(\tau) \vec{U}^\tau \tag{see below}\\
        \eeq & g \fsum_{\tau \in \N^l} \sup_{i \geq 0} f_i(\tau) \vec{U}^\tau \\
        \eeq & g \sup_{i \geq 0} \fsum_{\tau \in \N^l} f_i(\tau) \vec{U}^\tau \\
        \eeq & g \sup_{i \geq 0} f_i ~,
        \end{align*}
        where we have used the fact that multiplication by $g$ is a continuous function in $\pgfdom$.
        The latter is proved as follows:
        Let $g_1 \sqsubseteq g_2 \sqsubseteq \ldots$ be a chain in $\pgfdom$.
        Then
        \begin{align*}
        & \sup_{i \geq 0} g g_i \\
        \eeq & \sup_{i \geq 0} \fsum_{\sigma \in \N^k} \sum_{\sigma_1 + \sigma_2 = \sigma} g(\sigma_1)g_i(\sigma_2) \\
        \eeq &  \fsum_{\sigma \in \N^k} \sup_{i \geq 0} \sum_{\sigma_1 + \sigma_2 = \sigma} g(\sigma_1)g_i(\sigma_2) \tag{because $\sqsubseteq$ is the pointwise order} \\
        \eeq &  \fsum_{\sigma \in \N^k}  \sum_{\sigma_1 + \sigma_2 = \sigma} g(\sigma_1) \sup_{i \geq 0} g_i(\sigma_2) \tag{addition and multiplication in $\R$ are continuous} \\
        \eeq &   g \sup_{i \geq 0} g_i
        \end{align*}
        
        (ii) Linearity follows directly from distributivity, commutativity and associativity of the FPS ring:
        \[
            g(pf_1 + f_2)
            \eeq
            p(gf_1) + gf_2
            ~.
        \]
        
        (iii) Homogeneity w.r.t.\ to meta indeterminates is just associativity:
        \[
            g(f \vec{U}^\tau) =  (gf) \vec{U}^\tau
        \]
        
        (iv) PGF preservation holds because PGF are closed under multiplication.
        \medskip \item \emph{Substitution by a constant PGF $g$.} \\
        \twk{in the future we should write down these boring proofs...}
        (i), (ii): Continuity and linearity can be shown similar as above.
        
        (iii) We consider an SOP $f \in \fpsring{X, \vec{Y}, \vec{U}}$ where $\vec{Y}$ are the non-meta indeterminates other than $X$.
        Homogeneity w.r.t.\ meta-indeterminates holds essentially because for all $i \geq 0$ we have $(f\vec{U}^\tau)(i) = f(i) \vec{U}^\tau$ as $X$ is not in $\vec{U}$.
        Thus
        \[
            \subsFPSVarFor{(f\vec{U}^\tau)}{X}{g}
            \eeq
            \sum_{i \geq 0} (f\vec{U}^\tau)(i) g^i
            \eeq
            \sum_{i \geq 0} f(i) g^i \vec{U}^\tau
            \eeq
            \subsFPSVarFor{f}{X}{g} \vec{U}^\tau
            ~.
        \]
        
        (iv) PGF preservation holds since PGF are closed under substitution; indeed, the PGF resulting from substituting an indeterminate $X$ in some PGF $f$ for another PGF $g$ describes the distribution of the sum of $n$ many i.i.d.\ samples from $g$, where $n$ is distributed as $X$ in $f$, cf.\ \Cref{sec:explainTransformer}.
        %
        %
        %
    \end{itemize}
    
\end{proof}

\begin{lemma}
    \label{thm:lemInfLin}
    An admissible SOP-transformer $\psi$ satisfies the following \enquote{infinite} linearity property:
    For all $f \in \sopdom$,
    \[
        \psi(f)
        \eeq
        \fsum_{\sigma \in \N^k, \tau \in \N^l} f(\sigma,\tau) \psi (\vec{X}^\sigma) \vec{U}^\tau
        ~.
    \]
\end{lemma}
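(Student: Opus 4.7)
The plan is to approximate $f$ by finite truncations, iterate the (finite) linearity axiom on each truncation, and then pass to the limit via continuity. Define, for each $N \geq 0$,
\[
    f_N \ccoloneqq \fsum_{|\sigma|+|\tau|\leq N} f(\sigma,\tau)\, \vec{X}^\sigma \vec{U}^\tau ~.
\]
Since $f \in \sopdom$, every row $f(\tau) \in \pgfdom$ has non-negative coefficients summing to at most $1$; in particular $f(\sigma,\tau) \in [0,1]$ for every $(\sigma,\tau)$, and the row $f_N(\tau)$ is a finite truncation of the sub-PGF $f(\tau)$, hence itself a sub-PGF. Therefore $f_N \in \sopdom$, the sequence $f_0 \sqsubseteq f_1 \sqsubseteq \ldots$ is an $\omega$-chain in $\sopdom$, and its coefficient-wise supremum is precisely $f$.

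I would then show by induction on the (finite) number of nonzero monomials of $f_N$ that
\[
    \psi(f_N) \eeq \fsum_{|\sigma|+|\tau|\leq N} f(\sigma,\tau)\, \psi(\vec{X}^\sigma \vec{U}^\tau)~.
\]
For the inductive step, pick any $(\sigma_0,\tau_0)$ contributing to $f_N$ and split
$
f_N \eeq f(\sigma_0,\tau_0)\, \vec{X}^{\sigma_0}\vec{U}^{\tau_0} + h,
$
where $h$ collects the remaining terms. The point-mass monomial $\vec{X}^{\sigma_0}\vec{U}^{\tau_0}$ lies in $\sopdom$ (its $\tau_0$-row is the Dirac PGF $\vec{X}^{\sigma_0}$ and all other rows are $0$), and $h \in \sopdom$ by the very same sub-PGF argument used for $f_N$. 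Combined with $f(\sigma_0,\tau_0) \in [0,1]$, the linearity clause of \Cref{def:admissibleSOPTransformer} applies and peels off one term; iterating on $h$ closes the induction. Homogeneity of $\psi$ w.r.t.\ the meta indeterminates then rewrites each $\psi(\vec{X}^\sigma \vec{U}^\tau)$ as $\psi(\vec{X}^\sigma)\,\vec{U}^\tau$.

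Finally, continuity of $\psi$ on the $\omega$-cpo $\sopdom$ yields
\[
    \psi(f) \eeq \psi(\sup\nolimits_N f_N) \eeq \sup\nolimits_N \psi(f_N) \eeq \fsum_{\sigma,\tau} f(\sigma,\tau)\,\psi(\vec{X}^\sigma)\,\vec{U}^\tau ~,
\]
the last equation holding because the order on $\sopdom$ is coefficient-wise, so the supremum of the increasing family of partial sums is exactly the displayed formal power series. The only real obstacle is ensuring that every intermediate expression actually lies in $\sopdom$ so that the linearity and continuity axioms are genuinely applicable; this is resolved uniformly by the observation that a non-negatively-weighted sub-selection of an SOP is again an SOP, since each of its rows remains a sub-PGF of the corresponding row of $f$.
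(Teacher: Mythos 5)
Your proof is correct and follows essentially the same route as the paper's: approximate $f$ by an ascending chain of finite partial sums, apply the finite linearity and meta-indeterminate homogeneity axioms to each, and pass to the limit by continuity (the paper enumerates $\N^{k+l}$ arbitrarily rather than truncating by degree, which is immaterial). Your explicit verification that each coefficient lies in $[0,1]$ and that every intermediate sub-selection remains in $\sopdom$ — so that the linearity clause of \Cref{def:admissibleSOPTransformer} is genuinely applicable — is a welcome bit of care that the paper's "$n{-}1$ times finite linearity" step leaves implicit.
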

\begin{proof}
    The proof uses continuity, \enquote{finite} linearity, and homogeneity w.r.t.\ $\vec{U}$.
    
    Let $f \in \sopdom$.
    Further, let $\pi_1 \colon \N \to \N^k$ and $\pi_2 \colon \N \to \N^l$ be such that $\pi \colon \N \to \N^{k+l}$, $\pi(i) = (\pi_1(i), \pi_2(i))$ is an arbitrary enumeration of $\N^{k+l}$.
    Then $\sum_{i=0}^n f(\pi(i)) \vec{X}^{\pi_1(i)}\vec{U}^{\pi_2(i)}$ is an ascending chain in $n$ with supremum $f$.
    Consequently,
    \begin{align*}
        & \psi(f) \\
        \eeq & \sup_{n \geq 0}\, \psi\left(\sum_{i=0}^n f(\pi(i)) \vec{X}^{\pi_1(i)}\vec{U}^{\pi_2(i)}\right) \tag{continuity of $\psi$} \\
        \eeq & \sup_{n \geq 0}\, \sum_{i=0}^n f(\pi(i)) \psi(\vec{X}^{\pi_1(i)}\vec{U}^{\pi_2(i)}) \tag{$n{-}1$ times \enquote{finite} linearity of $\psi$ }\\
        \eeq & \sup_{n \geq 0}\, \sum_{i=0}^n f(\pi(i)) \psi(\vec{X}^{\pi_1(i)} ) \vec{U}^{\pi_2(i)} \tag{homogeinity of $f$ w.r.t.\ meta-indeterminates $\vec{U}$}\\
        \eeq & \fsum_{\sigma \in \N^k, \tau \in \N^l} f(\sigma,\tau) \psi (\vec{X}^\sigma) \vec{U}^\tau ~.
    \end{align*}
    as claimed.
\end{proof}

\begin{lemma}
    \label{thm:lemPAdmissible}
    Let $P$ be an arbitrary $\pplname$-program over variables $\progvar{x}_1,\ldots,\progvar{x}_k$.
    The function $\sem{P}$ defined according to \Cref{fig:synsem} is an admissible SOP-transformer.
\end{lemma}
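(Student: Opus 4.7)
The proof is by structural induction on $P$, verifying all four admissibility properties (continuity, linearity, $\vec{U}$-homogeneity, and PGF preservation) for each syntactic construct.

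\textbf{Atomic statements.} The semantics in Figure~\ref{fig:synsem} of $\asgn{\progvar{x}}{n}$, $\decr{\progvar{x}}$, and $\incrasgn{\progvar{x}}{\iid{D}{\progvar{y}}}$ is in each case a finite composition of elementary FPS operations: substitution of an indeterminate by $0$, by $1$, or by a PGF expression (e.g.\ $Y\subsFPSVarFor{\sem{D}}{T}{X}$); multiplication by a constant PGF (such as $X^n$); the left-shift $\bullet X^{-1}$; and FPS addition. Lemma~\ref{thm:atomicOpsAdmissible} already handles multiplication by and substitution of a constant PGF; I would extend it to cover the left-shift (where PGF preservation holds because $\subsFPSVarFor{g}{X}{0}$ has been subtracted beforehand, so no probability mass is lost) and to record that the sum of two admissible transformers is admissible whenever the combined image stays in $\sopdom$. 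Gluing these building blocks via Lemma~\ref{thm:compAdmissible} gives admissibility of every atomic statement.

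\textbf{Sequential composition and conditionals.} For $\compose{P_1}{P_2}$, admissibility of $\sem{P_2} \circ \sem{P_1}$ follows from the induction hypothesis together with Lemma~\ref{thm:compAdmissible}. For $\ITE{\progvar{x}<n}{P_1}{P_2}$, the semantics is precisely $(\sem{P_1} +_{\progvar{x}<n} \sem{P_2})$ in the notation of Lemma~\ref{thm:sumadmissible}, so admissibility is immediate from that lemma combined with the induction hypothesis.

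\textbf{The while-loop case}, which is the main obstacle, is handled via Kleene iteration (Theorem~\ref{thm:kleenePark}): $\sem{L} = \lfp \charfun{\guard}{P} = \sup_{i \geq 0} \charfun{\guard}{P}^i(\bot)$, where $\bot$ is the constant-zero transformer (trivially admissible). Unfolding the functional definition yields
\[
\charfun{\guard}{P}(\psi) \eeq (\psi \circ \sem{P}) \,+_{\guard}\, \mathrm{id}~,
\]
so whenever $\psi$ is admissible, the structural induction hypothesis for $P$, Lemma~\ref{thm:compAdmissible}, trivial admissibility of $\mathrm{id}$, and Lemma~\ref{thm:sumadmissible} jointly guarantee that $\charfun{\guard}{P}(\psi)$ is admissible. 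A straightforward induction on $i$ then shows that every iterate $\charfun{\guard}{P}^i(\bot)$ is admissible, and Lemma~\ref{thm:limitAdmissible} finally lifts admissibility to the supremum of this ascending chain. The subtlest point is PGF preservation through the Kleene iteration: each iterate intuitively captures termination within at most $i$ loop unfoldings and therefore never produces more probability mass than is supplied by the input, so every iterate (and hence the limit) lands in $\pgfdom$ for PGF inputs.
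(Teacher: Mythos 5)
Your proof is correct and follows essentially the same route as the paper's: the loop-free constructs are discharged by combining Lemma~\ref{thm:atomicOpsAdmissible}, Lemma~\ref{thm:compAdmissible}, and Lemma~\ref{thm:sumadmissible}, and the while-loop is handled by expressing $\sem{L}$ as the Kleene supremum $\sup_{i\geq 0}\charfun{\guard}{P}^i(\bot)$ and invoking Lemma~\ref{thm:limitAdmissible}. If anything, you are slightly more explicit than the paper in verifying that $\charfun{\guard}{P}$ preserves admissibility so that every Kleene iterate is admissible (a hypothesis Lemma~\ref{thm:limitAdmissible} genuinely needs), and in noting that the atomic-operations lemma must also cover the left-shift and finite sums — both reasonable refinements rather than deviations.
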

\begin{proof}
    If $P$ is loop-free, then the claim follows immediately from \Cref{thm:atomicOpsAdmissible}, \Cref{thm:compAdmissible} and \Cref{thm:sumadmissible} because $\sem{P}$ is just the composition of finitely many atomic admissible SOP-transformations (such as multiplication, substitution, etc.).
    
    Now suppose that $P = \WHILEDO{\guard}{P'}$ and recall that $\sem{P}$ is per definition equal to the lfp of the characteristic functional
    \[
        \charfun{\guard}{P'} \colon (\sopdom \to \sopdom) \to (\sopdom \to \sopdom),~
        \psi \mmapsto \lambda f .~ \restrict{f}{\neg\guard} \pplus \psi(\sem{P'}(\restrict{f}{\guard})) ~.
    \]
    A straightforward generalization of the proof in \Cref{proof:fpInduction} reveals that $\charfun{\guard}{P'}$ is a continuous function on the $\omega$-cpo $(\sopdom \to \sopdom)$.
    Continuity of $\charfun{\guard}{P'}$ implies the existence of $\lfp \charfun{\guard}{P'}$ by Kleene's fixed point theorem.
    In particular,
    \[
        \lfp \charfun{\guard}{P'}
        \eeq
        \sup_{i \geq 0} \charfun{\guard}{P'}^i(\bot) ~,
    \]
    where $\bot$ is the constant function that maps everything on the zero-SOP.
    By \Cref{thm:limitAdmissible}, $\lfp \charfun{\guard}{P'} = \sem{P}$ is admissible.
    
\end{proof}

\Cref{thm:lemInfLin} and \Cref{thm:lemPAdmissible} together imply \Cref{thm:soptrans}.


\subsection{Proof of \Cref{thm:fpInduction}}
\label{proof:fpInduction}

\restateFpInduction*

This result is an instance of Park induction (see \Cref{thm:kleenePark} in \Cref{app:fixpoint}):
We only have to show that the function
\[
    \charfun{\guard}{P'} \colon (\pgfdom \to \pgfdom) \to (\pgfdom \to \pgfdom),~ \psi \mmapsto \lambda g .~ \restrict{g}{\neg\guard} \pplus \psi(\sem{P'}(\restrict{g}{\guard}))
\]
is continuous on the $\omega$-cpo $(\pgfdom \to \pgfdom)$.
To show this, let $\psi_0 \sqsubseteq \psi_1 \sqsubseteq \ldots $ be a chain in $(\pgfdom \to \pgfdom)$.
As a general observation, note that, similar to \eqref{eq:obsBracketsSup}, we have for all $g \in \pgfdom$ that
\begin{align}
    \label{eq:swapLambdaSup}
    \sup_{i \geq 0} \psi_i
    \eeq
    \sup_{i \geq 0} [\lambda g .~ \psi_i(g)]
    \eeq
    \lambda g .~ [\sup_{i \geq 0} \psi_i(g) ]
\end{align}
by definition of the order on $(\pgfdom \to \pgfdom)$.
The following argument completes the proof:
\begin{align*}
    & \sup_{i \geq 0} \charfun{\guard}{P'}(\psi_i) \\
    \eeq & \sup_{i \geq 0} [\lambda g .~ \restrict{g}{\neg\guard} \pplus \psi_i(\sem{P'}(\restrict{g}{\guard}))] \tag{Def. of $\charfun{\guard}{P'}$} \\
    \eeq &  \lambda g .~ \sup_{i \geq 0} \left[\restrict{g}{\neg\guard} \pplus \psi_i(\sem{P'}(\restrict{g}{\guard}))\right] \tag{By \eqref{eq:swapLambdaSup}} \\
    \eeq &  \lambda g .~ \restrict{g}{\neg\guard} \pplus \sup_{i \geq 0}  \psi_i(\sem{P'}(\restrict{g}{\guard})) \tag{Addition is continuous, see \Cref{thm:atomicOpsAdmissible}} \\
    \eeq &  \lambda g .~ \restrict{g}{\neg\guard} \pplus (\sup_{i \geq 0}  \psi_i)(\sem{P'}(\restrict{g}{\guard})) \tag{By \eqref{eq:obsBracketsSup}} \\
    \eeq &  \charfun{\guard}{P'}(\sup_{i \geq 0} \psi_i) ~. \tag{Def. of $\charfun{\guard}{P'}$} \\
\end{align*}


\subsection{Proof of \Cref{thm:semPprops}}
\label{proof:semPProps}

\restateSemPProps*

\noindent We prove \Cref{thm:semPprops} as a simple consequence of the more general \Cref{thm:soptrans} (see \Cref{proof:soptrans}):

\begin{itemize}
    \item First, well-definedness of $\sem{P} \colon \pgfdom \to \pgfdom$ is clear because $\sem{P}$ is even well-defined as a more general SOP-transformer.
    \item Similarly, $\sem{P}$ is admissible on $\sopdom$ (\Cref{thm:lemPAdmissible}) and thus in particular continuous on $\pgfdom \subseteq \sopdom$.
    \item \enquote{Infinite} linearty of $\sem{P}$ on $\pgfdom$ follows directly from \Cref{thm:lemInfLin} (using again the fact that $\sem{P}$ is admissible by \Cref{thm:lemPAdmissible}).
\end{itemize}


\subsection{Proof of \Cref{thm:closedform}}

\Cref{thm:closedform} applies even to SOP:

\begin{theorem}
    Let $P$ be a \emph{loop-free} $\pplname$ program, and let $g = h/f \in \sopdom$ be in rational closed form.
    Then we can compute a rational closed form of $\sem{P}(g)  \in \sopdom$ by applying the transformations in \Cref{fig:synsem}.
\end{theorem}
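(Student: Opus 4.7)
The plan is to proceed by structural induction on the loop-free program $P$, exactly mirroring the proof of \Cref{thm:closedform} for the PGF case. The key observation is that the SOP setting introduces only additional (meta-)indeterminates $\vec{U}$; since the transformations in \Cref{fig:synsem} never touch these meta-indeterminates, the whole argument reduces to showing that each atomic FPS operation appearing in the rules preserves rational closed forms in the enlarged polynomial ring $\R[\vec{U},\vec{X}]$. In other words, one first establishes a small catalogue of ``closed-form lemmas'' for the ring of rational FPS and then composes them along the syntax of $P$.

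The catalogue to be verified consists of the following closure properties of rational FPS, each of which is elementary once stated: (i) multiplication by a monomial $\vec{X}^\sigma$; (ii) substitution $\subsFPSVarFor{\bullet}{X}{0}$ and $\subsFPSVarFor{\bullet}{X}{1}$ in a single indeterminate (using $\subsFPSVarFor{(h/f)}{X}{c} = \subsFPSVarFor{h}{X}{c}/\subsFPSVarFor{f}{X}{c}$, provided the denominator substitution is well-defined, which is guaranteed for PGF inputs by \Cref{thm:subsByOne} and its SOP analogue); (iii) formal partial differentiation $\partial_X$, via the quotient rule $\partial_X(h/f) = (\partial_X h \cdot f - h \cdot \partial_X f)/f^2$; (iv) substitution of an indeterminate by a rational expression, which preserves rationality because rational FPS form a subring closed under well-defined substitution; and (v) sum and product of rational FPS. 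The ``leftshift'' $gX^{-1}$ used for $\decr{\progvar{x}}$ is then handled by writing $g - \subsFPSVarFor{g}{X}{0}$ as a polynomial multiple of $X$ and cancelling with the polynomial $X$ in the denominator, which again is a rational operation.

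With this toolkit in hand, the induction proceeds case by case through \Cref{fig:synsem}. Assignments and decrements are immediate from (i)--(iii). Sequential composition $\compose{P_1}{P_2}$ follows directly from the induction hypothesis applied twice. The $\mathtt{iid}$-statement reduces to an instance of (iv) composed with polynomial multiplication. The only case requiring a bit of care is the conditional: the ``filtered'' series $g_{\progvar{x}<n}$ is defined as a finite sum $\sum_{i=0}^{n-1} \tfrac{1}{i!} \subsFPSVarFor{(\partial_X^i g)}{X}{0} X^i$, which by (iii) and (ii) is a polynomial in $X$ whose coefficients are rational FPS in the remaining indeterminates; hence $g_{\progvar{x}<n}$ is rational, and so is $g - g_{\progvar{x}<n}$, after which the induction hypothesis applies to both branches and their sum.

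The main technical obstacle I expect is bookkeeping rather than mathematical depth: one must verify that every intermediate substitution by $0$ or $1$ is well-defined on the relevant SOP (this is automatic for non-meta indeterminates by \Cref{thm:subsByOne}, but must be stated carefully so that no rule ever attempts an ill-defined substitution in a meta-indeterminate), and that rational closed forms are actually \emph{computable} from the input closed form by finitely many symbolic manipulations in a CAS. Both points are handled by invoking the closure properties (i)--(v) as effective operations on rational expressions, so the induction yields not merely existence but an explicit algorithm that follows \Cref{fig:synsem} verbatim.
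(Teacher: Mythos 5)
Your proposal is correct and follows essentially the same route as the paper: structural induction on $P$, reducing each rule of \Cref{fig:synsem} to closure of rational closed forms under sum, product, substitution, and formal differentiation, with the same key caveat that substitutions by $0$ or $1$ only ever occur in non-meta indeterminates (where they are well-defined by $X$-absolute convergence of SOP). Your write-up is merely more explicit about the catalogue of closure lemmas and the cancellation underlying the left-shift, both of which the paper delegates to its FPS background section.
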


This statement can be shown by induction over the structure of loop-free $\pplname$-programs.
The base cases ($\asgn{\progvar{x}}{0}$, $\decr{\progvar{x}}$, and $\incrasgn{\progvar{x}}{\iid{D}{\progvar{y}}}$) follow because they only need multiplication, addition, and substitution which preserve rational closed forms, see \Cref{sec:fpsbackground}.
Note that in particular, only non-meta indeterminates $X$ and $Y$ are substituted which is well-defined because an SOP is $Z$-absolutely convergent for all its non-meta indeterminates $Z$.
The other cases are also straightforward:
\begin{itemize}
    \item \emph{Sequential composition.} $\sem{\compose{P_1}{P_2}}(g) = \sem{P_2}(\sem{P_1}(g))$.
    By the IH we can assume that we have a rational closed form $g' = \sem{P_1}(g)$, and that $\sem{P_2}$ further transform this in a rational closed form $g'' = \sem{P_2}(g')$.
    \item \emph{Conditional choice.}
    A closed form for $g_{\progvar{x} < n}$ can be computed because differentiation preserves rational closed forms as well, see \Cref{sec:fpsbackground}.
    By the IH, both $\sem{P_1}(g_{\progvar{x} < n})$ as well as $\sem{P_2}(g-g_{\progvar{x} < n})$ have a rational closed form which we then simply add up.
\end{itemize}


\subsection{Proof of \Cref{thm:equivCharac}}
\label{proof:equivCharac}

\restateEquivCharac*

For the proof observe that 
\[
    g_{\vec{X}}
    \eeq
    \fsum_{\sigma \in \N^k} \vec{X}^\sigma \vec{U}^\sigma
    ~.
\]
The proof is as follows:
\begin{align*}
    & \sem{P_1}(g_{\vec{X}}) = \sem{P_2}(g_{\vec{X}}) \\
    \iff\quad & \sem{P_1}(g_{\vec{X}}) - \sem{P_2}(g_{\vec{X}}) = 0 \\
    \iff\quad & \sem{P_1}(\fsum_{\sigma \in \N^k} \vec{X}^\sigma \vec{U}^\sigma) - \sem{P_2}(\fsum_{\sigma \in \N^k} \vec{X}^\sigma \vec{U}^\sigma) = 0 \\
    \iff\quad & \fsum_{\sigma \in \N^k} \sem{P_1}(\vec{X}^\sigma) \vec{U}^\sigma - \fsum_{\sigma \in \N^k} \sem{P_2}(\vec{X}^\sigma) \vec{U}^\sigma = 0 \tag{By \Cref{thm:soptrans}} \\
    \iff\quad & \fsum_{\sigma \in \N^k} ( \sem{P_1}(\vec{X}^\sigma) -  \sem{P_2}(\vec{X}^\sigma)) \vec{U}^\sigma = 0 \tag{rewriting} \\
    \iff\quad & \forall\sigma \in \N^k \colon \sem{P_1}(\vec{X}^\sigma) -  \sem{P_2}(\vec{X}^\sigma) = 0 \tag{By definition of the 0-FPS in $\fpsring{\vec{X},\vec{U}}$}\\
    \iff\quad & \forall\sigma \in \N^k \colon \sem{P_1}(\vec{X}^\sigma) = \sem{P_2}(\vec{X}^\sigma) \\
    \iff\quad & \sem{P_1} = \sem{P_2} ~.
\end{align*}



\subsection{Proof of \Cref{thm:loop-free-decidability}}
\label{proof:loop-free-decidability}

\restateLoopFreeDecidability*

\noindent By \Cref{thm:equivCharac}, it suffices to decide whether $\sem{P_1}(g_{\vec{X}}) = \sem{P_2}(g_{\vec{X}})$.
Since $g_{\vec{X}}$ is an SOP expressible in the \emph{rational closed form}
\[
    g_{\vec{X}} \eeq \frac{1}{1 - X_1 U_1} \frac{1}{1 - X_2 U_2} \cdots \frac{1}{1 - X_k U_k} \iin \fpsring{\vec{U},\vec{X}} ~,
\]
it follows that both $\sem{P_1}(g_{\vec{X}})$ and $\sem{P_2}(g_{\vec{X}})$ have a rational closed form as well which is moreover effectively constructable by applying the rational-function preserving operations in \Cref{fig:synsem} (recall that $P_1$ and $P_2$ are both loop-free and sample only from distributions with rational PGF by assumption).

In $\fpsring{\vec{U},\vec{X}}$, it is easily decidable whether two FPS presented as rational closed forms $f_1 / h_1$ and $f_2 / h_2$ are equal:
\begin{align*}
    &\frac{f_1}{h_1} \eeq \frac{f_2}{h_2} \\
    \iff\quad & f_1 h_2 \eeq f_2 h_1 \tag{by FPS arithmetic} ~.
\end{align*}
Note that both sides on the latter equation are finite-support FPS -- \emph{polynomials} -- with rational coefficients.
Thus, we can simply compute these two polynomials and check whether their (finitely many) non-zero coefficients coincide.
If yes, then $P_1$ and $P_2$ are equivalent (i.e., $\sem{P_1} = \sem{P_2}$), and otherwise they are not equivalent.

\begin{remark}
    In the case $\sem{P_1} \neq \sem{P_2}$ we can also provide \emph{counterexamples} for inputs where the programs differ; to do so, we may simply expand one (or multiple) of the non-zero coefficients of $\sem{P_1} - \sem{P_2}$ in $\vec{U}$.
    For example, if $\vec{U} = (U_1,U_2)$ and we find that 
    \[
        \sem{P_1} - \sem{P_2} 
        \eeq 
        g U_1 U_2 + r
    \]
    where $g \neq 0$ is an FPS in $\fpsring{\vec{X}}$ and $r \in \fpsring{\vec{U},\vec{X}}$ not containing a term of the form $hU_1U_2$ ($h \in \fpsring{\vec{X}}$), then $P_1$ and $P_2$ differ on input $X_1 X_2$ (i.e., both $\progvar{x}_1$ and $\progvar{x}_2$ are initially 1 with probability 1), and the difference $\sem{P_1}(X_1X_2) - \sem{P_2}(X_1X_2)$ is given by $g$.
\end{remark}

\fi
\end{document}